\documentclass[11pt,a4paper]{article}
\usepackage[margin=1.05in]{geometry}
\usepackage[T1]{fontenc}
\usepackage{amsmath,amssymb,amsthm}
\usepackage{booktabs}
\usepackage{microtype}
\usepackage{tikz}
\usepackage[hidelinks]{hyperref}
\hypersetup{pdftitle={Certified local rank and uniqueness barriers for a 48-term matrix-multiplication decomposition},pdfauthor={Abhinav Agarwal}}

\newcommand{\Kcert}{8}     % pointwise border radius
\newcommand{\KcertS}{11}   % pointwise strong radius
\newcommand{\KcertLP}{9}   % ambient generic strong/rank route
\newcommand{\KcertSG}{11}  % generic strong radius on the parameter curve
\newcommand{\KcertR}{12}   % pointwise rank radius
\newcommand{\KcertRG}{11}  % generic rank radius in the ambient component
\newcommand{\KcertRC}{12}  % generic rank radius on the actual curve only
\newcommand{\KcertSNext}{\number\numexpr\KcertS+1\relax}
\newcommand{\Knext}{\number\numexpr\KcertR+1\relax}

\newtheorem{theorem}{Theorem}[section]
\newtheorem{proposition}[theorem]{Proposition}
\newtheorem{lemma}[theorem]{Lemma}
\newtheorem{corollary}[theorem]{Corollary}
\theoremstyle{definition}
\newtheorem{definition}[theorem]{Definition}
\newtheorem{remark}[theorem]{Remark}

\DeclareMathOperator{\rank}{rank}
\DeclareMathOperator{\adj}{adj}
\DeclareMathOperator{\spn}{span}

\newcommand{\brank}{\underline{\mathrm{R}}}
\newcommand{\M}[1]{\langle #1\rangle}
\newcommand{\bbK}{\mathbb{K}}
\newcommand{\bbQ}{\mathbb{Q}}
\newcommand{\bbZ}{\mathbb{Z}}
\newcommand{\bbC}{\mathbb{C}}

\newcommand{\rrho}{\rho}
\newcommand{\brho}{\bar\rho}
\newcommand{\srho}{\rho_{\mathrm{strong}}}

\title{Certified local rank and uniqueness barriers\\
for a 48-term matrix-multiplication decomposition}
\author{Abhinav Agarwal\thanks{Independent researcher. Email: \href{mailto:abhinavagarwal1996@gmail.com}{\texttt{abhinavagarwal1996@gmail.com}}.\\Copyright \copyright\ 2026 Abhinav Agarwal. This article is licensed under \href{https://creativecommons.org/licenses/by/4.0/}{CC BY 4.0}.}}
\date{September 2026}

\begin{document}
\maketitle
\begin{abstract}
We study replacements in fixed bilinear tensor decompositions, counting
changes to complete rank-one summands, including output factors. The shortening
frontier records the maximum rank defect of a fixed-size subset and determines
the minimum length attainable within a change budget. For the rational
$48$-term Li--Wang--Hu decomposition $D(2)$ of $4\times4$ matrix multiplication
over $\bbC$, we prove rank radius at least $12$, strong radius exactly $11$,
and border radius at least $8$. Every shorter complex decomposition therefore
changes at least thirteen original summands. An exact rational twelve-term
replacement attains the equal-length barrier. The proofs combine exhaustive
support reductions with saturated projected kernels and zero-corner completion
arguments controlling arbitrary minimal competitors. A reduced-incidence
argument transfers kernel certificates to tensor-space neighborhoods. A
Laurent normal form gives strong radius exactly $11$ for the sixteen-term core
at every nonzero complex parameter. On a nonempty Zariski-open subset of the
actual parameter curve, the rank radius is at least $12$, the strong radius
exactly $11$, and the border radius at least $8$. We also prove incomparability
of the full Kothari--Moitra--Wein sufficient criterion and the
Sylvester-equipped kernel criterion. These results describe local decomposition
structure rather than a new rank bound for full matrix multiplication.
\end{abstract}

\noindent\textbf{Keywords:} tensor decompositions; bilinear algorithms; exact linear algebra;
finite support certificates; local rigidity; symbolic computation.\\
\textbf{2020 MSC:} 15A69; 68W30; 14N07; 68Q17.

\section{Introduction and results}\label{sec:introduction}

How many complete rank-one summands of a bilinear tensor decomposition must
change before it can be shortened?  We count the whole tensor summand,
including its output factor.  For a length-$r$ decomposition $D$, let
$\Delta_D(k)$ be the largest rank defect among its $k$-term subsets.  Then
$r-\Delta_D(k)$ is the shortest length attainable while retaining at least
$r-k$ original tensor summands.  This is an endpoint-overlap statement: it
does not count graph edges, nor does it claim a lower bound on the number of
input products if output recombination is redesigned.

For the specified 48-term decomposition $D(2)$ of $\M{4,4,4}$ over $\bbC$, our results are
\[
 \rrho(D(2))\ge\KcertR,\qquad \srho(D(2))=\KcertS,\qquad
 \brho(D(2))\ge\Kcert.
\]
Thus every shorter complex decomposition discards at least $\Knext$ complete
summands, while a distinct 48-term decomposition can be obtained by changing
exactly $\KcertSNext$ of them.  The replacement retains 36 complete summands;
it retains 40 first-two-mode input directions, so its input-product distance
is eight.  No optimum for that different metric is asserted.

The reusable ingredients are the endpoint frontier identity, zero-corner
completion for arbitrary minimal competitors, and stacked saturated projected
kernels for global paired-factor recovery. The later sections apply them to
$D(t)$, combining them with Lovitz--Petrov splitting and exact positive
minors. The supplement contains applications, auxiliary geometry, complete
replay detail, and alternative rational-function transfers.

A second matrix-multiplication application is the $49$-term Kronecker square
$E^{\otimes2}$ of Strassen's algorithm. The supplement proves
$\brho(E^{\otimes2})\ge8$ over $\bbC$: every $48$-term decomposition
therefore discards at least nine summands of $E^{\otimes2}$. The exceptional scalar
extensions are settled by an exact Borel-fixed border-apolarity certificate
for $\brank_{\bbC}(\M{2,2,2}\oplus\M{1,1,1})=8$.

In the flip/reduction graph of Kauers and Moosbauer \cite{KM}, flips preserve
length and reductions shorten a decomposition.  Pairwise separation excludes
these moves.  Adaptive searches may also use the length-increasing
plus-transitions of Arai, Ichikawa and Hukushima \cite{AIH}; pairwise separation
does not preclude them.  Our bounds concern changed summands at an endpoint,
not graph-path reachability; in particular they do not resolve
Kauers--Moosbauer Question~5.

\begin{table}[htbp]
\centering
\begin{tabular}{@{}lll@{}}
\toprule
Scope & Rank radius & Strong radius / border radius\\
\midrule
$D(2)$ & $\ge\KcertR$ & $=\KcertS$ / $\ge\Kcert$\\
actual parameter curve, generically & $\ge\KcertRC$ & $=\KcertSG$ / $\ge\Kcert$\\
each component through $D(2)$, generically & $\ge\KcertRG$ & strong $\ge\KcertLP$, border $\ge\Kcert$\\
sixteen-term core, every $t\ne0$ & --- & strong $=\KcertS$\\
\bottomrule
\end{tabular}
\caption{All radii are over $\bbC$. The ambient row concerns each irreducible component through $D(2)$.
Uniform core rigidity is not a uniform rank-twelve theorem for the full family.}
\label{tab:scope}
\end{table}

Pointwise statements concern $D(2)$; curve-generic statements hold on a
nonempty Zariski-open subset of the actual parameter curve; ambient-generic
statements hold on a dense open subset of each component through $D(2)$; and
uniform core statements concern only the sixteen-term set $X$. Table~\ref{tab:scope}
records the strongest final bound at each scope.

\begin{table}[htbp]
\centering
\begin{tabular}{@{}lll@{}}
\toprule
Certificate & Residual supports & Role\\
\midrule
factor/commutator/Young & finite census & border eight and base rank\\
zero-corner completion & 16 ten-face cases & minimal-competitor purity\\
saturated projected kernels & 16 ten-face cases & global paired-factor recovery\\
splitting, minors, completions & eleven/twelve faces & strong through $11$; rank through $12$\\
\bottomrule
\end{tabular}
\caption{Proof dependency map for the residual support classes.}
\label{tab:proof-map}
\end{table}

\section{The shortening frontier}\label{sec:local}

Let $A,B,C$ be finite-dimensional spaces over a field $\bbK$,
$T\in A\otimes B\otimes C$, and $D=(d_1,\ldots,d_r)$ an ordered
decomposition with $d_i=a_i\otimes b_i\otimes c_i\ne0$. Rank is over
$\bbK$; border rank is over its algebraic closure. For $S\subseteq[r]$, write
$T_S=\sum_{i\in S}d_i$.  For displayed decompositions of $T$, $d(D,D')$ is
the number of summands of $D$ not reused by $D'$ as complete tensors:
$r$ minus the size of a maximum matching of their summand multisets.
Thus this directed distance includes changes to output factors. Gauge means
factor rescaling $(a,b,c)\mapsto(\lambda a,\mu b,(\lambda\mu)^{-1}c)$.
A decomposition is pairwise separated if no two factor vectors are
proportional in any mode.

\begin{definition}\label{def:rho}
For $0\le k\le r$, set $\delta_D(S)=|S|-\rank(T_S)$ and
$\Delta_D(k)=\max_{|S|=k}\delta_D(S)$.  The rank radius $\rrho(D)$ is the
largest $k\in\{0,\ldots,r\}$ for which every subset of size at most $k$ has rank its
cardinality.  The border radius $\brho(D)$ is defined with border rank in
place of rank.  The strong radius $\srho(D)$ is the largest $k$ for which
every such subset has its displayed decomposition as its only decomposition
of length at most its cardinality over $\bbK$, up to permutation and gauge.
We call $D$ strongly $q$-locally rigid if and only if $\srho(D)\ge q$.
The border defect $\bar\delta_D$ and frontier $\bar\Delta_D$ replace rank by
border rank. Put $\tau_q(D)=\min\{k\in\{0,\ldots,r\}:\Delta_D(k)\ge q\}$, with
$\min\varnothing=\infty$. In particular $\srho(D)\le\rrho(D)$ and
$\brho(D)\le\rrho(D)$.
\end{definition}

\begin{lemma}\label{lem:defect}
\emph{(a)} If $S\subseteq S'$, then $\delta_D(S)\le\delta_D(S')$.
\emph{(b)} If $S\cap S'=\emptyset$,
then $\delta_D(S\cup S')\ge\delta_D(S)+\delta_D(S')$.
\end{lemma}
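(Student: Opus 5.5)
Both parts follow from the subadditivity of tensor rank, $\rank(X+Y)\le\rank(X)+\rank(Y)$. This holds because concatenating a shortest decomposition of $X$ with one of $Y$ gives a decomposition of $X+Y$. We also use the trivial bound $\rank(T_U)\le|U|$, witnessed by the displayed summands $(d_i)_{i\in U}$.

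\emph{Part (a).} Write $S'=S\sqcup U$ with $U=S'\setminus S$, so that $T_{S'}=T_S+T_U$. Subadditivity gives
\[
\rank(T_{S'})\le\rank(T_S)+\rank(T_U)\le\rank(T_S)+|U|.
\]
Rearranging,
\[
\delta_D(S')=|S|+|U|-\rank(T_{S'})\ge|S|-\rank(T_S)=\delta_D(S).
\]

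\emph{Part (b).} Since $S\cap S'=\emptyset$, we have $T_{S\cup S'}=T_S+T_{S'}$ and $|S\cup S'|=|S|+|S'|$. Subadditivity gives $\rank(T_{S\cup S'})\le\rank(T_S)+\rank(T_{S'})$. Subtracting this from $|S|+|S'|$ yields
\[
\delta_D(S\cup S')\ge\delta_D(S)+\delta_D(S').
\]

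\emph{Alternative route.} Part (a) also follows from (b). Apply (b) to the disjoint pair $S$ and $S'\setminus S$, and use $\delta_D(S'\setminus S)\ge0$.

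\emph{Scope.} Rank here is rank over $\bbK$, and the concatenation argument stays over $\bbK$, so nothing more is needed. The same argument works verbatim for the border defect $\bar\delta_D$, because border rank is also subadditive: one adds the approximating curves. That observation is not required for the lemma as stated. There is no real obstacle; the only point to check is that disjointness is used in (b) to get $|S\cup S'|=|S|+|S'|$ and $T_{S\cup S'}=T_S+T_{S'}$.
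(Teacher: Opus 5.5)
Your proposal is correct and matches the paper's proof: part (a) from $\rank(T_{S'})\le\rank(T_S)+|S'\setminus S|$, and part (b) from subadditivity of tensor rank over disjoint supports. The alternative derivation of (a) from (b) using $\delta_D\ge0$ and the remark about the border defect are both valid, though not needed.
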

\begin{proof} Use $\rank(T_{S'})\le\rank(T_S)+|S'\setminus S|$ and subadditivity of tensor rank. \end{proof}

\begin{theorem}[Exact frontier]\label{thm:frontier}
For $0\le k\le r$,
\[
r-\Delta_D(k)=\min\{\,|D'|:D'\text{ decomposes }T,\ d(D,D')\le k\,\}.
\]
\end{theorem}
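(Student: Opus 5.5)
We prove the identity by establishing the two inequalities separately. Throughout we use the matching description of $d(D,D')$. If $d(D,D')\le k$, a maximum matching reuses at least $r-k$ summands of $D$ as complete tensors. So there is a set $R\subseteq[r]$ with $|R|\ge r-k$ whose summands $d_i$, $i\in R$, occur (with multiplicity) among the summands of $D'$.

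\emph{Upper bound (the minimum is at most $r-\Delta_D(k)$).} Choose $S$ with $|S|=k$ and $\delta_D(S)=\Delta_D(k)$, and take a minimal decomposition $E$ of $T_S$, of length $\rank(T_S)=k-\Delta_D(k)$. Let $D'$ consist of the summands $d_i$, $i\notin S$, followed by $E$. Since $T=T_{[r]\setminus S}+T_S$, this $D'$ decomposes $T$. Its length is $(r-k)+(k-\Delta_D(k))=r-\Delta_D(k)$. It reuses all $r-k$ summands indexed by $[r]\setminus S$, so $d(D,D')\le k$. The edge case $\rank(T_S)=0$, where $T_S=0$ and $E$ is empty, is harmless.

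\emph{Lower bound (every admissible $D'$ has length at least $r-\Delta_D(k)$).} Let $D'$ decompose $T$ with $d(D,D')\le k$. Take the matched set $R$ as above and put $S=[r]\setminus R$, so $|S|\le k$. Removing the matched summands from $D'$ leaves $|D'|-|R|$ rank-one terms whose sum is $T-T_R=T_S$. Hence $\rank(T_S)\le|D'|-(r-|S|)$, that is,
\[
|D'|\ge r-|S|+\rank(T_S)=r-\delta_D(S).
\]
If $|S|<k$, enlarge $S$ to a superset $S''$ of size exactly $k$; this is possible since $k\le r$. Lemma~\ref{lem:defect}(a) gives $\delta_D(S)\le\delta_D(S'')\le\Delta_D(k)$. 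Therefore $|D'|\ge r-\Delta_D(k)$.

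The only point needing care is this monotonicity step. $\Delta_D(k)$ is defined by a maximum over subsets of size exactly $k$, whereas the distance condition only gives $|S|\le k$. Lemma~\ref{lem:defect}(a) closes this gap. One should also check that the multiset matching is used correctly when $D$ has repeated summands: the matched copies are removed one for one, so subtracting $T_R$ from $T$ is valid. Over $\bbK$ all ranks are taken in the same field, so no field issues arise. Combining the two bounds shows the minimum is attained and equals $r-\Delta_D(k)$.
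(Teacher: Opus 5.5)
Your proof is correct and takes the same approach as the paper. You attain the bound by replacing a maximizing $k$-block with a minimal decomposition. Conversely, you bound any competitor's length by the rank of its unmatched block, and then pass from $|S|\le k$ to $\Delta_D(k)$ by monotonicity (Lemma~\ref{lem:defect}(a)); the paper writes this last step as $\Delta_D(m)\le\Delta_D(k)$.
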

\begin{proof}
Replacing a maximizing block by a minimal decomposition attains the left side.
Conversely, the unmatched block of a competitor with $m\le k$ changes has a
decomposition of length $|D'|-r+m$, so $|D'|\ge r-\Delta_D(m)\ge r-\Delta_D(k)$.
\end{proof}

\begin{proposition}[Tensor products]\label{prop:productfrontier}
For decompositions $D,E$ over the same field, group corresponding tensor
modes to form $D\otimes E$. For $1\le k\le |D|$, $1\le\ell\le|E|$,
\[
\Delta_{D\otimes E}(k\ell)\ge
\ell\Delta_D(k)+k\Delta_E(\ell)-\Delta_D(k)\Delta_E(\ell).
\]
\end{proposition}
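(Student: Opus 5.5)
Choose $S\subseteq[|D|]$ with $|S|=k$ and $\delta_D(S)=\Delta_D(k)$, and $U\subseteq[|E|]$ with $|U|=\ell$ and $\delta_E(U)=\Delta_E(\ell)$. Write $\rank(T_S)=k-\Delta_D(k)=:s$ and $\rank(T'_U)=\ell-\Delta_E(\ell)=:u$, where $T'$ is the tensor decomposed by $E$. The summands of $D\otimes E$ are the products $d_i\otimes e_j$. Take the $k\ell$-term subset $S\times U$; its partial sum is $T_S\otimes T'_U$ (modes grouped).

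The key step is the submultiplicativity of tensor rank under the grouped Kronecker product. Choose minimal decompositions $T_S=\sum_{p=1}^{s}x_p$ and $T'_U=\sum_{q=1}^{u}y_q$ into rank-one terms. Each $x_p\otimes y_q$ is rank one in the grouped spaces $(A\otimes A')\otimes(B\otimes B')\otimes(C\otimes C')$. Hence
\[
\rank(T_S\otimes T'_U)\le su.
\]
This holds over the same field $\bbK$, which is why the statement requires both decompositions over the same field.

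Therefore
\[
\delta_{D\otimes E}(S\times U)\ge k\ell-su=k\ell-(k-\Delta_D(k))(\ell-\Delta_E(\ell)).
\]
Expanding the right side gives $\ell\Delta_D(k)+k\Delta_E(\ell)-\Delta_D(k)\Delta_E(\ell)$. Since $\Delta_{D\otimes E}(k\ell)$ is a maximum over all $k\ell$-subsets, it is at least this value, which proves the claim.

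The only point requiring care is bookkeeping. The summands of $D\otimes E$ must be indexed by pairs, so that $S\times U$ has exactly $k\ell$ elements. This is automatic because $D\otimes E$ is an ordered list of length $|D||E|$, even if some products happen to coincide. The constraints $1\le k\le|D|$ and $1\le\ell\le|E|$ ensure the chosen subsets exist. No step is a genuine obstacle.
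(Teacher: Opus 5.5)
Your proof is correct and is essentially the paper's own argument. The paper's one-line proof also tensors minimum-length decompositions of maximizing $k$- and $\ell$-blocks, obtaining a decomposition of the $S\times U$ partial sum of length $(k-\Delta_D(k))(\ell-\Delta_E(\ell))$, and then expands $k\ell$ minus that length.
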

\begin{proof} Tensor minimizing decompositions of maximizing $k$- and $\ell$-blocks. \end{proof}

\begin{corollary}[Tensor powers]\label{cor:powerfrontier}
For $m\ge1$ and $1\le k\le r$, if $\Delta_D(k)=\delta$, then
$\Delta_{D^{\otimes m}}(k^m)\ge k^m-(k-\delta)^m$.
\end{corollary}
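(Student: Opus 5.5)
The plan is induction on $m$, using Proposition~\ref{prop:productfrontier} with $E=D^{\otimes(m-1)}$. It is cleanest to work with the complementary quantity: for a block size $n$ and defect $\Delta$, write $n-\Delta$ for the best attainable length of the replaced block. The target inequality $\Delta_{D^{\otimes m}}(k^m)\ge k^m-(k-\delta)^m$ then says that some $k^m$-subset of $D^{\otimes m}$ has rank at most $(k-\delta)^m$. In that form it is the obvious multiplicativity of rank-length under tensor products, and the proposition packages exactly that.

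For the base case $m=1$, the claim reads $\Delta_D(k)\ge k-(k-\delta)=\delta$, which holds with equality by hypothesis.

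For the inductive step, assume $\Delta_{D^{\otimes(m-1)}}(k^{m-1})\ge k^{m-1}-(k-\delta)^{m-1}=:\epsilon$. Apply Proposition~\ref{prop:productfrontier} to $D$ and $E=D^{\otimes(m-1)}$ with block sizes $k$ and $\ell=k^{m-1}$. This is legitimate: $1\le k\le |D|$ and $1\le k^{m-1}\le |D|^{m-1}=|E|$, and $D\otimes D^{\otimes(m-1)}$ is $D^{\otimes m}$ after regrouping modes. The right-hand side of the proposition equals $k\ell-(k-\Delta_D(k))(\ell-\Delta_E(\ell))$, which is nondecreasing in $\Delta_E(\ell)$ because its coefficient $k-\Delta_D(k)=k-\delta$ is nonnegative. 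We may therefore replace $\Delta_E(\ell)$ by the lower bound $\epsilon$. This gives
\[
\Delta_{D^{\otimes m}}(k^m)\ \ge\ k^m-(k-\delta)\bigl(k^{m-1}-\epsilon\bigr)=k^m-(k-\delta)(k-\delta)^{m-1}=k^m-(k-\delta)^m .
\]

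The only point that needs care is the monotonicity step. We only have an inequality for $\Delta_E$, not an equality, so we must confirm that the proposition's bound is monotone in the $E$-defect. The rewriting $k\ell-(k-\Delta_D)(\ell-\Delta_E)$ settles this, since $\delta\le k$ always holds (a rank is nonnegative).

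One may worry that Proposition~\ref{prop:productfrontier} is stated for a general $E$ "over the same field", whereas $E=D^{\otimes(m-1)}$ lives in regrouped modes. This is harmless, because the proposition imposes no condition on the spaces involved. Alternatively, one can bypass induction entirely. Take a maximizing $k$-block $S$ with a decomposition of length $k-\delta$, and tensor it with itself $m$ times. This yields a decomposition of $T_{S^m}$ of length $(k-\delta)^m$ for the $k^m$-subset $S^m$, which is the direct analogue of the proposition's proof.
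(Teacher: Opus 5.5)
Your proof is correct. The paper's proof is the one-line argument you sketch at the end as a bypass. Take a maximizing $k$-block $S$ and tensor $m$ copies of a minimum-length decomposition of $T_S$. Since $T_{S^m}=T_S^{\otimes m}$ after regrouping modes, this gives a decomposition of length $(k-\delta)^m$ for the $k^m$-subset $S^m$.

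Your main route instead iterates Proposition~\ref{prop:productfrontier} with $E=D^{\otimes(m-1)}$. That proposition is itself proved by the same tensoring, so the mechanism is identical. The induction does, however, cost you an extra monotonicity step. The proposition is stated in terms of the exact value $\Delta_E(\ell)$, while you only have a lower bound for it. You handle this correctly by rewriting the bound as $k\ell-(k-\Delta_D(k))(\ell-\Delta_E(\ell))$ and using $k-\delta\ge0$.

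The $k^m$-subset your induction produces is $S\times S'$, where $S'$ is some maximizing block of $D^{\otimes(m-1)}$, not necessarily $S^{m-1}$. This is harmless, because $\Delta_{D^{\otimes m}}(k^m)$ maximizes over all $k^m$-subsets.

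The direct argument is shorter, needs no monotonicity check, and exhibits the explicit witness $S^m$. The inductive argument's only advantage is that it uses the product proposition as a black box.
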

\begin{proof} Tensor $m$ minimum-length replacements of a maximizing block. \end{proof}
If $\delta>0$, the guaranteed saving fraction within that block is
$1-(1-\delta/k)^m$, tending exponentially to one. Relative to the whole
algorithm it is $(k/r)^m[1-(1-\delta/k)^m]$, tending to zero for $k<r$.
These are consequences of rank submultiplicativity, not tensor-power
rigidity lower bounds or a new exponent from a proper block.

\begin{theorem}[Increment bound]\label{thm:increment}
$\Delta_D(k+1)\le\Delta_D(k)+2$ for $0\le k<r$.
\end{theorem}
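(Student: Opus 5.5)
The plan is to compare a maximizing $(k+1)$-subset with one of its $k$-element subsets, using only subadditivity of tensor rank. The bound is not monotonicity (Lemma~\ref{lem:defect}(a)), which goes the wrong way. It comes from the fact that \emph{removing} a rank-one summand can raise rank by at most one.

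\emph{Step 1.} Fix $0\le k<r$ and choose $S\subseteq[r]$ with $|S|=k+1$ and $\delta_D(S)=\Delta_D(k+1)$. Since $k+1\ge1$, $S$ is nonempty, so pick any $i\in S$ and set $S'=S\setminus\{i\}$. Then $|S'|=k$, and by definition $\Delta_D(k)\ge\delta_D(S')$.

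\emph{Step 2.} Write $T_{S'}=T_S+(-d_i)$, where $-d_i$ is a rank-one tensor (or zero). Subadditivity of rank gives
\[
\rank(T_{S'})\le\rank(T_S)+1 .
\]
Hence
\[
\delta_D(S')=k-\rank(T_{S'})\ge k-\rank(T_S)-1=(k+1)-\rank(T_S)-2=\delta_D(S)-2 .
\]

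\emph{Step 3.} Combining the two steps gives $\Delta_D(k)\ge\delta_D(S')\ge\Delta_D(k+1)-2$, which is the claimed inequality.

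There is no real obstacle here. The one point that needs care is the direction of the rank comparison: the estimate $\rank(T_S)\le\rank(T_{S'})+1$ from Lemma~\ref{lem:defect} only recovers monotonicity. The useful inequality runs the other way, expressing $T_{S'}$ as $T_S$ minus a summand. That subtraction costs one unit of rank, and losing one element of $S$ costs one more unit in the cardinality, which is where the constant $2$ comes from.
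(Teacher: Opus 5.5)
Your proposal is correct and is the paper's argument. The paper removes one term from a maximizing $(k+1)$-subset and uses that adding a single rank-one tensor lowers rank by at most one, which is your inequality $\rank(T_{S'})\le\rank(T_S)+1$ read in the other direction.
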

\begin{proof} Remove one term from a maximizing $(k+1)$-subset; adding one rank-one tensor can decrease rank by at most one. \end{proof}

\begin{corollary}\label{cor:increment}
$\Delta_D(k)\le2\max(0,k-\rrho(D))$.  Consequently a competitor changing at most $k$ summands has length at least $r-2\max(0,k-\rrho(D))$.
\end{corollary}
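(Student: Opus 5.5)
The plan is to iterate Theorem~\ref{thm:increment} starting from the rank radius and then convert the resulting frontier bound into a length bound via Theorem~\ref{thm:frontier}. Write $\rho=\rrho(D)$.

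\emph{Step 1: the base range $k\le\rho$.} By Definition~\ref{def:rho}, every subset $S$ with $|S|\le\rho$ satisfies $\rank(T_S)=|S|$, so $\delta_D(S)=0$. Hence $\Delta_D(k)=0$ for $0\le k\le\rho$, which is the claimed bound in this range since $\max(0,k-\rho)=0$. For $k=0$ this is trivial, as the empty subset has defect $0$.

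\emph{Step 2: the range $\rho<k\le r$.} Induct on $k$, starting from $\Delta_D(\rho)=0$ and applying Theorem~\ref{thm:increment} at each step:
\[
\Delta_D(k)\le\Delta_D(\rho)+2(k-\rho)=2(k-\rho).
\]
Theorem~\ref{thm:increment} applies to every index $\rho\le j<k\le r$, so the chain is valid. Combining Steps~1 and~2 gives $\Delta_D(k)\le 2\max(0,k-\rho)$ for all $k$.

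\emph{Step 3: the consequence for competitors.} Let $D'$ decompose $T$ with $d(D,D')\le k$. By Theorem~\ref{thm:frontier}, $|D'|\ge r-\Delta_D(k)$. Combining this with the bound from Steps~1 and~2 gives $|D'|\ge r-2\max(0,k-\rho)$.

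There is no real obstacle here. The one point to check is the edge case $\rho=r$: then every $k\le r$ falls under Step~1, and the bound reads $\Delta_D(k)=0$. The proof does not need the monotonicity of Lemma~\ref{lem:defect}, because Step~1 covers every $k\le\rho$ directly from the definition.
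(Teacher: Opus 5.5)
Your proposal is correct and follows the paper's proof: the paper likewise iterates the increment bound of Theorem~\ref{thm:increment} from $\Delta_D(\rrho(D))=0$ and then obtains the length bound from Theorem~\ref{thm:frontier}. Your separate treatment of the range $k\le\rrho(D)$, directly from Definition~\ref{def:rho}, makes explicit a case the paper leaves implicit.
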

\begin{proof} Iterate the increment bound from $\Delta_D(\rrho(D))=0$ and apply Theorem~\ref{thm:frontier}. \end{proof}

\begin{theorem}[Exact distance]\label{thm:exactdist}
Put $d_<(D)=\min_{|D'|<r}d(D,D')$, where $D'$ ranges over decompositions
of $T$ and $\min\varnothing=\infty$. If $D$ is not rank-optimal, then
$d_<(D)=\rrho(D)+1$; otherwise $d_<(D)=\infty$.
\end{theorem}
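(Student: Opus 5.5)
The plan is to read $d_<(D)$ off the exact frontier, Theorem~\ref{thm:frontier}. That theorem says the shortest length reachable with at most $k$ changed summands is $r-\Delta_D(k)$. Hence
\[
d_<(D)=\min\{k:\Delta_D(k)\ge1\}=\tau_1(D).
\]
To justify this carefully, note two things. If $D'$ has $|D'|<r$ and $d(D,D')=m$, then $r-\Delta_D(m)\le|D'|<r$, so $\Delta_D(m)\ge1$. Conversely, if $\Delta_D(k)\ge1$, the theorem gives a competitor with $d(D,D')\le k$ and $|D'|\le r-1$. So it suffices to show $\tau_1(D)=\rrho(D)+1$ when $D$ is not rank-optimal, and $\tau_1(D)=\infty$ otherwise.

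First, the identity $\tau_1(D)=\rrho(D)+1$ whenever $\tau_1$ is finite. By Definition~\ref{def:rho}, every subset of size at most $\rrho(D)$ has defect $0$, so $\Delta_D(k)=0$ for $k\le\rrho(D)$. If $\rrho(D)<r$, maximality of $\rrho(D)$ produces a subset of size at most $\rrho(D)+1$ with positive defect. It cannot have size at most $\rrho(D)$, so it has size exactly $\rrho(D)+1$, and $\Delta_D(\rrho(D)+1)\ge1$. Monotonicity in $k$ (Lemma~\ref{lem:defect}(a), which lets us extend a deficient subset) confirms that $\tau_1$ is the first index where $\Delta$ becomes positive.

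Second, the dichotomy. $D$ is not rank-optimal if and only if $\rank T<r$, which happens if and only if $\delta_D([r])\ge1$. By the previous step this is equivalent to $\rrho(D)<r$, and then $d_<(D)=\rrho(D)+1$. If instead $D$ is rank-optimal, no $D'$ with $|D'|<r$ exists at all, so $d_<(D)=\infty$ by the convention $\min\varnothing=\infty$; consistently, $\rrho(D)=r$ and $\Delta_D\equiv0$.

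The only delicate point is bookkeeping in the frontier equivalence. We must use that $d(D,D')\le k$ is monotone in $k$ and that $d$ counts unmatched summands of $D$. Neither is problematic, so I expect no real obstacle beyond stating the reduction to $\tau_1$ cleanly.
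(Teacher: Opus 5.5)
Your proposal is correct and follows the paper's route. The paper's proof is the single instruction to apply the exact frontier theorem at the first $k$ with $\Delta_D(k)>0$; your argument spells out the two directions of $d_<(D)=\tau_1(D)$ and the identification $\tau_1(D)=\rrho(D)+1$ in the non-rank-optimal case.
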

\begin{proof} Apply Theorem~\ref{thm:frontier} at the first $k$ with $\Delta_D(k)>0$. \end{proof}

\begin{theorem}[Endpoint barrier]\label{thm:geodesic}
For any sequence of flips, reductions, and plus-transitions from $D$ to $D'$, with arbitrary intermediate
lengths, $d(D,D')\le\rrho(D)$ implies $|D'|\ge|D|$.
\end{theorem}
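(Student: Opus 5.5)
The plan is to observe that the statement depends only on the two endpoints: the path is irrelevant. Flips, reductions and plus-transitions each turn a decomposition of $T$ into a decomposition of $T$, so by induction along the sequence every intermediate object, and in particular $D'$, is a decomposition of the same tensor $T$. The hypotheses on intermediate lengths, and the particular moves used, are never needed after this. The only thing to check is that each move type preserves the represented tensor. This is true by definition for flips (length-preserving rewrites of two summands with the same sum), for reductions (merging summands with the same sum), and for plus-transitions (splitting one summand into two rank-one tensors with the same sum).

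Having reduced to a statement about the endpoint $D'$ alone, I will apply Theorem~\ref{thm:frontier} with $k=d(D,D')\le\rrho(D)$. This gives
\[
|D'|\ge\min\{|D''|: D''\text{ decomposes }T,\ d(D,D'')\le k\}=r-\Delta_D(k).
\]
By definition of the rank radius, every subset $S$ with $|S|\le\rrho(D)$ satisfies $\rank(T_S)=|S|$, so $\delta_D(S)=0$. Hence $\Delta_D(k)=0$ for all $k\le\rrho(D)$, and therefore $|D'|\ge r=|D|$.

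As a cross-check, I can use Theorem~\ref{thm:exactdist} instead. If $D$ is rank-optimal, then $|D'|\ge\rank T=|D|$ trivially. Otherwise every strictly shorter decomposition $D'$ has $d(D,D')\ge d_<(D)=\rrho(D)+1$, which contradicts the hypothesis $d(D,D')\le\rrho(D)$.

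The only point needing care, rather than a real obstacle, is that $d(D,D')$ is the matching distance on summand multisets as defined before Definition~\ref{def:rho}, so it applies to any pair of decompositions of $T$ regardless of how they are connected. I will also note explicitly that $d$ measures endpoint overlap and not path length. This is why arbitrarily long detours, including length-increasing plus-transitions, cannot evade the bound.
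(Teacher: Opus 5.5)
Your argument is correct and is essentially the paper's. The paper simply calls the statement the endpoint form of Theorem~\ref{thm:exactdist}, which is itself Theorem~\ref{thm:frontier} read at the first level where $\Delta_D$ becomes positive; you instead apply that frontier identity directly, using $\Delta_D(k)=0$ for $k\le\rrho(D)$ (your cross-check is literally the paper's route). The only property of the moves you use is that each one preserves $T$, so the path and intermediate lengths drop out exactly as in the paper, and your loose description of plus-transitions does no harm.
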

\begin{proof} This is the endpoint statement of Theorem~\ref{thm:exactdist}; intermediate lengths are irrelevant. \end{proof}

\begin{theorem}[Kruskal transfer]\label{thm:kruskal}
Order the three factor families so their Kruskal ranks satisfy
$2\le k_A\le k_B\le k_C$, where Kruskal rank is the largest size for
which every subfamily is independent. Then the decomposition is strongly
$q$-locally rigid for
\[
q\le\min\left(r,k_A+k_B-2,\left\lfloor\frac{k_A+k_B+k_C-2}{2}\right\rfloor\right).
\]
\end{theorem}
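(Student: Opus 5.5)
The plan is to reduce the statement to Kruskal's uniqueness theorem applied to subdecompositions. Fix a subset $S\subseteq[r]$ with $|S|=s\le q$, and let $k_A^S,k_B^S,k_C^S$ be the Kruskal ranks of the restricted factor families $\{a_i\}_{i\in S}$, $\{b_i\}_{i\in S}$, $\{c_i\}_{i\in S}$. Kruskal rank does not increase under restriction, and a family of $s$ vectors has Kruskal rank at most $s$. Every subfamily of the restriction is a subfamily of the original, so we get $k_X^S\ge\min(s,k_X)$ for $X\in\{A,B,C\}$. Strong $q$-local rigidity requires that, for every such $S$, the displayed decomposition of $T_S$ is its unique decomposition of length at most $s$, up to permutation and gauge. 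Kruskal's theorem delivers exactly this (including that $\rank T_S=s$) once
\[
k_A^S+k_B^S+k_C^S\ge 2s+2 .
\]
Subsets with $s\le1$ are trivial, so it suffices to verify this inequality for $2\le s\le q$.

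The verification is a case split on where $s$ sits relative to $k_A\le k_B\le k_C$, using $k_X^S\ge\min(s,k_X)$.
\begin{itemize}
\item If $s\le k_A$, all three restricted ranks equal $s$, and $3s\ge2s+2$ holds since $s\ge2$.
\item If $k_A<s\le k_B$, the sum is at least $k_A+2s\ge2s+2$, using $k_A\ge2$.
\item If $k_B<s\le k_C$, the sum is at least $k_A+k_B+s$. This is $\ge2s+2$ iff $s\le k_A+k_B-2$, which is exactly the middle term of the hypothesis.
\item If $s>k_C$, the sum is at least $k_A+k_B+k_C$. This is $\ge2s+2$ iff $s\le (k_A+k_B+k_C-2)/2$, i.e.\ $s\le\lfloor (k_A+k_B+k_C-2)/2\rfloor$, which is the last term of the hypothesis.
\end{itemize}
The bound $q\le r$ only ensures that the subsets exist. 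So for every $s\le q$ the Kruskal inequality holds.

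The main obstacle is matching Kruskal's conclusion to the paper's notion of uniqueness. Kruskal's theorem as usually stated gives uniqueness among decompositions of length exactly $s$, up to permutation and scaling. We also need three further facts:
\begin{enumerate}
\item There is no decomposition of length less than $s$. This follows because Kruskal's theorem implies $\rank T_S=s$; alternatively, a shorter decomposition could be padded by splitting one term, contradicting uniqueness, after checking that the padded decomposition is not the displayed one.
\item Rescalings correspond exactly to the paper's gauge $(\lambda,\mu,(\lambda\mu)^{-1})$, since each summand is fixed as a tensor.
\item The theorem holds over an arbitrary field $\bbK$. Kruskal's theorem is valid over any field (Rhodes' and Landsberg's proofs are field-independent). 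A decomposition over $\bbK$ is in particular one over the algebraic closure, so uniqueness there suffices.
\end{enumerate}
With these three points, the displayed decomposition is the only one of length at most $|S|$ for every $|S|\le q$. This gives $\srho(D)\ge q$, i.e.\ strong $q$-local rigidity.
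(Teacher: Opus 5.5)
Your proposal is correct and follows the paper's argument: bound each restricted Kruskal rank below by $\min(s,k_X)$, check that the sum is at least $2s+2$ for every $2\le s\le q$ (your four-case split spells out the paper's one-line check), and invoke Kruskal's theorem, with singletons immediate. One wording slip: Kruskal rank can in fact \emph{increase} under restriction, since a subset may avoid every short dependence; this is harmless, because you only use the bound $k_X^S\ge\min(s,k_X)$, which your subfamily argument correctly justifies.
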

\begin{proof}
For a subset of size $s$, the three Kruskal ranks are at least
$\min(s,k_A)$, $\min(s,k_B)$, and $\min(s,k_C)$. For every $2\le s\le q$ their
sum is at least $2s+2$, so Kruskal's theorem applies; singletons are immediate.
\end{proof}

Under the same ordered-rank hypothesis, the border-rank criterion of
Blomenhofer and Lovitz \cite[Thm.~4.4]{BL} gives
\[
\brho(D)\ge\min\left(r,k_A+k_B-1,
\left\lfloor\frac{k_A+k_B+k_C-1}{2}\right\rfloor\right).
\]
Indeed, on each subset it suffices that twice its size is at most the
sum of its three Kruskal ranks minus one. This certifies border rank,
not uniqueness.

\begin{theorem}\label{thm:open}
Let $V_r(T)$ be the quasi-affine variety of ordered length-$r$
decompositions with nonzero factors over an algebraically closed field.
For $0\le k\le r$ and every integer $q$, the locus
$\{D\in V_r(T):\bar\Delta_D(k)\le q\}$ is Zariski open.
On any irreducible $Y\subseteq V_r(T)$, the whole border frontier attains
its pointwise minimum on a single dense open subset. In particular, a
border-radius bound holding at one point of $Y$ holds on a dense open subset.
\end{theorem}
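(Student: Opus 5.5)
The plan is to reduce everything to the fact that border rank is lower semicontinuous on each fixed tensor space. For a subset $S\subseteq[r]$, the map $D\mapsto T_S(D)=\sum_{i\in S}a_i\otimes b_i\otimes c_i$ is a polynomial morphism from $V_r(T)$ into $A\otimes B\otimes C$. For every integer $m$, the set $\{X:\brank(X)\le m\}$ is the secant variety $\sigma_m$, which is Zariski closed by definition of border rank over an algebraically closed field. Hence $\{D:\brank(T_S(D))\le m\}$ is closed in $V_r(T)$, and $\{D:\brank(T_S(D))\ge m\}$ is open.

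The first key step is to rewrite the condition. We have $\bar\delta_D(S)\le q$ if and only if $\brank(T_S)\ge |S|-q$, which is an open condition by the previous paragraph. Since $\bar\Delta_D(k)\le q$ means $\bar\delta_D(S)\le q$ for every $S$ with $|S|=k$, the locus is a finite intersection over the $\binom{r}{k}$ subsets of open sets, so it is open. The edge cases need a short check. If $|S|-q\le 0$, the condition holds trivially and the set is all of $V_r(T)$. For $k=0$ the statement is trivial.

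For the second statement, fix an irreducible $Y$. For each $k$, the function $D\mapsto\bar\Delta_D(k)$ is integer-valued and bounded by $k$ on $Y$, so it attains a minimum value $q_k$. By the first part, $U_k=\{D\in Y:\bar\Delta_D(k)\le q_k\}=\{D\in Y:\bar\Delta_D(k)=q_k\}$ is open in $Y$. It is also nonempty, hence dense, because $Y$ is irreducible. Then $U=\bigcap_{k=0}^r U_k$ is a finite intersection of nonempty opens in an irreducible space, so it is nonempty, open and dense. On $U$ the whole frontier equals its pointwise minimum $(q_0,\ldots,q_r)$.

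For the final claim, note that $\brho(D)\ge p$ is equivalent to $\bar\Delta_D(k)=0$ for all $k\le p$. A subset of size at most $p$ is contained in one of size $p$ when $p\le r$, and border defect is not monotone a priori, so I will use the per-$k$ formulation directly rather than monotonicity. If the bound holds at some $D_0\in Y$, then $q_k=0$ for all $k\le p$, since $0$ is a lower bound and is attained at $D_0$. Hence every point of $U$ satisfies the bound.

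The main obstacle is only foundational. One has to make sure that border rank over the algebraic closure is the rank function cut out by secant varieties, so that lower semicontinuity holds in the Zariski topology. One also has to check that $V_r(T)$, being quasi-affine, carries the induced topology, so that preimages and irreducibility arguments behave as stated. I would state the closedness of $\sigma_m$ explicitly and cite it as standard.
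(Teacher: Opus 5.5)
Your proposal is correct and follows the paper's proof: pull back the closed secant cones $\sigma_{k-q-1}$ along the morphisms $D\mapsto T_S$, take the finite union over $k$-subsets, and intersect the finitely many nonempty open minimum loci on the irreducible $Y$. One small aside: border defect is in fact monotone under inclusion, by subadditivity of border rank, but your per-$k$ formulation of the radius bound does not need this.
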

\begin{proof}
For each support, $D\mapsto T_S$ is a morphism and
$\sigma_L=\{\brank\le L\}$ is a closed affine secant cone (empty for
$L<0$). The complement of the stated locus is the finite union of preimages
of $\sigma_{k-q-1}$. Intersect the finitely many nonempty open loci attaining
the minimum of each frontier value on the irreducible set $Y$.
\end{proof}

\section{Certificates}\label{sec:cert}

\begin{definition}
$D$ is \emph{Khatri--Rao nondegenerate} if in each of the three pairings the $r$ vectors $\{b_i\otimes c_i\}$,
$\{c_i\otimes a_i\}$, $\{a_i\otimes b_i\}$ are linearly independent.
\end{definition}

\begin{lemma}\label{lem:kr}
If $D$ is Khatri--Rao nondegenerate then for every $S$ and every mode, the rank of the mode-$m$ flattening of
$T_S$ equals the dimension of the span of the mode-$m$ factor vectors indexed by $S$.
\end{lemma}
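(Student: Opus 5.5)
We compute the mode-$1$ flattening of $T_S$ explicitly; modes $2$ and $3$ follow by the cyclic symmetry of the definition. View $T_S$ as the linear map
\[
\phi_S:(B\otimes C)^*\to A,\qquad \phi_S=\sum_{i\in S} a_i\,(b_i\otimes c_i)^{\vee},
\]
where $(b_i\otimes c_i)^{\vee}$ denotes evaluation of a functional on $b_i\otimes c_i$. Clearly $\operatorname{im}\phi_S\subseteq\spn\{a_i:i\in S\}$, so the flattening rank is at most the dimension of that span. For the reverse inequality we need every vector $a_j$, $j\in S$, to lie in the image.

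By Khatri--Rao nondegeneracy the full family $\{b_i\otimes c_i\}_{i=1}^r$ is linearly independent in $B\otimes C$, and so is the subfamily indexed by $S$. Hence there is a dual family of functionals $f_j\in(B\otimes C)^*$, $j\in S$, with
\[
f_j(b_i\otimes c_i)=\delta_{ij}\qquad(i,j\in S).
\]
Then $\phi_S(f_j)=\sum_{i\in S}a_i f_j(b_i\otimes c_i)=a_j$. Therefore $\operatorname{im}\phi_S=\spn\{a_i:i\in S\}$, and the two ranks are equal.

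No step here is genuinely hard. The only point needing care is that we use independence of the restricted family $\{b_i\otimes c_i\}_{i\in S}$, which is inherited from independence of the whole family, so it holds for every $S$. We also note that the hypothesis is used only for the pairing complementary to the mode under consideration. So the lemma still holds mode by mode under the weaker assumption that only the relevant Khatri--Rao family is independent. Nothing from Section~\ref{sec:local} is needed.
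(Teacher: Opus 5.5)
Your proof is correct and is essentially the paper's argument: you bound the flattening's image by $\spn\{a_i\}_{i\in S}$, then use functionals dual to the independent family $\{b_i\otimes c_i\}_{i\in S}$ to recover each $a_j$ in the image. Your side remark is also accurate: only the Khatri--Rao family for the pairing complementary to the mode under consideration is needed.
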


\begin{proof}
Write the mode-$A$ flattening as $\sum_{i\in S}a_i\kappa_i^{\mathsf T}$, $\kappa_i=b_i\otimes c_i$. Its column
space lies in $\spn\{a_i\}_{i\in S}$. The $\kappa_i$, $i\in S$, are independent, so there are functionals
$\varphi_i$ with $\varphi_i(\kappa_j)=\delta_{ij}$; contracting returns each $a_i$.
\end{proof}

\begin{corollary}\label{cor:krcons}
Let $D$ be Khatri--Rao nondegenerate. Then no subset admits the \emph{regrouping rewrite}, which expresses one
summand's Khatri--Rao vector through those of the others and reabsorbs it; and $S$ with $|S|=k$ satisfies
$\brank(T_S)=k$ as soon as some mode contributes $k$ independent factor vectors.
\end{corollary}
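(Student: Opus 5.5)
The corollary has two parts, and I will prove each directly from Lemma~\ref{lem:kr} and the Khatri--Rao hypothesis.

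\emph{No regrouping rewrite.} A regrouping rewrite needs, for some $j\in S$ and some mode (say mode $A$, with $\kappa_i=b_i\otimes c_i$), a linear relation $\kappa_j=\sum_{i\in S\setminus\{j\}}\lambda_i\kappa_i$. With such a relation,
\[
a_j\otimes\kappa_j=\sum_{i\ne j}\lambda_i\,a_j\otimes\kappa_i,
\]
and these pieces can be absorbed into the other summands. The hypothesis, however, says the full family $\{\kappa_i\}_{i=1}^r$ is linearly independent, and so is every subfamily indexed by $S$. Hence no nontrivial relation of this form exists. The modes $B$ and $C$ are handled the same way, using the other two pairings. There is nothing further to prove here beyond identifying the rewrite with such a linear relation.

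\emph{Border rank equals $k$.} Suppose $|S|=k$ and some mode, say $A$, has $\{a_i\}_{i\in S}$ linearly independent. Lemma~\ref{lem:kr} gives that the mode-$A$ flattening of $T_S$ has rank $\dim\spn\{a_i\}_{i\in S}=k$. Border rank is bounded below by flattening rank, by lower semicontinuity of matrix rank. Concretely, if $T_S=\lim_{\epsilon\to0}T_\epsilon$ with each $T_\epsilon$ of rank at most $L$, then every flattening of $T_\epsilon$ has rank at most $L$. The set of matrices of rank at most $L$ is Zariski closed, so the limit flattening also has rank at most $L$, which forces $L\ge k$. In the other direction, the displayed decomposition gives $\brank(T_S)\le\rank(T_S)\le k$. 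Therefore $\brank(T_S)=k$.

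The only point needing care is this flattening lower bound for border rank over the algebraic closure. It is standard, and the closedness of bounded-rank matrices applies over the algebraic closure just as over $\bbK$. Khatri--Rao nondegeneracy is a statement about linear independence, and independence is unchanged by field extension. So Lemma~\ref{lem:kr} still holds after extending scalars, and the argument goes through there.
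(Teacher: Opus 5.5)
Your proposal is correct and follows the paper's proof. The regrouping rewrite is exactly a Khatri--Rao dependence within $S$, which independence of the full family excludes. The border-rank claim is Lemma~\ref{lem:kr} together with the flattening lower bound and the trivial bound $\brank(T_S)\le|S|$; you spell out the reabsorption formula and the closedness argument that the paper leaves implicit.
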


\begin{proof}
A Khatri--Rao dependence within $S$ would give $\rank(T_S)\le|S|-1$ by regrouping, which nondegeneracy
excludes. The second claim is Lemma~\ref{lem:kr} with $\brank(T_S)\le|S|$.
\end{proof}

Throughout, a subset a certificate does not reach is one needing a further argument, never one shown to be
reducible. Nondegeneracy asks $r$ vectors to be independent in dimension $\dim B\cdot\dim C$, a genericity
condition rather than a property distinguishing a particular scheme; its role is to cut the cost per subset
from three eliminations on $\dim(B\otimes C)$ rows to three on $\dim A$ rows, collapsing the test to factor
rank. A Khatri--Rao dependence is an explicit reduction, so a scan certifies rigidity and enumerates candidate
regrouping sites in the displayed decomposition. This static test does not
answer \cite[Question~5]{KM}, which asks about reachable reduction edges.

\begin{lemma}[Subadditivity]\label{lem:subadd}
If every subset of size exactly $k$ satisfies $\brank(T_S)=k$, so does every subset of size $j\le k$ with $j$
in place of $k$. The same holds for rank.
\end{lemma}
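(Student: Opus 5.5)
The plan is to pass from a subset of size $j$ to a superset of size $k$ and apply subadditivity, in the form of Lemma~\ref{lem:defect}(b) for rank and its border-rank analogue. Fix $S$ with $|S|=j\le k$. We may assume $k\le r$, since otherwise the hypothesis concerns no subsets of size $k$ and the statement is read only for $k\le r$. Choose any $S'\supseteq S$ with $|S'|=k$, which is possible because $k\le r$. Write $S'$ as the disjoint union of $S$ and $S'\setminus S$, where $|S'\setminus S|=k-j$. Then
\[
k=\brank(T_{S'})\le \brank(T_S)+\brank(T_{S'\setminus S})\le \brank(T_S)+(k-j),
\]
using subadditivity of border rank under sums of tensors. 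Subadditivity holds because the sum of a limit of rank-$a$ tensors and a limit of rank-$b$ tensors is a limit of rank-$(a+b)$ tensors. The second inequality uses $\brank(T_{S'\setminus S})\le |S'\setminus S|$, which holds because $T_{S'\setminus S}$ is displayed as a sum of $k-j$ rank-one terms.

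This gives $\brank(T_S)\ge j$. Together with the trivial bound $\brank(T_S)\le |S|=j$ from the displayed decomposition, we get $\brank(T_S)=j$.

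For rank the argument is identical with rank in place of border rank; equivalently, it is Lemma~\ref{lem:defect}(a). Since $\delta_D(S)\le\delta_D(S')=0$ and $\delta_D\ge0$ always, $\delta_D(S)=0$.

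The only point needing care is the border-rank subadditivity, which is standard. The degenerate cases $j=0$ (empty sum, border rank $0$) and $k>r$ are handled by convention, so I expect no real obstacle.
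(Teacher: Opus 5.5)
Your proposal is correct and is essentially the paper's argument: embed $S$ in a $k$-superset $S'$ and use subadditivity, $\brank(T_{S'})\le\brank(T_S)+(k-j)$. The paper phrases it contrapositively, while you argue directly and also make explicit the implicit assumption $k\le r$ needed for such a superset to exist.
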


\begin{proof}
If $|S|=j$ had $\brank(T_S)\le j-1$, subadditivity would give $\brank(T_{S'})\le(j-1)+(k-j)=k-1$ for any
superset $S'$ of size $k$.
\end{proof}

\begin{lemma}[Soundness modulo a prime]\label{lem:modp}
Let $f$ have integer coefficients and vanish identically on $\sigma_L$ over $\bbC$. If the entries of $T$ are
$p$-integral rationals for a prime $p$ and $f(T)\not\equiv0\pmod p$, then $\brank(T)>L$.
\end{lemma}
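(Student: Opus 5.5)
We argue by contraposition: assuming $\brank(T)\le L$, we show that $f(T)\equiv0\pmod p$. The plan is to transport the vanishing of $f$ on $\sigma_L$ over $\bbC$ to an identity of polynomials with integer coefficients, and then reduce that identity modulo $p$. The one point needing care is that $T$ may lie on $\sigma_L$ only as a limit, not as an actual sum of $L$ rank-one terms.

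\textbf{Step 1 (Zariski density).} The set of tensors $\sum_{j=1}^L x_j\otimes y_j\otimes z_j$ with $x_j,y_j,z_j$ having complex entries is the image of the polynomial parametrization
\[
\Phi\colon (A\times B\times C)^L\to A\otimes B\otimes C .
\]
Over $\bbC$, $\sigma_L$ is the Zariski closure of this image; the Euclidean closure of the constructible image coincides with its Zariski closure. So $f$ vanishes on $\sigma_L$ if and only if $f\circ\Phi$ vanishes on all complex points. Since $f\circ\Phi$ is a polynomial in the parameter entries, this happens if and only if $f\circ\Phi=0$ identically as a polynomial. Because $f$ and $\Phi$ have integer coefficients, this is an identity in $\bbZ[\text{parameters}]$.

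\textbf{Step 2 (the obstacle).} This step is where the difficulty lies. From $\brank(T)\le L$ we get $T\in\sigma_L(\bbC)$, and hence $f(T)=0$ as a rational number. We must deduce $f(T)\equiv0\pmod p$, that is, that the $p$-integral rational number $f(T)$ has positive $p$-adic valuation. This is immediate: $f(T)=0$ exactly. Moreover $f(T)$ lies in $\bbZ_{(p)}$, because $T$ has $p$-integral entries and $f$ has integer coefficients. Its reduction modulo $p$ is therefore well defined and equal to the reduction of $0$, which is $0$. This contradicts the hypothesis $f(T)\not\equiv0\pmod p$, so $\brank(T)>L$.

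\textbf{Why the statement is useful.} The point of the lemma is practical: one evaluates $f(T)$ only in $\mathbb{F}_p$. The value is well defined because the reduction map $\bbZ_{(p)}\to\mathbb{F}_p$ is a ring homomorphism commuting with polynomial evaluation, so reducing $T$ first and then evaluating $\bar f$ gives the reduction of $f(T)$. Step 1 is not logically required for this direction; it only explains why such integer-coefficient $f$ exist, namely as equations of $\sigma_L$ defined over $\bbQ$ with denominators cleared. The main care in the proof is the bookkeeping in this paragraph.
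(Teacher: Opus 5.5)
Your proof is correct and is essentially the paper's own argument: $\brank(T)\le L$ places $T$ on $\sigma_L$, so $f(T)=0$ exactly as a $p$-integral rational, and reducing this identity modulo $p$ contradicts $f(T)\not\equiv0\pmod p$. As you note yourself, Step~1 (the Zariski-density discussion) is not needed for this direction.
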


\begin{proof}
$\brank(T)\le L$ gives $f(T)=0$, an identity between $p$-integral rationals, which reduces modulo $p$.
\end{proof}

Every $(L+1)\times(L+1)$ flattening minor is such an $f$. Only positive verdicts lift: a rank deficiency modulo
$p$ is a flag, not a proof.

We use Strassen's inequality \cite{Strassen83}, in the form of \cite[Thm.~6.1.1]{LandsbergSurvey}: if
$T\in\bbK^m\otimes\bbK^n\otimes\bbK^n$ has first-mode slices $T_0,\dots,T_{m-1}$ and some combination $T_a$ is
invertible, then $\brank(T)\ge n+\frac12\rank[T_bT_a^{-1},T_cT_a^{-1}]$ for any further combinations $T_b,T_c$.

\begin{lemma}[Commutator certificate]\label{lem:comm}
Let $p$ be a prime, let $T$ have $p$-integral rational entries with first-mode slices of size $n\times n$, let
$T_a,T_b,T_c$ be integer combinations of them, and put $g=[T_b\adj(T_a),T_c\adj(T_a)]$. Let $L\ge n$. If modulo
$p$ both $\det T_a\not\equiv0$ and some $(2(L-n)+1)$-minor of $g$ is nonzero, then $\brank(T)>L$. Equivalently, a
nonzero $r$-minor of $g$ modulo $p$ gives $\brank(T)\ge n+\lceil r/2\rceil$; the case $r=1$, a single nonzero
entry, is $\brank(T)\ge n+1$.
\end{lemma}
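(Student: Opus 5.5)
The plan is to reduce to Strassen's inequality over $\bbC$, as quoted just before the lemma, and to carry the nonvanishing from characteristic $p$ up to characteristic zero in the same way as in Lemma~\ref{lem:modp}.

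First I would control $T_a$. Its entries are $p$-integral, so $\det T_a$ is a $p$-integral rational. Since its reduction modulo $p$ is nonzero, $\det T_a\ne0$ in $\bbQ$, and $T_a$ is invertible over $\bbC$.

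Next I would write $\adj(T_a)=\det(T_a)\,T_a^{-1}$. Setting $d=\det T_a$, this gives
\[
g=[T_b\adj(T_a),T_c\adj(T_a)]=d^2\,[T_bT_a^{-1},T_cT_a^{-1}].
\]
Because $d\ne0$, $\rank g=\rank[T_bT_a^{-1},T_cT_a^{-1}]$ over $\bbQ$. The point of using the adjugate is that $g$ has $p$-integral entries, being polynomial in the entries of $T$.

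Then I would lift the minor. Every minor of $g$ is a $p$-integral rational, so if one of them is nonzero modulo $p$ it is nonzero in $\bbQ$. Hence a nonzero $r$-minor modulo $p$ gives $\rank g\ge r$ over $\bbQ$. Rank is unchanged by extending scalars, so $\rank g\ge r$ over $\bbC$ as well.

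Strassen's inequality then gives
\[
\brank(T)\ge n+\tfrac{r}{2}.
\]
Since border rank is an integer, this improves to $\brank(T)\ge n+\lceil r/2\rceil$. The stated version follows by taking $r=2(L-n)+1$: then $\lceil r/2\rceil=L-n+1$, so $\brank(T)\ge L+1>L$. The case $r=1$ gives $\brank(T)\ge n+1$.

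There is no real obstacle. The only care needed is to apply Strassen's inequality over $\bbC$, not over $\mathbb{F}_p$, and to note that only the nonvanishing of the determinant and of the minor is transferred upward from characteristic $p$, never a rank deficiency. This matches the remark after Lemma~\ref{lem:modp} that only positive verdicts lift.
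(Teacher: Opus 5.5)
Your proof is correct, but it is organized differently from the paper's.

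\textbf{Your route.} You argue pointwise. The two nonvanishing statements modulo $p$ lift to nonvanishing in $\bbQ$ for the given $T$, because $\det T_a$ and the minor are $p$-integral. Hence $T_a$ is invertible and $\rank g=\rank[T_bT_a^{-1},T_cT_a^{-1}]\ge r$. You then apply Strassen's inequality once, to $T$ itself. Integrality of border rank gives the $\lceil r/2\rceil$ form, and the $(2(L-n)+1)$ form is its special case $r=2(L-n)+1$.

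\textbf{The paper's route.} The paper argues on the secant variety instead. It treats $h=\det T_a$ and the $(2(L-n)+1)$-minors of $g$ as integer polynomials in the tensor entries. Strassen's inequality on $\sigma_L\cap\{h\ne0\}$ shows those minors vanish there. Irreducibility of $\sigma_L$ then shows that either the minors vanish on all of $\sigma_L$, or $h$ does. Lemma~\ref{lem:modp} is applied to whichever polynomial is nonzero at $T$ modulo $p$. The $r$-minor form comes afterwards, by taking $L=n+\lfloor (r-1)/2\rfloor$. For even $r$ this implicitly uses that a nonzero $r$-minor forces a nonzero $(r-1)$-minor; your integrality step avoids that.

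\textbf{Comparison.} Your argument is shorter. It needs neither irreducibility of $\sigma_L$ nor the case split on whether $\sigma_L$ meets $\{h\ne0\}$. The paper's argument buys uniformity: it casts the commutator certificate as an integer polynomial vanishing on $\sigma_L$. That is exactly the template of Lemma~\ref{lem:modp} already used for flattening minors, so all the modular border-rank certificates rest on a single soundness principle.
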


\begin{proof}
Put $h=\det T_a$, so that $g$ and $h$ are integer polynomials in the entries of $T$ and
$g=h^2[T_bT_a^{-1},T_cT_a^{-1}]$ wherever $h\ne0$. On $\sigma_L\cap\{h\ne0\}$, Strassen's inequality
$\brank\ge n+\frac12\rank[T_bT_a^{-1},T_cT_a^{-1}]$ \cite{Strassen83} together with $\brank\le L$ forces
$\rank g\le2(L-n)$, so every $(2(L-n)+1)$-minor of $g$ vanishes there. Since $\sigma_L$ is irreducible, either
that open set is nonempty, hence dense, so every such minor vanishes on $\sigma_L$ and Lemma~\ref{lem:modp}
applies to one of them; or it is empty, so $h$ vanishes on $\sigma_L$ and Lemma~\ref{lem:modp} applies to $h$.
For the restatement, a nonzero $r$-minor certifies $\brank>L$ for the largest $L$ with $2(L-n)+1\le r$, namely
$L=n+\lfloor(r-1)/2\rfloor$, and $\lfloor(r-1)/2\rfloor+1=\lceil r/2\rceil$.
\end{proof}

The rank form is used by the $712$ level-$8$ subsets of \S\ref{sec:family}; the rank of $g$ is computed modulo $p$, and since reduction can only lower it,
a modular rank is a lower bound for the rank over $\bbQ$, which is the direction the inequality needs.

To apply this to a subset $S$, project each mode by an integer matrix to dimension $n$, chosen so that the
resulting slices admit an invertible combination; border rank does not increase under linear maps in each mode,
so a lower bound for the projected tensor is one for $T_S$. No multilinear support computation is needed, and
certificates record the projections.

One implementation detail affects the counts, not just efficiency. The lemma requires choosing which mode
supplies the slices, and when the projected dimensions coincide all three are admissible; one mode's slice
family can be identically singular while another's is not, so fixing the mode order silently reports artifacts.
Trying all admissible assignments changed the level-$6$ subsets of the Kronecker square certified by the
commutator bound from $14{,}642$ to all $14{,}930$.

A fourth certificate is the Young flattening bound of Landsberg and
Ottaviani \cite[Thm.~2.1]{LO15}, there called the skew-symmetrized flattening: for $T\in A\otimes B\otimes C$ with $\dim A=a$ and $0\le p\le a-1$, the induced map
$\Lambda^{p}A\otimes B^{*}\to\Lambda^{p+1}A\otimes C$ satisfies
$\brank(T)\ge\rank(T_A^{\wedge p})/\binom{a-1}{p}$. Like the first two it bounds border rank. Here too the
rank of $T_A^{\wedge p}$ may be computed modulo the prime, since reduction can only lower it and the bound
needs a lower bound for the rank over $\bbQ$.

When a partial sum has multilinear rank $(n,n,d)$ with $n=|S|-1$ and no
invertible third-mode slice combination, the next two lemmas treat this
singular-slice case.

\begin{lemma}[Singular pencil]\label{lem:pencil}
Let $\bbK$ be infinite, $|S|=k$, and suppose $T_S$ has multilinear rank $(n,n,d)$ with $n=k-1$. If every
combination of the third-mode slices is singular, then $\rank_{\bbK}(T_S)=k$.
\end{lemma}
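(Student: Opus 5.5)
The upper bound $\rank_{\bbK}(T_S)\le k$ is immediate from the displayed $k$-term decomposition, so the task is the lower bound $\rank_{\bbK}(T_S)\ge k=n+1$. Suppose for contradiction that $T_S=\sum_{j=1}^{n}x_j\otimes y_j\otimes z_j$. The plan is to reduce to the concise case and then read off the third-mode slices.

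\emph{Reduction to the concise case.} Replace $A$ and $B$ by the images of the first two flattenings, of dimension $n$ each, and $C$ by the third one, of dimension $d$. Projecting a rank decomposition onto these subspaces preserves length, so the minimal decomposition may be taken inside $\bbK^n\otimes\bbK^n\otimes\bbK^d$. The first flattening has rank $n$ and is the sum of the $n$ rank-one maps coming from the summands. Hence $x_1,\dots,x_n$ are linearly independent, and by the same argument so are $y_1,\dots,y_n$.

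\emph{Slice argument.} In the bases $\{x_j\}$ and $\{y_j\}$, each third-mode slice $T_\gamma=\sum_j\gamma(z_j)\,x_j\otimes y_j$, for $\gamma\in C^*$, is diagonal with diagonal entries $\gamma(z_1),\dots,\gamma(z_n)$. Each $z_j$ is nonzero, since otherwise the rank would be smaller than $n$ and contradict multilinear rank $n$. Because $\bbK$ is infinite, one can choose a $\gamma$ avoiding the finitely many hyperplanes $\{\gamma(z_j)=0\}$. For that $\gamma$ the slice $T_\gamma$ is invertible. This contradicts the hypothesis that every slice combination is singular. Therefore $\rank(T_S)\ge n+1=k$, and equality holds.

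\emph{Main obstacle.} The delicate point is justifying that a length-$n$ decomposition has independent first- and second-mode factors. This follows from the flattening rank equal to $n$: a sum of $n$ rank-one matrices has rank $n$ only if both families of factors are independent. Infiniteness of $\bbK$ is used only to pick $\gamma$ outside a finite union of proper hyperplanes. Nothing in the argument concerns border rank; indeed, degenerations such as the W-state show the border-rank analogue can fail.
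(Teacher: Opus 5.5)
Your proof is correct and follows the route the paper indicates. The paper describes Lemma~\ref{lem:pencil} as the contrapositive of the classical fact that a concise tensor of minimal rank is $1$-generic, and your simultaneous diagonalization of the slices in the bases $\{x_j\}$, $\{y_j\}$ (equivalently, the single surviving term in the Cauchy--Binet expansion of Lemma~\ref{lem:cb}) is exactly that fact.

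Only your closing aside is wrong. The W-state has an invertible slice combination, so it does not satisfy the hypothesis and does not show that a border-rank analogue fails. A genuine instance is $a_0\otimes b_1\otimes c_1+a_0\otimes b_2\otimes c_2+a_1\otimes b_0\otimes c_1+a_2\otimes b_0\otimes c_2$. It has multilinear rank $(3,3,2)$ and a singular pencil, so the lemma gives rank $4$. Its border rank is nevertheless $3$, as the limit of $\epsilon^{-1}[(a_0+\epsilon a_1)\otimes(b_0+\epsilon b_1)\otimes c_1+(a_0+\epsilon a_2)\otimes(b_0+\epsilon b_2)\otimes c_2-a_0\otimes b_0\otimes(c_1+c_2)]$.
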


\begin{lemma}\label{lem:cb}
In the situation of Lemma~\ref{lem:pencil}, let $A$ and $B$ be the $n\times k$ matrices of compressed factor
vectors in the two modes of rank $n$ and $c_1,\dots,c_k$ the third-mode vectors. Then
$\det(\sum_j\lambda_jS_j)=\sum_{|Q|=n}\det(A_Q)\det(B_Q)\prod_{q\in Q}\langle c_q,\lambda\rangle$, the sum over
$n$-element subsets $Q$ of the columns. If $\det(A_Q)\det(B_Q)=0$ for every such $Q$ the determinant is the
zero polynomial, and Lemma~\ref{lem:pencil} applies.
\end{lemma}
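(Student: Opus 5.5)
The identity is a Cauchy--Binet expansion, so the plan is to write each third-mode slice of the compressed partial sum explicitly and then expand the determinant of the pencil. After compressing $T_S$ to multilinear rank $(n,n,d)$, I would write
\[
T_S=\sum_{q=1}^{k}a_q\otimes b_q\otimes c_q,
\]
where $a_q,b_q\in\bbK^n$ are the columns of $A$ and $B$, and $c_q\in\bbK^d$. Pairing the third mode with a coefficient vector $\lambda$ gives the pencil as a single matrix:
\[
\sum_j\lambda_jS_j=\sum_{q}\langle c_q,\lambda\rangle\,a_qb_q^{\mathsf T}=A\,\Lambda\,B^{\mathsf T},
\qquad
\Lambda=\mathrm{diag}\bigl(\langle c_1,\lambda\rangle,\dots,\langle c_k,\lambda\rangle\bigr).
\]
This works because the $j$-th slice is obtained by reading off the $j$-th coordinate of each $c_q$, and the map $\lambda\mapsto\sum_j\lambda_jS_j$ is linear.

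Next I would apply the Cauchy--Binet formula to the product of the $n\times k$ matrix $A\Lambda$ with the $k\times n$ matrix $B^{\mathsf T}$. It gives
\[
\det(A\Lambda B^{\mathsf T})=\sum_{|Q|=n}\det\bigl((A\Lambda)_Q\bigr)\det(B_Q),
\]
where the subscript $Q$ selects the columns indexed by $Q$, and for $B^{\mathsf T}$ the corresponding rows. Because $\Lambda$ is diagonal, the columns of $(A\Lambda)_Q$ are the columns of $A_Q$ scaled by $\langle c_q,\lambda\rangle$. Therefore
\[
\det\bigl((A\Lambda)_Q\bigr)=\det(A_Q)\prod_{q\in Q}\langle c_q,\lambda\rangle,
\]
and substituting this into the Cauchy--Binet sum yields the stated identity. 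The formula holds over any commutative ring, so it holds as a polynomial identity in $\lambda$.

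If every product $\det(A_Q)\det(B_Q)$ vanishes, then every coefficient in the sum is zero, so the determinant is the zero polynomial. Since $\bbK$ is infinite, a polynomial that vanishes identically vanishes at every point, so every combination of slices is singular. This is exactly the hypothesis of Lemma~\ref{lem:pencil}, which then gives $\rank_{\bbK}(T_S)=k$.

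The only step that needs care is checking that the compression step really yields the factorization $A\Lambda B^{\mathsf T}$, that is, that the compressed factor vectors of the two rank-$n$ modes are the columns of $A$ and $B$ as described. I expect this to be immediate from multilinearity, so there is no serious obstacle.
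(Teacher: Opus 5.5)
Your proof is correct and follows the paper's route: the paper identifies this lemma as the Cauchy--Binet formula applied to the factorization $\sum_j\lambda_jS_j=A\Lambda B^{\mathsf T}$ with $\Lambda=\mathrm{diag}(\langle c_q,\lambda\rangle)$, which is exactly your argument. One small correction: the zero polynomial vanishes at every point over any field, so you do not need $\bbK$ to be infinite at that step; the infinite-field hypothesis belongs to Lemma~\ref{lem:pencil} itself, not to this implication.
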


Proofs of these two elementary lemmas appear in Appendix~B of the mathematical
supplement.

Lemma~\ref{lem:pencil} is essentially the contrapositive of the classical fact that a concise tensor of minimal
rank is $1$-generic \cite{LandsbergBook}, and in the two-slice case the phenomenon goes back to Ja'Ja'
\cite{JaJa}; it certifies rank and not border rank, which is why we report the two radii separately.
Lemma~\ref{lem:cb} is the Cauchy--Binet formula. We enumerate the
$\binom{k}{n}$ coefficient products exhaustively and using their simultaneous vanishing as an exactly checkable
rank certificate, in $2\binom{k}{n}$ integer determinants of size $n$ per subset, with no inversion and no
fractions. These rational certificates remain valid after extension from $\bbQ$ to $\bbC$;
Lemmas~\ref{lem:kr} and \ref{lem:comm} bound border rank over an algebraic closure by construction.
Explicit small-block replacements can also be checked by exact tensor
identity; their use here does not require a complete rational rank oracle.

\section{The 48-multiplication family}\label{sec:family}

The factor spaces of $\M{4,4,4}$ have dimension $16$. Dumas, Pernet and Sedoglavic \cite{DPS1}
give the $48$-multiplication algorithm; Li, Wang and Hu \cite{LWH} give its parameterization by $48$ triples of
$4\times4$ matrices $(U_i(t),V_i(t),W_i(t))$ over $R=\bbZ[1/2,t,t^{-1}]$ with
\begin{equation}\label{eq:family}
\M{4,4,4}=\sum_{i=1}^{48}\operatorname{vec}U_i(t)\otimes\operatorname{vec}V_i(t)\otimes\operatorname{vec}W_i(t),
\end{equation}
the rational scheme. Write $D(t)$ for this decomposition, $D(\tau)$ for its
specialization. We verified \eqref{eq:family} on all $4096$ coordinates as an identity in $\bbQ(t)$; every
coefficient lies in $R$, of the form $(p+qt)/(2^at^k)$ with $a\in\{0,1,2,3\}$, $k\in\{0,1\}$. Vectors are
row-major flattenings of the stored matrices, with no further normalization.

\begin{theorem}\label{thm:sep}
For every pair $1\le i<j\le48$ and every mode there are coordinates $p<q$ whose $2\times2$ minor equals
$\pm2^at^k$, a unit of $R$.
\end{theorem}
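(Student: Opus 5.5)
This is a finite exact computation over $R=\bbZ[1/2,t,t^{-1}]$, and I would carry it out directly. There are $\binom{48}{2}=1128$ pairs and three modes. For each mode and each pair $(i,j)$, I form the $16\times2$ matrix whose columns are $\operatorname{vec}U_i(t),\operatorname{vec}U_j(t)$ (respectively $V$, $W$), using the row-major flattening fixed in \S\ref{sec:family}.

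\textbf{Step 1: the minors.} For each such matrix I compute the $\binom{16}{2}=120$ minors
\[
m_{pq}=x_p y_q-x_q y_p,\qquad p<q.
\]
Every entry has the form $(p+qt)/(2^at^k)$, so each minor is a Laurent polynomial in $t$ with coefficients in $\bbZ[1/2]$. Clearing the denominators $2^{a}t^{k}$ turns it into an integer polynomial of degree at most $2$ in $t$, times a known unit.

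\textbf{Step 2: the unit test.} The units of $R$ are exactly $\pm2^at^k$ with $a,k\in\bbZ$. This holds because $R$ is a localization of the UFD $\bbZ[t]$ at the primes $2$ and $t$, so its units are products of these primes' powers and $\pm1$. After clearing denominators, I therefore only need to check that the numerator is a single monomial $\pm2^{a'}t^{k'}$. I record the first pair $(p,q)$ for which this holds, and store it as the certificate for $(i,j,\text{mode})$.

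\textbf{Step 3: replay.} An independent checker recomputes only the stored minors, three per pair, in exact rational arithmetic in $\bbQ(t)$. Since the identity \eqref{eq:family} has already been verified exactly, the factor data is trusted input, and the replay is the whole proof.

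\textbf{Consequence and expected obstacle.} A unit minor specializes to a nonzero value at every $\tau\in\bbC^\times$. Hence pairwise separation holds uniformly on the whole parameter curve, not just at $t=2$; this is clearly the point of requiring units rather than merely nonzero minors. The main obstacle I expect is that some pair may have no monomial minor at all, even though some minor is a nonzero non-unit such as $1+t$. If that happens, the statement as phrased would be false for that pair, and only separation away from finitely many $\tau$ would survive. I would first search all $120$ minors rather than stopping at nonzero ones. I would also check that the monomial is read after normalizing signs and powers of $2$ correctly, since the coefficients carry the denominators $2^a$ with $a\le3$.
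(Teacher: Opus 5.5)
Your plan matches the paper's proof: an exhaustive exact check over all $3\binom{48}{2}=3384$ pair--mode cases, each settled by an exhibited witness minor whose value is $\pm2^at^k$, with no case settled by absence. Two small inaccuracies do not affect the argument. First, the degree-two bound in Step~1 holds only after each whole vector is rescaled by a single unit; per-entry denominators $t^k$ alone allow a minor spanning $t^{-2},\dots,t^{2}$. Second, in your hypothetical failure case the minors could still generate the unit ideal of $R$, which would give separation at every $\tau$ even though the theorem as stated would fail.
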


\begin{proof}
Exhaustive verification in exact arithmetic in $R$ over all
$3\binom{48}{2}=3384$ pairs-and-modes. For each of them a witness minor is exhibited and its value
recorded; every recorded value is $\pm2^at^k$ with $a,k\in\bbZ$, a unit of $R$. The
exponents depend on the normalization of the stored vectors, being a unit does not. No case is settled by
absence: each of the $3384$ carries a witness.
\end{proof}

A single unit minor generates the unit ideal of maximal minors, which is
exactly what independence under every ring map $R\to\bbK$ requires. A unit \emph{gcd} of the minors would not
suffice: the minors $3$ and $t-1$ have gcd $1$ yet both vanish at $t=1$ in characteristic $3$.

\begin{corollary}\label{cor:allspec}
For any field $\bbK$ with $\operatorname{char}\bbK\ne2$ and any $\tau\in\bbK^{\times}$, $D(\tau)$ is a valid
$48$-term decomposition of $\M{4,4,4}$, is pairwise separated, and is an isolated vertex of the flip graph at
length $48$. With \cite{LWH}, the characteristic-zero flip graph at length $48$ has infinitely many pairwise
inequivalent isolated vertices.
\end{corollary}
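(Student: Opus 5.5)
The corollary has four parts: validity, pairwise separation, isolation in the flip graph, and infinitely many inequivalent isolated vertices. The plan is to handle each in turn by specializing the identities over $R=\bbZ[1/2,t,t^{-1}]$ along a ring map.

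\textbf{Validity.} Fix a field $\bbK$ with $\operatorname{char}\bbK\ne2$ and $\tau\in\bbK^\times$. There is a unique ring homomorphism $\phi_\tau:R\to\bbK$ sending $t\mapsto\tau$. It is well defined because $2$ and $\tau$ are units in $\bbK$. Identity \eqref{eq:family} holds coordinatewise in $R$: we verified it in $\bbQ(t)$, every coefficient lies in $R$, and $R\subset\bbQ(t)$ is injective. Applying $\phi_\tau$ to all $4096$ coordinates shows that $D(\tau)$ is a decomposition of $\M{4,4,4}$ over $\bbK$. The structure tensor has $0/1$ entries, so it is fixed by $\phi_\tau$.

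\textbf{Pairwise separation.} This is Theorem~\ref{thm:sep} transported along $\phi_\tau$. For each pair $i<j$ and each mode, the witness $2\times2$ minor is $\pm2^at^k$, and it maps to $\pm2^a\tau^k\ne0$. So the two specialized factor vectors are linearly independent, hence nonproportional. In particular every factor vector is nonzero, and every summand is a nonzero rank-one tensor.

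\textbf{Isolation.} A flip in the sense of \cite{KM} requires two summands sharing a factor up to scalar in some mode. A reduction likewise requires a shared (proportional) factor, or more generally a coincidence among factors. Pairwise separation rules out both at every $\tau$, so $D(\tau)$ has no incident flip edges and no reduction edges. It is therefore an isolated vertex at length $48$. The step needing most care is matching exactly the flip-graph edge definitions of \cite{KM}. I would state them explicitly, with the flip condition being that $a_i\parallel a_j$ (or the analogue in another mode) for some $i\ne j$, and check that each type of edge presupposes such a proportionality.

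\textbf{Infinitely many inequivalent vertices.} Specialize to $\bbK=\bbC$, or to any characteristic-zero field. Then $\tau$ ranges over infinitely many values, and each $D(\tau)$ is an isolated vertex. By \cite{LWH} the family contains infinitely many pairwise inequivalent members under the symmetry group of $\M{4,4,4}$ (sandwiching, transpositions/cyclic symmetry, permutation, gauge). I would invoke that result directly. If an invariant is wanted instead, I would try an algebraic invariant of $D(\tau)$ that varies nonconstantly with $\tau$, such as a cross-ratio-type function of the factor ranks or traces. Since equivalence classes of decompositions form orbits and there are only finitely many discrete symmetries beyond the connected group, a nonconstant invariant taking infinitely many values gives infinitely many orbits.

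The main obstacle is this inequivalence step. The rest is mechanical specialization.
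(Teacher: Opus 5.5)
Your proposal is correct and is essentially the paper's proof. You specialize the identity over $R=\bbZ[1/2,t,t^{-1}]$ along $t\mapsto\tau$, and you push the unit $2\times2$ minors of Theorem~\ref{thm:sep} forward to get pairwise separation, which rules out every flip at length $48$. For inequivalence you cite \cite{LWH}; the paper points specifically to its Section~5.2.1 invariant, which takes infinitely many values on nonzero rational parameters.

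Keep that citation rather than your fallback invariant. Any function of factor-matrix ranks takes only finitely many values, since these ranks are bounded integers (at most two throughout the family), and individual traces are not isotropy-invariant. So neither could separate infinitely many classes.
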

\begin{proof}
The tensor identity specializes over $R$, and the unit minors of Theorem~\ref{thm:sep} give pairwise
separation, which excludes every flip at length $48$.
The invariant constructed in \cite[Section~5.2.1, pp.~22--23]{LWH} takes infinitely many values on
the nonzero rational parameters of this family, giving infinitely many distinct isotropy classes.
\end{proof}

A unit minor cannot vanish under any admissible specialization, so no exceptional nonzero parameter exists;
powers of $2$ cease to be units in characteristic $2$, which is where the certificate stops.

\begin{theorem}\label{thm:profile}
For every $\tau\ne0$ in every field of characteristic other than $2$: the first mode of $D(\tau)$ has $k$-rank
exactly $2$, with exactly $16$ dependent triples, and these partition $\{1,\dots,48\}$; the second and third
modes have $k$-rank at least $3$, every one of the $\binom{48}{3}=17296$ triples being independent in each.
Hence $\srho(D(\tau))\ge3$ and $D(\tau)$ is strongly $3$-locally rigid.
\end{theorem}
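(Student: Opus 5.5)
The statement has a finite combinatorial part and a conclusion; I will treat them separately.

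\textbf{Finite part over $R$.} For the Kruskal profile I will work over $R=\bbZ[1/2,t,t^{-1}]$, as in Theorem~\ref{thm:sep}, so that one computation covers every admissible specialization $\tau$.

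For the first mode I will enumerate all $\binom{48}{3}=17296$ triples. For each triple $\{i,j,k\}$ I form the $16\times3$ matrix of first-mode vectors.
\begin{itemize}
\item \emph{Dependent triples.} For each of the $16$ triples claimed dependent, I exhibit an explicit relation $\alpha a_i+\beta a_j+\gamma a_k=0$ with coefficients in $R$, not all vanishing under any specialization. The simplest certificate is one coefficient equal to $\pm1$, and I expect the relations to be of the form $a_k=\pm a_i\pm a_j$ up to units.
\item \emph{Independent triples.} For every other triple, I exhibit a $3\times3$ minor whose value is a unit $\pm2^at^k$ of $R$.
\end{itemize}
By the remark after Theorem~\ref{thm:sep}, a unit minor guarantees independence under every ring map $R\to\bbK$. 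That remark also explains why a unit gcd of the minors would not suffice here.

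Next I check that the $16$ dependent triples are pairwise disjoint and cover $\{1,\dots,48\}$, which is a direct set check. Together with pairwise separation (Theorem~\ref{thm:sep}), this gives $k$-rank exactly $2$ in the first mode for every admissible $\tau$. For modes two and three I repeat the enumeration and find a unit $3\times3$ minor for every triple, giving $k$-rank at least $3$.

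The main obstacle is the uniformity in $\tau$ and in the characteristic. A triple that is independent over $\bbQ(t)$ can still become dependent at special $\tau$ or modulo an odd prime. So for each of the roughly $3\cdot17296$ cases I must search the $\binom{16}{3}=560$ row choices for a minor that is exactly a unit, not merely nonzero. If some triple has no unit minor, I would fall back to showing that its ideal of maximal minors is the unit ideal in $R$, for instance through an explicit $R$-combination equal to $1$. That condition is exactly what independence under every specialization requires.

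\textbf{Conclusion via Theorem~\ref{thm:kruskal}.} Order the Kruskal ranks so that $k_A=2$, coming from the first mode, and $k_B,k_C\ge3$. Then
\[
\min\bigl(48,\ k_A+k_B-2,\ \lfloor(k_A+k_B+k_C-2)/2\rfloor\bigr)\ \ge\ \min\bigl(48,\ 3,\ \lfloor 6/2\rfloor\bigr)=3.
\]
Kruskal's theorem holds over any field, so Theorem~\ref{thm:kruskal} gives strong $3$-local rigidity, that is, $\srho(D(\tau))\ge3$.

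As a sanity check on the smallest subsets:
\begin{itemize}
\item Singletons are immediate.
\item Pairs are covered by pairwise separation.
\item For a triple that is dependent in the first mode, Kruskal still applies, since the triple's Kruskal ranks are $2$, $3$ and $3$, whose sum $8=2\cdot3+2$ meets the bound.
\end{itemize}
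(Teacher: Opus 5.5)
Your proposal is correct and follows essentially the same route as the paper. Both arguments exhibit unit $3\times3$ minors over $R$ for every non-dependent triple in all three modes, and explicit $t$-independent relations (units as coefficients) for the $16$ dependent first-mode triples, which are checked to be pairwise disjoint. Both then invoke Theorem~\ref{thm:sep} for first-mode $k$-rank at least $2$, and finish with Theorem~\ref{thm:kruskal} on the profile $(2,\ge3,\ge3)$; the paper notes explicitly that the last two modes may need to be swapped so that $k_B\le k_C$, which your ``order the Kruskal ranks'' covers implicitly.
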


\begin{proof}
The $16$ dependent triples of the first mode were found by exact elimination over $\bbQ(t)$; each satisfies a
relation with integer coefficients independent of $t$, so it is dependent under every specialization, and the
$16$ are pairwise disjoint, hence partition $\{1,\dots,48\}$. For each of the remaining
$17296-16=17280$ triples of the first mode, and for each of the $17296$ triples of the second and of the third,
a $3\times3$ minor equal to a unit $\pm2^at^k$ of $R$ is exhibited, which by the remark after
Theorem~\ref{thm:sep} gives independence under every ring map $R\to\bbK$ with $\operatorname{char}\bbK\ne2$ and
$\tau\ne0$. The verification is exhaustive over all $3\cdot17296$ triples-and-modes: each is settled by an
exhibited witness, a dependence relation in $16$ cases and a unit minor in the rest, and none by absence.
Hence the first mode has $k$-rank exactly $2$, at most $2$ because a
dependent triple exists and at least $2$ by Theorem~\ref{thm:sep}, and the second and third have $k$-rank at
least $3$. Interchange the last two modes if necessary so that $k_B\le k_C$.
Theorem~\ref{thm:kruskal} applies, its hypothesis $k_A\ge2$ holding, and gives
\[
\srho(D(\tau))\ge\min\Bigl(48,\;k_B,\;\Bigl\lfloor\tfrac{k_B+k_C}{2}\Bigr\rfloor\Bigr)\ge3,
\]
since $k_A=2$ and $k_B,k_C\ge3$; so $D(\tau)$ is strongly $3$-locally rigid.
\end{proof}

We do not claim the second and third $k$-ranks equal $3$, which would need a dependent quadruple
at every parameter; only the lower bound is used, and Theorem~\ref{thm:kruskal} gives $\srho(D(\tau))\ge3$.

\begin{corollary}\label{cor:other48}
For every such $\tau$, any $48$-term decomposition of $\M{4,4,4}$ not equal to $D(\tau)$ up to permutation and
gauge differs from it in at least four summands.
\end{corollary}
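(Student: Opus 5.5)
The plan is to deduce the corollary directly from Theorem~\ref{thm:profile}, which gives $\srho(D(\tau))\ge3$, together with the definition of the strong radius. Let $D'$ be a $48$-term decomposition of $\M{4,4,4}$, and suppose $d(D(\tau),D')\le3$. We will show that $D'$ equals $D(\tau)$ up to permutation and gauge.

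First fix a maximum matching between the summand multisets of $D(\tau)$ and $D'$. Let $S$ be the set of indices of $D(\tau)$ left unmatched, so that $|S|=m=d(D(\tau),D')\le3$. The matched summands are identical complete tensors in both decompositions. Since both decompositions sum to $\M{4,4,4}$ and both have exactly $48$ summands, subtracting the common matched part shows two things. The unmatched summands of $D'$ are exactly $m$ in number. Their sum also equals $T_S$. Hence they form a decomposition of $T_S$ of length $m\le|S|$.

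Next apply the definition of $\srho$. Because $|S|\le3\le\srho(D(\tau))$, the displayed decomposition $(d_i)_{i\in S}$ is the only decomposition of $T_S$ of length at most $|S|$, up to permutation and gauge. So the $m$ unmatched summands of $D'$ are, after permutation, the tensors $d_i$ for $i\in S$, each up to gauge. Gauge rescaling does not change the rank-one tensor $a\otimes b\otimes c$ itself, so these summands coincide with the $d_i$ as complete tensors. Reassembling with the matched part, $D'$ equals $D(\tau)$ up to permutation and gauge. This contradicts the hypothesis, so $d(D(\tau),D')\ge4$.

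The only delicate point is the counting step: $d$ is defined through a maximum matching of multisets, so the leftover block of $D'$ must be checked to have length exactly $m$. This holds because both decompositions have length $48$. Whether the matching is maximum does not matter for the final conclusion. The case $m=0$ is trivial.
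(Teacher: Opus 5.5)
Your proposal is correct and follows the paper's own proof. Both cancel the matched common summands, note that the two residual blocks are decompositions of the same partial sum $T_S$ of equal length at most three, and invoke strong $3$-local rigidity from Theorem~\ref{thm:profile}; your extra bookkeeping on the matching count and on gauge just makes explicit what the paper's one-line argument leaves implicit.
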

\begin{proof}
Cancel a maximum matching of common summands: if a distinct $48$-term decomposition reused at least $45$,
the two residual decompositions would have the same length at most three, contradicting strong
$3$-local rigidity from Theorem~\ref{thm:profile}.
\end{proof}

\begin{corollary}\label{cor:universal3}
For every such $\tau$, $\brho(D(\tau))\ge3$.
\end{corollary}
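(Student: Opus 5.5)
The plan is to check directly, for every subset $S$ of size $s\le3$, that $\brank(T_S)=s$; by Lemma~\ref{lem:subadd} it even suffices to do this for $|S|=3$. A flattening argument is enough, using only the independence data of Theorem~\ref{thm:profile}. We work over the algebraic closure of $\bbK$. The vectors of Theorem~\ref{thm:profile} are independent over $\bbK$, hence also over its closure, since independence is a minor condition. A second route is the Blomenhofer--Lovitz bound quoted after Theorem~\ref{thm:kruskal}: with $k_A=2$ and $k_B,k_C\ge3$ it gives $\brho\ge\min(48,\,k_B+1,\,\lfloor(k_B+k_C+1)/2\rfloor)\ge3$. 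I prefer the self-contained flattening argument.

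Fix $S=\{i,j,l\}$ and consider the mode-$B$ flattening
\[
T_S\mapsto\sum_{m\in S} b_m\,(c_m\otimes a_m)^{\mathsf T}.
\]
By Theorem~\ref{thm:profile}, the second-mode vectors $b_i,b_j,b_l$ are independent, and so are the third-mode vectors $c_i,c_j,c_l$. Independence of the $c_m$ forces independence of the Khatri--Rao vectors $c_m\otimes a_m$: contract against functionals $\psi_m$ on $C$ dual to the $c_m$, and use that each $a_m\ne0$. The flattening is therefore a product of a rank-$3$ matrix with columns $b_m$ and a rank-$3$ matrix with rows $(c_m\otimes a_m)^{\mathsf T}$. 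This is the argument of Lemma~\ref{lem:kr} applied locally to $S$, so its rank is $3$.

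Flattening rank is a lower bound for border rank, since the $3\times3$ minors vanish on $\sigma_2$ (the closed-condition argument of Lemma~\ref{lem:modp}). Hence $\brank(T_S)\ge3$, and $\brank(T_S)\le3$ is trivial. The same computation with two indices, or Lemma~\ref{lem:subadd}, handles pairs, and singletons are nonzero rank-one tensors. Thus every subset of size at most $3$ has border rank equal to its size, which is $\brho(D(\tau))\ge3$.

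The only point needing care is the mode choice. The first mode has $k$-rank $2$: the $16$ dependent triples there would make the mode-$A$ flattening of those $S$ have rank only $2$. So one must flatten along a mode whose own vectors and whose partner vectors are both independent on the triple. Modes $B$ and $C$ supply this uniformly for every $\tau\ne0$ and every characteristic other than $2$.
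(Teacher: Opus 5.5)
Your proof is correct, but your main route differs from the paper's. The paper's proof is one line. It feeds the Kruskal profile $(2,3,3)$ of Theorem~\ref{thm:profile} into the Blomenhofer--Lovitz border-rank transfer stated after Theorem~\ref{thm:kruskal}, getting $\min(48,4,3)=3$. That is exactly the ``second route'' you mention and set aside.

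Your main argument instead uses only the fact that a flattening rank bounds border rank. On each triple, the $b_m$ are independent, and so are the $c_m$; the latter makes the $c_m\otimes a_m$ independent because each $a_m\ne0$. Together these force mode-$B$ flattening rank $3$, and Lemma~\ref{lem:subadd} disposes of smaller subsets. This is a local form of Lemma~\ref{lem:kr} and Corollary~\ref{cor:krcons}. It needs no global Khatri--Rao nondegeneracy (Theorem~\ref{thm:main-cert} comes later in the paper), only the triple-level independences in two modes that Theorem~\ref{thm:profile} certifies at every $\tau\ne0$ in characteristic other than $2$. Its validity over the algebraic closure of such a field is also immediate, so you do not have to check the field hypotheses under which the cited border criterion is stated.

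The paper's route buys brevity and parallelism: the same $k$-rank profile gives the strong bound via Theorem~\ref{thm:kruskal} and the border bound via its border analogue in one stroke. Both give exactly $3$ here, and yours is the more self-contained proof.
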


\begin{proof}
The border transfer bound after Theorem~\ref{thm:kruskal} with profile $(2,3,3)$ gives $\min(48,4,3)=3$.
\end{proof}

\subsection{Structure}

Each dependent triple satisfies a relation with constant integer coefficients in $\{-2,-1,1,2\}$, independent
of $t$, such as $-U_1-U_{26}+2U_{28}=0$; in every triple two of the three matrices have rank one and the third,
of rank two, is that combination of them, here $U_{28}=(U_1+U_{26})/2$. The summands split into $32$ with all
three factor matrices of rank one and $16$ with all three of rank two, a split already in \cite{LWH}, whose set
$R_2$ is our $X=\{4,9,12,15,17,19,22,25,28,32,34,35,38,40,44,48\}$.

\begin{theorem}\label{thm:collapse}
For every $\tau\ne0$ in every field of characteristic other than $2$: $X$ is a
transversal of the $16$ dependent triples; its factor vectors span exactly $8$
dimensions in each of the three $16$-dimensional factor spaces, and all three
flattening ranks of $T_X$ equal $8$; and the complementary $32$ summands span
all $16$ dimensions in every mode.
\end{theorem}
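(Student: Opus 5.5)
The plan is to reduce every claim to exact computations over $R=\bbZ[1/2,t,t^{-1}]$ that survive every specialization $R\to\bbK$ with $\operatorname{char}\bbK\ne2$, $\tau\ne0$. The template is Theorem~\ref{thm:sep}: lower bounds on ranks come from an exhibited minor equal to a unit $\pm2^at^k$. Upper bounds come from explicit linear relations with constant integer coefficients, such as the dependent-triple relations of Theorem~\ref{thm:profile}, which hold identically in $t$ and therefore after any specialization.

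\emph{Transversality.} Membership of $X$ is a finite check against the $16$ disjoint triples already listed in Theorem~\ref{thm:profile}. Each triple has two rank-one and one rank-two first-mode matrices, and $X$ is exactly the set of rank-two summands. So the check is that each triple contains exactly one element of $X$.

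\emph{Spans of dimension $8$ in each mode.} For the lower bound I will exhibit, in each mode, eight summands of $X$ and an $8\times8$ minor of their factor vectors that is a unit of $R$. For the upper bound I need $8$ independent linear relations among the $16$ factor vectors of $X$ in that mode. Two routes seem available. The first is to look for relations with constant coefficients, analogous to the triple relations; the structure suggests that the rank-two factors of $X$ pair up or lie in an $8$-dimensional subspace of $4\times4$ matrices, for instance matrices supported on a fixed pattern. The second is to exhibit eight vectors of $\bbZ[1/2]^{16}$ independent of $t$ that annihilate all sixteen factor vectors identically in $\bbQ(t)$. If the span of $X$ in each mode is a fixed subspace $W_m$ independent of $t$, the annihilator route gives $\dim\le8$ uniformly, and the unit minor gives equality.

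\emph{Flattening ranks and the complement.} The flattening ranks of $T_X$ follow from Lemma~\ref{lem:kr}-type reasoning only if the Khatri--Rao vectors of $X$ are independent, which I will not assume. Instead I will argue directly:
\begin{itemize}
\item The mode-$A$ flattening has column space inside $W_A$, so its rank is at most $8$.
\item For rank at least $8$, I will exhibit an $8\times8$ minor of the flattening of $T_X$ that is a unit in $R$. Its entries are polynomial in $t^{\pm1}$ and $1/2$.
\end{itemize}
For the complementary $32$ summands I will exhibit in each mode a $16\times16$ minor of their factor matrix that is a unit of $R$.

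The main obstacle I expect is the uniform upper bound of $8$. Exhibiting unit minors is a search over coordinate choices, whereas the upper bound needs $t$-independent annihilators or constant-coefficient relations. If the span $W_m$ genuinely moves with $t$, then I will instead produce annihilators with entries in $R$ whose $8\times16$ coefficient matrix has a unit maximal minor. That still forces $\dim\le8$ under every admissible specialization, by the remark following Theorem~\ref{thm:sep}.
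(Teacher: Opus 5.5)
Your transversality check, the unit $8\times8$ minors for the core spans, and the unit $16\times16$ minors for the complement are exactly what the paper does. It records $-2^4$, $2$, $-2^7t$ for the core and $2^{17}$, $2^6t$, $-2^6t$ for the complement.

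The obstacle you anticipate for the upper bound $8$ does not arise. The core factor matrix has rank $8$ over $\bbQ(t)$, so every $9\times9$ minor is the zero element of $R$. It stays zero under every ring map $R\to\bbK$: specialization can lower rank but never raise it. No $t$-independent relations are needed. Your annihilator fallback is valid, and always available once a unit $8\times8$ minor is known (solve against its inverse), but it is superfluous.

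\textbf{The gap: the flattening lower bound.} The step that is not secured is the lower bound $8$ for the flattening ranks of $T_X$. You plan to exhibit one $8\times8$ minor of the $16\times256$ flattening that is a unit of $R$, but nothing guarantees such a minor exists.
\begin{itemize}
\item Rank $8$ under every admissible specialization only forces the $8\times8$ minors to generate the unit ideal of $R$.
\item A unit ideal need not contain a unit. For example, $3$ and $5$ generate $R$ because $2\cdot3-5=1$, yet neither is a unit.
\item Each flattening minor is a sum over $8$-subsets of products of minors of $F_A$ and of the Khatri--Rao matrix. There is no structural reason to expect a monomial $\pm2^at^k$ among them, and you give no fallback if the search fails.
\end{itemize}

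\textbf{The fix.} The missing ingredient is the one you set aside. Khatri--Rao independence need not be assumed: Theorem~\ref{thm:main-cert} proves it at every $\tau\ne0$ in characteristic $\ne2$, by unit $48\times48$ minors, and independently of the present theorem. Lemma~\ref{lem:kr} then equates each flattening rank of $T_X$ with the factor-span dimension $8$.

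In certificate form, this works as follows.
\begin{itemize}
\item Inverting the unit $48\times48$ Khatri--Rao block gives functionals $\varphi_i$ with entries in $R$ and $\varphi_i(b_j\otimes c_j)=\delta_{ij}$.
\item Assembling those for $i\in X$ into a matrix $\Psi$ gives $M\Psi=F_A$ over $R$, where $M$ is the mode-$A$ flattening.
\item Hence $\rank M\ge\rank F_A=8$ after every specialization.
\end{itemize}
The same argument works in the other two modes.
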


\begin{proof}
The $16$ dependent triples are parameter-free: each holds with constant integer coefficients independent of
$t$, so the same $16$ index sets are the dependent triples at every $\tau\ne0$. $X$ is the explicit
$16$-element index set displayed above, and it meets each of the $16$ triples exactly once. This is a check on
two fixed families of index sets; no parameter and no matrix rank enters it.

In each mode the $16\times16$ matrix of factor vectors indexed by $X$ has rank $8$ over $\bbQ(t)$, by row
reduction in the fraction field, so all its $9\times9$ minors vanish identically in $R$ and no specialization
can raise the rank. In the same three modes the $8\times8$ minors
\[
  U:\;-2^{4},\qquad V:\;2,\qquad W:\;-2^{7}t
\]
are exhibited, each a unit of $R$; as in Theorem~\ref{thm:sep} a unit is carried to a unit, hence to a nonzero
element, by every ring map $R\to\bbK$ with $\operatorname{char}\bbK\ne2$ and $t\mapsto\tau\in\bbK^{\times}$,
so the rank is exactly $8$ at every such specialization. For the complementary $32$ summands the
corresponding $16\times16$ minors are $2^{17}$, $2^{6}t$ and $-2^{6}t$, again units of $R$, so those factor
vectors span at least $16$ dimensions and hence all $16$. Finally, by Theorem~\ref{thm:main-cert} $D(\tau)$ is
Khatri--Rao nondegenerate at every such $\tau$, so Lemma~\ref{lem:kr} equates each flattening rank of $T_X$
with the corresponding factor-span dimension, namely $8$.
\end{proof}

So every subset of $X$ of size at least $9$ is factor-dependent in all three modes, beyond the reach of any
flattening certificate. The collapse is peculiar to $X$: of the $2^{16}=65{,}536$ transversals of the $16$
dependent triples that are disjoint from $X$, some flattening certifies rank $16$ on $53{,}344$; on the
remaining $12{,}192$ no flattening does, and their rank is not determined here.

\subsection{Main results}

The results split into a universal tier, resting only on the profile, and a census at one parameter extended to
generic $t$ by openness; the gap is one of computational cost. The tiers differ in field as well as in level:
the universal tier holds in every characteristic other than $2$, whereas the census is a computation at
$\tau=2$ over $\bbQ$ and its conclusions are statements in characteristic zero. We do not claim the census
levels in positive characteristic, and the reason is not caution but the shape of the certificates. What buys
the universal tier its reach is that its witnesses are \emph{units} of $R$, which no admissible specialization
can kill; the census witnesses are nonzero rationals, and a nonzero rational can vanish modulo a prime. The
line between the tiers is therefore drawn by the certificate and not by the level.

\begin{theorem}[Universal]\label{thm:tier1}
For every $\tau\ne0$ in every field of characteristic other than $2$, $D(\tau)$ is pairwise separated, strongly
$3$-locally rigid, and satisfies $\brho(D(\tau))\ge3$.
\end{theorem}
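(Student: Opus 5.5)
The statement bundles three properties that are already established separately, so the plan is to assemble them and then check that the hypotheses match. Fix a field $\bbK$ with $\operatorname{char}\bbK\ne2$ and $\tau\in\bbK^{\times}$, and specialize $D(t)$ along the ring map $R\to\bbK$, $t\mapsto\tau$. This map is well defined because $2$ and $t$ become invertible.

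\textbf{Pairwise separation.} Corollary~\ref{cor:allspec} gives this directly. Theorem~\ref{thm:sep} supplies, for each pair and each mode, a $2\times2$ minor equal to a unit $\pm2^at^k$ of $R$. Its image $\pm2^a\tau^k$ is nonzero in $\bbK$, so no two factor vectors are proportional in any mode. The same specialization also shows that $D(\tau)$ is a valid decomposition of $\M{4,4,4}$, since the identity \eqref{eq:family} holds over $R$.

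\textbf{Strong $3$-local rigidity.} This is the conclusion of Theorem~\ref{thm:profile}, which holds for every such $\tau$. The $k$-rank profile is $(2,\ge3,\ge3)$, again certified by unit minors and parameter-free dependence relations. Feeding this profile into Theorem~\ref{thm:kruskal} gives $q\le\min(48,k_B,\lfloor(k_B+k_C)/2\rfloor)$, hence $\srho(D(\tau))\ge3$.

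\textbf{Border radius.} Corollary~\ref{cor:universal3} uses the Blomenhofer--Lovitz border transfer stated after Theorem~\ref{thm:kruskal}, with profile $(2,3,3)$, to give $\brho(D(\tau))\ge3$.

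The main point needing care is field and characteristic dependence. Two inputs must hold over an arbitrary $\bbK$ of characteristic $\ne2$: Kruskal's uniqueness theorem, which is valid over any field, and the border-rank criterion, whose border rank is taken over $\overline{\bbK}$. If the cited \cite{BL} result is only stated over $\bbC$, I would fall back on a direct argument instead, since the cases are tiny.

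\textbf{Fallback for border radius.} For $|S|\le2$, separation gives $\brank(T_S)=|S|$. A rank-one tensor is nonzero, and two summands non-proportional in some mode give a flattening of rank $2$.

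For $|S|=3$, I would use the triple independence in the second mode from Theorem~\ref{thm:profile}. Independence of three $V$-vectors makes the mode-$B$ flattening of $T_S$ have rank $3$: contract with functionals dual to the $b_i$, which is legitimate since the $b_i$ are independent, so the image contains each $a_i\otimes c_i\ne0$, and these are independent as $i$ varies. Flattening rank is lower semicontinuous over $\overline{\bbK}$, so $\brank(T_S)\ge3$, and trivially $\brank(T_S)\le3$.

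With all subsets of size at most three handled, the border-radius claim follows in every characteristic other than $2$ without relying on the cited result.
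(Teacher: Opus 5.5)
Your proposal is correct and matches the paper's proof, which simply assembles Corollary~\ref{cor:allspec}, Theorem~\ref{thm:profile} and Corollary~\ref{cor:universal3}. Your flattening fallback for the border radius is also sound: it is essentially the mechanism of Lemma~\ref{lem:kr}, and it removes any dependence on the field over which the Blomenhofer--Lovitz criterion is stated. Two small points need tightening. For $|S|=2$ you need non-proportionality in at least two modes rather than in ``some'' mode, which pairwise separation supplies. For $|S|=3$, the independence of the $a_i\otimes c_i$ follows from triple independence in the third mode.
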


\begin{proof}
Corollary~\ref{cor:allspec}, Theorem~\ref{thm:profile} and Corollary~\ref{cor:universal3}.
\end{proof}

\begin{theorem}\label{thm:main-cert}
$D(t)$ is Khatri--Rao nondegenerate over $\bbQ(t)$: in each pairing the $48$ Khatri--Rao vectors are independent
in the $256$-dimensional ambient space. Moreover, in each pairing some $48\times48$ minor of the Khatri--Rao
matrix is a \emph{unit} of $R=\bbZ[1/2,t,t^{-1}]$. Hence for every field $\bbK$ with
$\operatorname{char}\bbK\ne2$ and every $\tau\in\bbK^{\times}$, the specialization $D(\tau)$ is Khatri--Rao
nondegenerate.
\end{theorem}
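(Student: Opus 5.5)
The plan is to prove Theorem~\ref{thm:main-cert} by exhaustive exact computation in the ring $R=\bbZ[1/2,t,t^{-1}]$, in the same witness-exhibiting style as Theorems~\ref{thm:sep} and~\ref{thm:profile}, followed by a specialization argument. For each of the three pairings I will form the $256\times48$ Khatri--Rao matrix whose $i$-th column is $b_i\otimes c_i$ (respectively $c_i\otimes a_i$, $a_i\otimes b_i$). Its entries are products of two coefficients of the form $(p+qt)/(2^at^k)$, so they lie in $R$. The first claim, independence over $\bbQ(t)$, follows immediately from the second. A unit minor is nonzero in $\bbQ(t)$, so the $48$ columns are independent there. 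The main work is therefore to find, in each pairing, a set of $48$ rows whose $48\times48$ minor is $\pm2^at^k$.

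To find the rows, I will first work at one numerical value, for instance $\tau=2$ or a random rational $\tau$, reduced modulo a large prime. Gaussian elimination with row pivoting selects $48$ candidate rows. I will then compute that minor exactly as a Laurent polynomial in $t$ with coefficients in $\bbZ[1/2]$, using fraction-free (Bareiss) elimination over $\bbZ[1/2][t,t^{-1}]$ or evaluation and interpolation, and check that it is a monomial $\pm2^at^k$. If the first pivot choice yields a non-unit minor, say one with a factor such as $t-1$ or an odd integer, I will search other row choices greedily. The search will prefer rows indexed by coordinate pairs where the factor vectors have unit entries. This exploits the sparsity of the rank-one and rank-two factor matrices and the fact that many columns have a single unit entry in some row, which allows block-triangular reductions.

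Once the three unit minors are exhibited, specialization is the same argument as the remark after Theorem~\ref{thm:sep}. A ring map $R\to\bbK$ with $\operatorname{char}\bbK\ne2$ and $t\mapsto\tau\in\bbK^\times$ sends units to units. Taking Khatri--Rao vectors commutes with the map, so the specialized minor is nonzero and the $48$ specialized columns are independent in each pairing. That is Khatri--Rao nondegeneracy of $D(\tau)$.

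I expect the main obstacle to be finding a minor that is actually a unit, rather than merely nonzero. A generic row choice will probably give a determinant with non-monomial factors. I would first try a structured choice: exploit the $32$ rank-one summands, whose Khatri--Rao vectors are decomposable and sparse, and pivot on their unit entries first. This should reduce to a $16\times16$ minor for the rank-two set $X$, where the unit $8\times8$ minors of Theorem~\ref{thm:collapse} may help assemble a unit determinant.
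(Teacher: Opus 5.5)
Your proposal is correct and follows the paper's proof: the paper likewise exhibits a $48\times48$ Khatri--Rao minor in each pairing that is a unit of $R$, namely $-2^{52}t^{18}$, $-2^{49}t^{10}$ and $+2^{49}t^{9}$, and then specializes along the ring map $t\mapsto\tau$. If you use evaluation and interpolation, you need the paper's degree bound: rescale each factor vector by a unit $\pm2^at^k$ so that all Khatri--Rao entries lie in $\bbZ[t]$ with degree at most $2$; each minor then has degree at most $96$, and $97$ evaluation points determine it exactly.
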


\begin{proof}
Every coefficient of the family is of the form $(p+qt)/(2^at^k)$ with $a\in\{0,1,2,3\}$ and $k\in\{0,1\}$, so
multiplying each factor vector by a suitable $\pm2^{a}t^{k}$ --- a unit of $R$, which changes no linear
dependence --- puts all $3\cdot48$ factor vectors in $R^{16}$, and in fact in $\bbZ[t]^{16}$ with entries of
degree at most $1$. The Khatri--Rao vectors of a pairing are the $48$ coordinatewise products, so they lie in
$R^{256}$ with entries of degree at most $2$, and a $48\times48$ minor is a polynomial of degree at most $96$;
each minor below was computed exactly by evaluation at $97$ integer points and Lagrange interpolation. In the
three pairings the minors on the column sets recorded in the certificate file are
\[
  (V,W):\;-2^{52}t^{18},\qquad (U,W):\;-2^{49}t^{10},\qquad (U,V):\;+2^{49}t^{9},
\]
each a unit of $R$; the pairing written $\{c_i\otimes a_i\}$ in the definition is computed here as
$\{a_i\otimes c_i\}$, and the two differ by the coordinate transposition of $\bbK^{16}\otimes\bbK^{16}$, a
linear isomorphism and so immaterial for independence. A unit is carried to a unit, hence to a nonzero element,
by every ring homomorphism; and for $\bbK$ of characteristic $\ne2$ and $\tau\in\bbK^{\times}$ the assignment
$t\mapsto\tau$ defines a homomorphism $R\to\bbK$, since $2$ and $t$ are precisely the elements inverted in $R$.
So the minor does not vanish at $\tau$, the $48$ Khatri--Rao vectors of $D(\tau)$ are independent, and
$D(\tau)$ is Khatri--Rao nondegenerate.
\end{proof}

This is the argument of Theorem~\ref{thm:sep} applied to a $48\times48$ minor rather than a $2\times2$ one: a
unit of $R$ cannot vanish under any admissible specialization, so there is no exceptional parameter and no
exceptional characteristic other than $2$. A unit \emph{gcd} of the minors would again not suffice, for the
reason given after Theorem~\ref{thm:sep}. In particular Lemma~\ref{lem:kr} and Corollary~\ref{cor:krcons} are
available at every $\tau\ne0$, so a subset whose factor vectors are independent in some mode has
$\brank(T_S)=|S|$ there.

Which certificate applies determines which radius is bounded, so we name it at each use.

\begin{theorem}[Lovitz--Petrov transfer]\label{thm:lp}
Let $D$ be an ordered length-$r$ decomposition, so that in particular every summand $d_i$ is nonzero, and for
$U\subseteq\{1,\dots,r\}$ and each mode $m$ put $e_m(U)=|U|-\dim\spn\{\text{mode-$m$ factors indexed by }U\}$.
If
\[
  e_A(U)+e_B(U)+e_C(U)\;\le\;|U|-2\qquad\text{for every $U$ with }2\le|U|\le k,
\]
then $D$ is strongly $k$-locally rigid and $\rank(T_S)=|S|$ for every $|S|\le k$; hence $\Delta_D(k)=0$ and
$\rrho(D)\ge k$. The criterion bounds rank only and gives no information about $\brank$.
\end{theorem}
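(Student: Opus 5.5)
The plan is to reduce the theorem to the Lovitz--Petrov uniqueness criterion, applied to each partial sum $T_S$ with its displayed decomposition $(d_i)_{i\in S}$. That criterion states the following. Let $E=(d_i)_{i\in S}$ be a decomposition of length $s$, and put $e_m(U)$ as in the statement. If for every $U\subseteq S$ with $2\le|U|\le s$ we have $\sum_m e_m(U)\le|U|-2$, then $E$ is the unique decomposition of $T_S$ of length at most $s$, up to permutation and gauge. In particular $\rank(T_S)=s$. The hypothesis of the theorem is hereditary: every $U\subseteq S$ with $|S|\le k$ has $|U|\le k$. So the criterion applies verbatim to every $S$ with $2\le|S|\le k$. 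Singletons are immediate, since a nonzero rank-one tensor has rank one and a unique rank-one expression up to gauge.

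The consequences then follow from Definition~\ref{def:rho}. Uniqueness for every $|S|\le k$ is exactly $\srho(D)\ge k$. Uniqueness includes minimality of length, so $\rank(T_S)=|S|$, hence $\delta_D(S)=0$ for all $|S|\le k$, so $\Delta_D(k)=0$ and $\rrho(D)\ge k$. Alternatively, $\srho(D)\le\rrho(D)$ gives the rank claim directly.

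For the remark that nothing is said about $\brank$, I would note that the criterion is a statement about rank decompositions only. Its proof splits a putative shorter decomposition according to the span structure of the factors, and that argument has no closed-condition analogue. The paper's separate radii reflect this: border rank may drop below $|S|$ even when rank and uniqueness hold. No further argument is needed, since the claim is only that no border information is asserted.

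The main obstacle is matching conventions with Lovitz--Petrov. Their theorem should be quoted in the form above, with the inequality over all subsets of size at least $2$ of the given decomposition and nonzero summands. One must also check that their uniqueness statement includes the exclusion of strictly shorter decompositions, not only those of equal length. If their version only asserts uniqueness among length-$s$ decompositions, I would supplement it as follows. Suppose $T_S$ had a decomposition of length $s'<s$. Pad it with terms $\lambda x\otimes y\otimes z$ and $-\lambda x\otimes y\otimes z$ to get a length-$s$ decomposition that is not equivalent to $E$, since it contains two proportional summands. This contradicts uniqueness at length $s$, provided $s'\le s-2$. If $s'=s-1$, split one summand $a\otimes b\otimes c$ as $a\otimes b\otimes c/2+a\otimes b\otimes c/2$, which again gives an inequivalent length-$s$ decomposition. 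This splitting needs $\operatorname{char}\bbK\ne2$, or any scalar split $\alpha+\beta=1$ with $\alpha,\beta\ne0$, available in every field with more than two elements; over $\mathbb{F}_2$ one instead uses $s'\le s-2$ padding after one further split. Hence $\rank(T_S)=|S|$.
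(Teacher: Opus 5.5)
Your proposal is correct and takes the paper's route: apply Lovitz--Petrov's Theorem~2 to each displayed partial sum $T_S$ with $|S|\le k$. This works because the hypothesis is hereditary, and their condition $2|U|\le\sum_m(d^U_m-1)+1$ becomes $\sum_m e_m(U)\le|U|-2$ once you substitute $d^U_m=|U|-e_m(U)$. The convention check you flag is resolved exactly as you hoped: their notion of a unique tensor rank decomposition already covers every competitor with at most $|S|$ nonzero product terms, so your padding fallback is not needed.
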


\begin{proof}
Fix $S$ with $|S|\le k$ and apply \cite[Thm.~2]{LovitzPetrov} to the $|S|$ product tensors $D_S$, which are
nonzero because the factors of an ordered decomposition are (Section~\ref{sec:local}), as that theorem requires
of its multiset; its hypothesis $2|U|\le\sum_m(d^U_m-1)+1$ for all $U\subseteq S$ with $2\le|U|\le|S|$ is the
displayed inequality after $d^U_m=|U|-e_m(U)$. Its conclusion is that $D_S$ is the unique tensor rank
decomposition of $T_S$, which in \cite{LovitzPetrov} means that every decomposition of $T_S$ into $r\le|S|$
nonzero rank-one tensors has $r=|S|$ and agrees with $D_S$ as a multiset. The clause $r=|S|$ gives
$\rank(T_S)=|S|$; the multiset clause gives strong $k$-local rigidity.
\end{proof}

\begin{theorem}[Census, border]\label{thm:rho6}
$\bar\Delta_{D(2)}(k)=0$ for all $k\le\Kcert$ over $\bbC$, i.e.\ $\brho(D(2))\ge\Kcert$: every subset of at
most $\Kcert$ summands has border rank, and hence rank, equal to its cardinality.
\end{theorem}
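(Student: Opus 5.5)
By Lemma~\ref{lem:subadd} it suffices to treat subsets of size exactly $\Kcert$. For each of the $\binom{48}{8}$ subsets $S$ of $D(2)$ I will certify $\brank(T_S)=8$ over $\bbC$. Since $\rank\ge\brank$ and $\rank(T_S)\le|S|$, this gives both claims. The work is an exhaustive census over the subsets, using a cascade of border-rank certificates, cheapest first. Each certificate yields a \emph{lower} bound on border rank valid over $\bbC$, so a positive verdict suffices; any undecided subset is only passed to the next tier.

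\textbf{Tier 1: factor rank.} By Theorem~\ref{thm:main-cert}, $D(2)$ is Khatri--Rao nondegenerate. Corollary~\ref{cor:krcons} then settles every $S$ whose factor vectors in some mode are independent: this is an $8$-column rank test on $16$-dimensional vectors. By Theorem~\ref{thm:profile} the first mode already fails only when $S$ contains one of the $16$ dependent triples. In the other two modes dependence among $8$ vectors in $\bbK^{16}$ should be rare, except inside the collapsed core $X$ of Theorem~\ref{thm:collapse}, where eight vectors can still be independent but larger subfamilies cannot. I expect the survivors of Tier~1 to be a small residual family concentrated around $X$ and the dependent triples.

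\textbf{Tier 2: commutator and Young flattening.} For each survivor I apply the commutator certificate, Lemma~\ref{lem:comm}. I project each mode by integer matrices to dimension $n$, where $n$ is the largest flattening rank (typically $7$), and pick integer slice combinations $T_a,T_b,T_c$. I need $\det T_a\not\equiv0$ and a nonzero entry of $g$ modulo a large prime $p$; the case $r=1$ then gives $\brank\ge n+1=8$. Following the remark after Lemma~\ref{lem:comm}, I will try all admissible assignments of the slice mode, not a fixed order. For survivors still undecided I use the Young flattening $T_A^{\wedge p}$ of \cite{LO15}, after projecting to a small $a$. Its rank is computed modulo $p$, which by Lemma~\ref{lem:modp} is legitimate since only lower bounds are needed.

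\textbf{Main obstacle.} The obstacle is completeness of the census: every residual subset must receive an explicit witness, with none settled by absence. The danger is subsets in which every slice pencil is singular in every mode, so that Lemma~\ref{lem:comm} cannot be applied. Lemma~\ref{lem:pencil} does not help there, because it certifies rank, not border rank. For such cases I would first vary the projections randomly, and then try Young flattenings with different $p$ and different choices of the distinguished mode. I would record the projection matrices, prime and nonvanishing minor for each subset, so that the certificate file can be replayed. The theorem then follows from the statement that all subsets of size $8$ are covered.
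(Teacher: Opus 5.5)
Your architecture matches the paper's proof: an exhaustive factor-rank scan justified by Khatri--Rao nondegeneracy, then commutator certificates, then Young flattenings where the slice pencils are singular. Restricting to level $8$ via Lemma~\ref{lem:subadd} is legitimate; the paper's separate work at levels $6$ and $7$ is redundant for this statement.

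\textbf{Where Tier~2 fails.} The $r=1$ form at $n=7$ needs two modes of flattening rank $7$ with an invertible slice combination. It does cover the $151{,}664$ level-$8$ survivors of that kind, but it provably cannot reach two other classes.
\begin{itemize}
\item \emph{The $(6,6,7)$ class.} There are $712$ survivors whose flattening ranks are a permutation of $(6,6,7)$. Every pairing contains a rank-$6$ mode, so no $7\times7$ slice is invertible under any projection. In a $6$-dimensional pairing a nonzero commutator entry gives only $\brank\ge7$. The paper certifies these with the rank form of Lemma~\ref{lem:comm}, which your plan never invokes: in the rank-$6$ pairing some commutator has rank $4$, so $\brank\ge 6+\lceil 4/2\rceil=8$.
\item \emph{The singular-pencil class.} There are $2560$ survivors with profile $(7,7,6)$ or $(7,6,7)$ for which every combination of the slices along the rank-$6$ mode is singular; all Cauchy--Binet coefficients of Lemma~\ref{lem:cb} vanish. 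This singularity is unchanged by any projection injective on the seven-dimensional spans, and non-injective ones only lower every slice rank. So re-randomizing projections at $n=7$ is futile, and dropping to $n=6$ would again need commutator rank at least $3$.
\end{itemize}

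\textbf{What is still needed.} For the singular class, your Young-flattening fallback is exactly the paper's certificate. Distinguish the rank-$6$ mode and take $a=6$, $p=2$, with divisor $\binom52=10$. The Young flattenings have rank $77>70$, so $\brank\ge8$. However, you only say you would try this, and for a census theorem the witnesses are the proof. You must either invoke the commutator-rank form on the $(6,6,7)$ class, or exhibit Young ranks exceeding $7\binom{a-1}{p}$ there. You must also exhibit such ranks on the singular class.

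\textbf{A smaller error.} Theorem~\ref{thm:profile} classifies dependent \emph{triples} only. It does not say that mode-$A$ dependence of an eight-set requires a triple. Four-element mode-$A$ circuits inside $X$ also produce dependence with no triple present. Since Tier~1 is an exhaustive scan, this does not break the argument, but your predicted location of the survivors is not justified by that theorem.
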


\begin{proof}
By Theorem~\ref{thm:main-cert} and Corollary~\ref{cor:krcons} a subset is certified once some mode has full
factor rank; Lemma~\ref{lem:modp} makes the modular verdicts valid over $\bbQ$. Exhaustive scans modulo $p=1000003$ leave, at levels $6$, $7$ and $8$, exactly $32$, $4240$ and $154\,936$
subsets not reached by that test; by Lemma~\ref{lem:subadd} it suffices to certify these.

At level $6$ all $32$ have factor rank $5$ in every mode and are certified by Lemma~\ref{lem:comm}, verified
exactly over $\bbQ$. At level $7$ all $4240$ are certified: $4176$ by Lemma~\ref{lem:comm}, and the remaining
$64$, which admit no invertible slice combination, by Lemma~\ref{lem:pencil} for rank and by the Young-flattening bound
for border rank, each having Young-flattening rank $41$ against the divisor $\binom{4}{2}=6$ for a distinguished mode of
dimension $5$, whence $\brank\ge\lceil41/6\rceil=7$.

At level $8$, of the $154\,936$, $151\,664$ are certified by a nonzero commutator and $712$ by the full
commutator-rank inequality. Each of the $712$ satisfies that inequality with a margin of at least one: taking
the two compressed modes of dimension $6$ as the matrix modes, some commutator of conjugated slice combinations
has rank $4$, against the rank $3$ needed for the bound to reach $8$. The last $2560$ admit
no invertible slice combination; their rank follows from Lemma~\ref{lem:pencil}, whose hypothesis was verified
for each of them by two independent exact methods, the criterion of Lemma~\ref{lem:cb} in integer arithmetic
with every one of the $\binom{8}{7}=8$ products $\det(A_Q)\det(B_Q)$ vanishing, and full symbolic expansion of
the determinant over $\bbQ$; neither uses sampling or modular reduction. Their border rank follows from the
Young-flattening bound with the rank-$6$ mode distinguished, where the divisor is $\binom{5}{2}=10$ and each has Young-flattening
rank $77$, whence $\brank\ge\lceil77/10\rceil=8$. The two levels use different distinguished modes and so
different divisors. Hence level $8$ is certified for both radii. No subset of size at most $8$ is reducible, and no
subset of any size admits a Khatri--Rao regrouping reduction.
\end{proof}

\begin{theorem}[Rank radius and identifiability]\label{thm:strong9}
$D(2)$ is strongly $\KcertLP$-locally rigid over $\bbC$, and $\Delta_{D(2)}(\KcertLP)=0$.
\end{theorem}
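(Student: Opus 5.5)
The natural route is the Lovitz--Petrov transfer, Theorem~\ref{thm:lp}, with $k=\KcertLP$. That theorem delivers strong $\KcertLP$-local rigidity and $\rank(T_S)=|S|$ for all $|S|\le\KcertLP$, hence $\Delta_{D(2)}(\KcertLP)=0$. So the work is to verify the combinatorial-linear inequality
\[
e_A(U)+e_B(U)+e_C(U)\le |U|-2
\]
for every $U\subseteq[48]$ with $2\le|U|\le 9$, where the factor spans are computed at $\tau=2$ over $\bbQ$.

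Since the inequality is monotone-unfriendly (it must be checked for every $U$, not only maximal ones), I would organize the check by the size of the excess. For small $U$ the profile of Theorem~\ref{thm:profile} does most of the work. In the first mode, dependencies come only from the $16$ disjoint dependent triples, so $e_A(U)$ is bounded by the number of triples contained in $U$ plus higher dependencies. In modes $B$ and $C$ every triple is independent, so $e_B=e_C=0$ for $|U|\le3$. For $|U|=2,3$ the inequality then reads $e_A(U)\le |U|-2$, which holds by pairwise separation and the fact that only disjoint triples are dependent.

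For $4\le|U|\le 9$ I would run an exhaustive exact enumeration over all $\sum_{s=4}^{9}\binom{48}{s}$ subsets. The count is large, about $2\cdot10^9$ at $s=9$, so I would prune. A subset violates the inequality only if its total excess is at least $|U|-1$. That forces substantial dependence in at least two modes, and in particular forces each violating $U$ to contain many small dependent subfamilies. I would therefore enumerate first the minimal dependent subfamilies (circuits) of size at most $9$ in each mode, via modular rank computations with exact confirmation of any dependence found. Only unions of circuits can carry excess, so only subsets built from circuits in at least two modes need full rank computation.

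Modular ranks are only upper bounds on excess when used to \emph{certify} independence. Reduction can lower rank, so it can only overstate $e_m$. A modular pass that finds the inequality satisfied is therefore sound, and only apparent violations need exact rechecking over $\bbQ$.

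The main obstacle is the region near $X$. By Theorem~\ref{thm:collapse}, subsets of $X$ of size $9$ have excess at least $1$ in all three modes, giving $e_A+e_B+e_C\ge3$ against the allowance $|U|-2=7$. That case is still fine, but mixed subsets meeting $X$ heavily together with their triple partners could accumulate larger excess. If some $U$ with $|U|\le9$ fails the inequality, I would fall back to the census certificates of Theorem~\ref{thm:rho6} combined with Kruskal-type splitting on that support. I would first try to show that the census of failing supports is empty, since the paper's choice of $\KcertLP=9$ as the LP threshold suggests exactly this.
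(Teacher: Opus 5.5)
Your proposal is correct and follows the paper's proof. Like the paper, you apply Theorem~\ref{thm:lp} with $k=\KcertLP$ and verify the span inequality by an exhaustive modular scan of all subsets of sizes $2$ through $\KcertLP$. The lift is sound in the direction you state: modular rank bounds rational rank from below, so modular excess can only overstate $e_m$. Your circuit pruning is only an efficiency device, justified because pairwise separation gives $e_m(U)\le|U|-2$ in each mode.

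Two asides are wrong as stated, though neither is needed once the scan passes. A lower bound $e_A+e_B+e_C\ge3$ on nine-subsets of $X$ cannot show the inequality holds there; that needs an upper bound. And falling back on Theorem~\ref{thm:rho6} would give neither level nine nor uniqueness.
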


\begin{proof}
By Theorem~\ref{thm:lp} it suffices that $e_A(U)+e_B(U)+e_C(U)\le|U|-2$ for every $U$ with $2\le|U|\le\KcertLP$.
An exhaustive scan of all $\sum_{k=2}^{\KcertLP}\binom{48}{k}=2\,142\,281\,526$ subsets verifies it. The scan
computes the three factor ranks of each subset modulo $p=1000003$; a rank modulo $p$ is a lower bound for the
rank over $\bbQ$, hence a corank modulo $p$ is an upper bound for the corank over $\bbQ$, so the verdict is
valid over $\bbQ$, and over $\bbC$ since matrix rank is insensitive to field extension. Equality is attained
--- the criterion is tight --- on $16$ subsets at $|U|=3$, $16$ at $|U|=6$, $28$ at $|U|=8$ and $176$ at
$|U|=9$. All $236$ of these, together with the $1128$ pairs $|U|=2$, which are tight by construction whenever
no two summands are proportional in any mode and so carry no information, and all $21\,396$ subsets of slack at
most one, were re-verified in exact rational arithmetic. The scan was run independently on two architectures
with identical output. The bound is sharp for this method, and sharp in a form that names the residue exactly:
see Proposition~\ref{prop:lp10}.
\end{proof}

The two levels differ because the certificates differ. The census bounds $\brank$ and reaches $\Kcert$;
Theorem~\ref{thm:lp} bounds $\rank$ and reaches $\KcertLP$; where $\brho(D(2))$ lies between $\Kcert$ and
$\rrho(D(2))$ is open. Theorem~\ref{thm:strong9} is also an independent proof that $\rrho(D(2))\ge\KcertLP$: it uses matrix ranks
and a published theorem and shares no argument with the census, though it shares its input data, the factor
vectors of \cite{LWH} at $t=2$. The census remains necessary for the border radius, which
Theorem~\ref{thm:lp} cannot reach.

\begin{proposition}\label{prop:lp10}
Of the $\binom{48}{10}=6\,540\,715\,896$ ten-subsets of $D(2)$, exactly $6\,540\,715\,864$ satisfy the
hypothesis of Theorem~\ref{thm:lp} and therefore have rank $10$. The remaining $32$ all lie inside the core $X$
of Theorem~\ref{thm:collapse}, each with all three factor ranks equal to $7$. The core $X$ partitions into four
\emph{cycle-blocks} $A=\{4,17,34,40\}$, $B=\{9,25,28,35\}$, $C=\{12,19,22,32\}$ and $D=\{15,38,44,48\}$, each of
mode ranks $(4,4,4)$; no two of them together span $(8,8,8)$, and the six pairwise unions fall into three
profiles of two pairs each, $A\cup B$ and $C\cup D$ at $(6,8,8)$, $A\cup C$ and $B\cup D$ at $(8,4,8)$, and
$A\cup D$ and $B\cup C$ at $(8,8,4)$. Every one of the $32$
contains exactly one complete block, and they split $16/16$ by how the remaining six elements meet the other
three, $(2,2,2)$ against $(3,3,0)$. These occupancy profiles are necessary properties of the $32$ exceptions,
not sufficient conditions for membership in that list.
The mode-$U$ circuits of $X$ are a different partition into $4$-subsets,
of mode-$U$ rank $3$, and also align with the split; the blocks are the objects meant here. Hence $\rrho(D(2))\ge10$ if and only if those $32$ subsets have rank $10$.
\end{proposition}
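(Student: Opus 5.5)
The proposition is a finite computational statement about $D(2)$, and I would prove it by an exhaustive scan structured like the proof of Theorem~\ref{thm:strong9}, followed by exact re-verification of the exceptional residue.

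\textbf{Step 1: scan the ten-subsets.} For each of the $\binom{48}{10}$ ten-subsets $S$, test the hypothesis of Theorem~\ref{thm:lp} at level $10$. Since every proper sub-$U$ with $2\le|U|\le9$ already satisfies the inequality by the scan underlying Theorem~\ref{thm:strong9}, the only new condition is the one at $U=S$ itself: $e_A(S)+e_B(S)+e_C(S)\le 8$. So the scan only needs the three factor ranks of each ten-subset modulo $p=1000003$.

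A modular rank is a lower bound for the rational rank. Hence a modular pass is a valid certificate over $\bbQ$, and also over $\bbC$. Failures are only flags, so every subset failing modulo $p$ must be recomputed in exact rational arithmetic. The claim is that exactly $32$ subsets fail exactly, each with $e=3$ in all three modes, that is, factor ranks $(7,7,7)$ and sum $9>8$.

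For the containment in $X$, I would check directly that each failing subset is a subset of $X$. This is consistent with Theorem~\ref{thm:collapse}: outside $X$ the complementary factors span all dimensions, so defects are small. To cut the scan, I would prune: any subset meeting the complement of $X$ in many elements has a mode of rank at least $8$ cheaply.

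\textbf{Step 2: the block structure.} Once the $32$ are in hand, the structural statements about the blocks $A$, $B$, $C$, $D$ are finite exact rank computations on subsets of $X$:
\begin{itemize}
\item the four block profiles $(4,4,4)$;
\item the six pairwise unions and their profiles;
\item for each of the $32$, the count of complete blocks it contains;
\item the occupancy vector of its remaining six elements, either $(2,2,2)$ or $(3,3,0)$ up to order.
\end{itemize}
Since $X$ has only $\binom{16}{10}=8008$ ten-subsets, an independent exact enumeration inside $X$ cross-checks the count $32$ with no modular step. The mode-$U$ circuit partition is likewise read off from exact elimination of the $16$ mode-$U$ factors of $X$.

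\textbf{Step 3: the final equivalence.} The forward direction is trivial. Conversely, suppose the $32$ subsets have rank $10$. All other ten-subsets have rank $10$ by Theorem~\ref{thm:lp}, so every ten-subset has rank equal to its cardinality. Lemma~\ref{lem:subadd} then gives rank equal to cardinality for every smaller subset, hence $\rrho(D(2))\ge10$.

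\textbf{Main obstacle.} The hard part is sheer volume: about $6.5\times10^9$ subsets, each needing three modular rank computations. I would handle it with incremental Gaussian elimination along a lexicographic enumeration, reusing the reduced prefix of nine vectors. I would also require agreement between independent runs on two architectures, and restrict exact arithmetic to the flagged and near-tight cases.
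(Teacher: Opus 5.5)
Your proposal is correct and follows essentially the paper's route. After Theorem~\ref{thm:strong9} only $U=S$ is new, so you run an exhaustive modular scan of the level-ten Lovitz--Petrov inequality, confirm the $32$ failures and all block and pairwise-union ranks exactly (cross-checked by an independent exact enumeration inside $X$), and obtain the final equivalence from subadditivity. The paper's proof additionally certifies the $32$ by Young flattenings, which the stated equivalence does not require.

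One caution: your suggested pruning is not a sound skip criterion as stated. A subset with a single mode of rank at least $8$ can still fail, since passing requires $d_A+d_B+d_C\ge22$. Either drop the pruning, or prune only on a certified lower bound for the full factor-span sum.
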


\begin{proof}
The cycle-block description organizes the uniqueness proof below; the present rank proof does not use it.
Every rank quoted for it (each block, $X$, and all six pairwise unions) is verified
directly over $\bbQ$ at $t=2$ by exact Gaussian elimination, independently of this proof.
The residue is where it has to be. By Theorem~\ref{thm:collapse} and Lemma~\ref{lem:kr} every flattening rank
of every subset of $X$ is at most $8$, and on these $32$ it is $7$, so a ten-subset enters this residue
precisely because its factor spans collapse --- which is the same collapse that puts the flattening bound out
of reach. Lemma~\ref{lem:pencil}, which opens the analogous families one level down, is unavailable for a
different reason: it asks for multilinear rank $(n,n,d)$ with $n=|S|-1=9$, and these have multilinear rank
$(7,7,7)$. What the proposition buys is a reduction, and a sharp one: whatever settles level $10$ has to settle
these $32$ named subsets and nothing else in $\binom{48}{10}$, and a single one of them of rank $9$ would be a
$47$-term decomposition of $\M{4,4,4}$. What settles them is the Young flattening bound of
Section~\ref{sec:cert}. On a core of multilinear rank $(7,7,7)$ all three modes may play the distinguished
role, so we report for each subset the maximum over the three admissible choices; each choice gives a valid
lower bound $\brank\ge\lceil\rank(T^{\wedge p}_m)/\binom{a-1}{p}\rceil$, so their maximum is one too, and a
single mode already suffices: mode $A$ alone gives $190$ on sixteen of the subsets and $194$ on the other
sixteen, both above the threshold below, computed by a second implementation calibrated against the level-$7$
and level-$8$ ranks reported here. With
$a=7$ and $p=3$ the induced map $\Lambda^{3}A\otimes B^{*}\to\Lambda^{4}A\otimes C$ is $245\times245$, the
divisor is $\binom{6}{3}=20$, and reaching $10$ requires Young rank at least $181$; the maxima are $194$ on
sixteen of the subsets and $195$ on the other sixteen, so $\lceil194/20\rceil=\lceil195/20\rceil=10$. With
$a=7$ and $p=2$ the map is $147\times245$, the divisor is $\binom{6}{2}=15$, reaching $10$ requires at least
$136$, and the maxima are $140$ and $141$. Either parameter choice suffices; both were computed. Hence
$\brank(T_S)\ge10$ and so $\rank(T_S)=10$ on all $32$. Levels $7$, $8$ and $10$ therefore use three different
parameter choices and three different divisors, $\binom42=6$, $\binom52=10$ and $\binom63=20$.
\end{proof}

The scan behind the count is not a search that returned nothing. The same program returns $0$ at every level
from $2$ to $\KcertLP$ and $32$ at level $10$, and those $32$ agree as a set, with empty symmetric difference,
with the list produced independently in exact rational arithmetic from the structure of $X$.

\begin{theorem}[Global uniqueness through level ten]\label{thm:strong10}
$D(2)$ is strongly $10$-locally rigid over $\bbC$.
\end{theorem}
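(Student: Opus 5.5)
By Theorem~\ref{thm:strong9} every subset of size at most $\KcertLP$ already satisfies the Lovitz--Petrov hypothesis of Theorem~\ref{thm:lp}. For ten-subsets, Proposition~\ref{prop:lp10} shows that all but $32$ satisfy that hypothesis and so are uniquely decomposable. The statement therefore reduces to showing that each of the $32$ exceptional ten-subsets $S\subseteq X$ has its displayed decomposition as its only decomposition of length at most $10$, up to permutation and gauge. Proposition~\ref{prop:lp10} already gives $\rank(T_S)=10$, so only competitors of length exactly $10$ need be ruled out. By the symmetry of the profiles we may use the block structure: $S$ contains one complete cycle-block, and the remaining six elements meet the other blocks in pattern $(2,2,2)$ or $(3,3,0)$. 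I would hope that the $16$ cases in each pattern are related by the permutation action on blocks induced by the (cyclic/transpose) symmetries of $\M{4,4,4}$ that preserve $D(2)$; failing that, I would treat all $32$ by the same exact computation.

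Let $D'=(a'_j\otimes b'_j\otimes c'_j)_{j\le10}$ be a competitor for $T_S$. First I would pin down its factor spans. By Lemma~\ref{lem:kr} each flattening of $T_S$ has rank $7$, and since $D'$ has length $10=\rank$, its mode-$m$ factors span exactly the $7$-dimensional column space of that flattening. So $D'$ lives in the compressed $7\times7\times7$ space. Next comes a zero-corner completion argument. Cancel a maximum matching between $D'$ and $D_S$ to leave unmatched blocks $S_0\subseteq S$ and $D'_0$ of equal size $m$, with $T_{S_0}=T_{D'_0}$. If $m\le\KcertLP$, then $T_{S_0}$ is a sum of at most $9$ original summands, and Theorem~\ref{thm:strong9} forces $D'_0=D_{S_0}$, contradicting maximality unless $m=0$. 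Hence only $m=10$ survives, that is, a competitor sharing \emph{no} summand with $D_S$.

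To exclude a fully disjoint competitor I would use the saturated projected kernel. Contract $T_S$ against a functional $\gamma$ on the third mode to get the pencil $T_S(\gamma)=\sum_{i\in S}\gamma(c_i)\,a_ib_i^{\mathsf T}$, of rank at most $7$. Any competitor's $a'_j\otimes b'_j$ must lie in the $10$-dimensional space $\spn\{T_S(\gamma)\}$, which equals the span of the $a_i\otimes b_i$ by Khatri--Rao nondegeneracy (Theorem~\ref{thm:main-cert}). The rank-one elements of that span are then computed exactly: the variety of rank-one matrices in $\spn\{a_i\otimes b_i: i\in S\}$ is cut out by $2\times2$ minors, and I would show by a Gröbner or elimination computation over $\bbQ$, saturated with respect to the irrelevant ideal, that its only points are the ten lines $[a_i\otimes b_i]$. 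Given that, every competitor summand has paired factor proportional to some $a_i\otimes b_i$. Independence of the Khatri--Rao vectors then forces the third factors to match $c_i$ up to gauge, which is global paired-factor recovery, so $D'=D_S$. If the rank-one locus is larger in some pairing, I would repeat with the other two pairings and intersect, the stacked version.

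The main obstacle is this last step. The span of $\{a_i\otimes b_i\}_{i\in S}$ sits inside $7\times7$ matrices where rank-one points need not be isolated to the original ones, precisely because factor spans collapse. The first thing I would try is the exact rank-one-locus computation for each of the $32$ subsets in all three pairings, checking that the saturated ideal has degree $10$ with the expected points. Where a pairing admits extra rank-one points (for example from the $(4,4,4)$ block, whose own span may contain a positive-dimensional family), I would fall back on requiring the competitor to be simultaneously compatible across pairings, combined with the zero-corner reduction above, to exclude them.
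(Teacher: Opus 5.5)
Your reductions are sound: Theorem~\ref{thm:strong9} and Proposition~\ref{prop:lp10} leave only the $32$ exceptional ten-sets, Proposition~\ref{prop:lp10} gives rank ten, and cancelling a maximum matching against strong $9$-rigidity leaves only competitors that share no summand with $D_S$.

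The gap is in the step meant to exclude such a competitor. The span of the contractions $T_S(\gamma)$ is the image of the third-mode flattening. By Lemma~\ref{lem:kr} it has dimension $\dim\spn\{c_i\}_{i\in S}=7$, not $10$. It is therefore a proper subspace of the displayed paired span $\spn\{a_i\otimes b_i\}_{i\in S}$, and that paired span is not an invariant you can read off $T_S$ by flattening.

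For a competitor you learn only that this seven-dimensional slice space lies inside $\spn\{a'_j\otimes b'_j\}$. Its ten third factors $c'_j$ lie in a seven-dimensional space, so no functional isolates a single term, and nothing forces $a'_j\otimes b'_j$ into the displayed span. Your rank-one-locus computation, whether by Gr\"obner bases or by the inequalities \eqref{eq:pairedsylvester}, then excludes only extra rank-one points inside that span, not competitor pairs outside it. Repeating in the other pairings has the same defect. This is exactly the collapse $7+7+7=21<22$ that defeats Theorem~\ref{thm:lp}, so the argument as proposed cannot close.

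What is missing is an intrinsic mechanism that either places every competitor's pairs in the displayed span or bypasses that question. The paper splits the $32$ supports by profile.

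For the sixteen $(2,2,2)$ supports it uses saturated Koszul--Young flattenings. Two projections of the seven-dimensional $A$-support to $\bbC^5$, with exterior degree $2$, give maps of rank $60=10\binom42$. Hence every summand of any ten-term competitor has image of full rank $6$ inside the saturated image, and its pair $a'\otimes c'$ is annihilated by the pulled-back left-kernel constraints $G_s$. The stacked constraints on the $49$ original $A$--$C$ coordinates have rank $39$, so their kernel is exactly the displayed span. Only at that point do the $1013$ Sylvester inequalities finish the argument, via Lemma~\ref{lem:saturatedprojection}.

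For the sixteen $(3,3,0)$ supports the paper uses a genuinely different argument, the zero-corner completion of Theorem~\ref{thm:zerocorner}. The full block is a rank-four core $H$, and the two triples are rank-three arms in complementary quotient modes. A shear-independent invertible slice, together with commutator equations, shows $\rank H(S,R)\ge4$ with equality only at zero shear. This yields purity of every minimal competitor and hence uniqueness.

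Your ``zero-corner reduction'' is only the matching cancellation and does not supply this argument. Your fallback of intersecting pairings supplies neither mechanism.
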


\begin{proof}
Theorem~\ref{thm:strong9} treats the lower levels. At level ten, Theorem~\ref{thm:lp}
gives uniqueness except for the $32$ subsets in Proposition~\ref{prop:lp10}.
The two classes in that proposition are certified separately, using their full rational
factor coordinates before compression.

For the sixteen subsets of cycle-block profile $(3,3,0)$, the full block supplies a rank-four
core, and the two triples supply rank-three arms in complementary quotient modes.
Each of the three blocks has independent factor families. Theorem~\ref{thm:zerocorner}
reduces global uniqueness to the unique rank-four completion of this core.
For each subset, an invertible slice is constant on all $24$ shear variables;
commutators give a linear system of rank $22$, and after restriction to its two-dimensional
kernel the remaining commutator entries are homogeneous linear forms spanning both parameters.
Thus the only rank-four completion has zero shear, over all of $\bbC$.
Section~\ref{app:zerocorner} gives the theorem and an explicit representative certificate.

For the sixteen subsets of profile $(2,2,2)$, two specified projections of the
seven-dimensional $A$ support to dimension five give Koszul matrices of rank $60=10\binom42$.
The stacked constraints on the original $49$ $A$--$C$ coordinates have rank $39$ and
annihilate the ten independent displayed paired products. Their kernel is therefore exactly
the displayed span. All $1013$ support inequalities \eqref{eq:pairedsylvester} hold,
excluding every other complex rank-one direction in this span.
Lemma~\ref{lem:saturatedprojection} gives global uniqueness.
Section~\ref{app:saturatedprojection} supplies the construction and an explicit representative.

Independent exact checkers replay all sixteen completion certificates and all sixteen
projection certificates, and verify that their index sets are disjoint and exhaust the
named $32$ exceptions. Neither argument uses tangent isolation
or assumes normality of a secant variety.
\end{proof}

\begin{theorem}[Level ten]\label{thm:rho10}
$\Delta_{D(2)}(10)=0$, so $\rrho(D(2))\ge10$ and $d_<(D(2))\ge11$.
\end{theorem}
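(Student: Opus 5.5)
The theorem asserts $\Delta_{D(2)}(10)=0$, which means every subset of at most ten summands of $D(2)$ has rank equal to its cardinality. Lemma~\ref{lem:subadd} reduces this to subsets of size exactly ten. Once that holds, $\rrho(D(2))\ge10$ follows from Definition~\ref{def:rho}. The bound $d_<(D(2))\ge11$ then follows from Theorem~\ref{thm:exactdist}: if $D(2)$ is not rank-optimal we get $d_<=\rrho+1\ge11$, and otherwise $d_<=\infty$.

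I would split the ten-subsets as in Proposition~\ref{prop:lp10}. All but $32$ of the $\binom{48}{10}$ ten-subsets satisfy the Lovitz--Petrov inequality. Theorem~\ref{thm:lp} then gives them rank $10$ directly. That count comes from the exhaustive modular scan, which is sound because a modular corank only upper-bounds the rational corank, exactly as in the proof of Theorem~\ref{thm:strong9}.

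The remaining $32$ subsets lie in the core $X$ and have multilinear rank $(7,7,7)$. For these there are two independent routes, and either suffices.

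\textbf{Young flattening route.} This is the one already carried out in the proof of Proposition~\ref{prop:lp10}. With $a=7$ and $p=3$, the Young flattening ranks are $194$ or $195$, against the divisor $\binom63=20$. This gives $\brank(T_S)\ge10$. Since $\rank\ge\brank$ and $\rank(T_S)\le|S|$, we get $\rank(T_S)=10$.

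\textbf{Uniqueness route.} Alternatively, Theorem~\ref{thm:strong10} shows that each such $T_S$ has its displayed decomposition as its only decomposition of length at most $10$. That rules out any decomposition of length $9$, so again $\rank(T_S)=10$.

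I would cite both routes as cross-checks, since they share only the input data.

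The hard content sits in the $32$ exceptional core subsets, which no flattening or Kruskal-type argument reaches. With Proposition~\ref{prop:lp10} and Theorem~\ref{thm:strong10} available, though, this proof is a short assembly. The only care needed is to check three things: that the border-rank certificate is over $\bbC$ (it is, since the Young flattening bounds border rank over the algebraic closure), that modular rank computations err in the safe direction, and that the exhaustiveness claim (exactly $32$ exceptions, matching the independently generated list) covers every ten-subset.
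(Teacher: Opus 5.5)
Your proposal is correct and is essentially the paper's proof. The paper applies Proposition~\ref{prop:lp10} to all but the $32$ core ten-subsets via Theorem~\ref{thm:lp}, uses the certification of those $32$ (the Young-flattening bound in that proposition's proof), and covers smaller sizes by monotonicity of the defect (Lemma~\ref{lem:defect}(a), equivalent to your use of Lemma~\ref{lem:subadd}); Theorem~\ref{thm:exactdist} then gives $d_<(D(2))\ge11$, as you do. Your second route through Theorem~\ref{thm:strong10} is also sound, since $\srho\le\rrho$ and that theorem does not depend on the present statement, but it is an optional cross-check that the paper's proof does not use.
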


\begin{proof}
Proposition~\ref{prop:lp10} and the certification of its $32$ subsets, together with
Lemma~\ref{lem:defect}(a) for the levels below.
\end{proof}

\begin{theorem}[Level $\Knext$]\label{thm:levelnext}
$\Delta_{D(2)}(\Knext)\le2$. Hence no decomposition of $\M{4,4,4}$ obtained by changing at most $\Knext$ of
$D(2)$'s complete tensor summands has length below $46$, and $\tau_3(D(2))\ge\number\numexpr\Knext+1\relax$.
\end{theorem}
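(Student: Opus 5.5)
The plan is to reduce the statement to the rank radius $\rrho(D(2))\ge\KcertR$ and then apply the general frontier machinery of Section~\ref{sec:local}. Once $\Delta_{D(2)}(\KcertR)=0$ is known, Theorem~\ref{thm:increment} gives $\Delta_{D(2)}(\Knext)\le\Delta_{D(2)}(\KcertR)+2=2$ immediately. Corollary~\ref{cor:increment}, equivalently Theorem~\ref{thm:frontier} with $k=\Knext$, then says that every competitor changing at most $\Knext$ complete summands has length at least $48-2=46$. For $\tau_3$ I will use Lemma~\ref{lem:defect}(a), which makes $\Delta_D$ monotone in $k$: a maximizing $k$-subset extends to a $(k+1)$-subset of no smaller defect. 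Hence $\Delta_{D(2)}(k)\le2<3$ for every $k\le\Knext$, and so $\tau_3(D(2))\ge\Knext+1$. All of this is formal.

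The substance is raising $\Delta_{D(2)}(10)=0$ (Theorem~\ref{thm:rho10}) to $\Delta_{D(2)}(\KcertR)=0$. By Lemma~\ref{lem:subadd} it suffices to show that every $\KcertR$-subset has rank $\KcertR$; level eleven then follows as well. I would proceed exactly as at level ten.

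\textbf{Step 1: LP scan.} Run the Lovitz--Petrov scan of Theorem~\ref{thm:lp} over all $\binom{48}{12}$ subsets, computing factor ranks modulo a large prime (a valid lower bound, as in Theorem~\ref{thm:strong9}). The aim is to show that the residue consists only of subsets meeting the core $X$ in at least ten elements. The reason to expect this is Theorem~\ref{thm:collapse}: the coranks $e_m$ can only be large where the factor spans collapse to $8$ inside $X$. The complementary $32$ summands span all of each factor space, so adding outside summands raises $\sum_m(d_m^U-1)$ quickly.

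\textbf{Step 2: settle the residue by rank certificates.} For residual subsets with $|S\cap X|=10$, I would try to split $T_S$ as $T_{S\cap X}$ plus two outside terms. The outside terms should be independent of $X$'s span in some mode, so that a substitution/projection argument reduces rank $12$ to the level-ten result for $S\cap X$. For subsets with $11$ or $12$ elements in $X$, where multilinear rank is at most $(8,8,8)$, I would first try Young flattenings with $a=8$, $p=3$ or $p=4$. If those fall short (the divisor $\binom{7}{3}=35$ against a $392$- or $490$-row map makes reaching $12$ delicate), I would fall back on exact positive minors of Koszul matrices. Failing that, I would use the zero-corner completion argument (Theorem~\ref{thm:zerocorner}) organized by cycle-block occupancy. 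Each surviving subset must contain a full block, giving a rank-four core with arms in complementary quotient modes, and the certificate is that the only rank-four completion is unsheared. This last residue, the eleven- and twelve-element subsets of $X$ where all flattening-type bounds are weakest, is the step I expect to be the main obstacle.
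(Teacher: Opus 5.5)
Your first paragraph is exactly the paper's proof. The paper applies Theorem~\ref{thm:increment} to $\Delta_{D(2)}(12)=0$, which it takes from Theorem~\ref{thm:rank12}, and your derivations of the length bound $46$ and of $\tau_3(D(2))\ge14$ via Lemma~\ref{lem:defect}(a) are correct.

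The gap is in your plan for that input. Step~1 expects every twelve-subset missed by the Lovitz--Petrov scan to meet $X$ in at least ten elements, and this is false.
\begin{itemize}
\item The paper's exhaustive mixed census leaves $52$ mixed twelve-supports with $d(I)\le25$. Twenty of them have profile $(8,8,8)$, so $d(I)=24$.
\item We have $d(I)\ge\mu_{|J|}+t(K)$ with $\mu_9=20$, $\mu_{10}=21$, $\mu_{11}=23$.
\item No two outside summands occupy the same quotient direction in all three modes. Otherwise that pair would have passed the gate $t(K)\le3$ of the level-eleven census, which produced no completions with $|K|=2$.
\item Hence $t(K)\ge4$ for $|K|=2$ and $t(K)\ge5$ for $|K|=3$, so $d(I)\ge25$ whenever $1\le|K|\le3$.
\end{itemize}
So those twenty supports contain at least four outside summands and at most eight core summands. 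Your heuristic that outside summands raise the span sum quickly fails because each quotient direction is hit by four outside summands in every mode. These supports violate both the Lovitz--Petrov hypothesis and the rank-splitting threshold $d(I)\ge25$, and nothing in your Step~2 reaches them. The paper certifies them by explicit nonzero $67$-minors of the $80\times80$ projected Koszul map: one mode is projected to $\bbC^5$, $p=2$, and each product contributes rank at most $\binom42=6$.

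Your Step~2 for supports with $|S\cap X|=10$ also does not work as stated. Substitution along two outside directions gives only
\[
\rank T_S\ge2+\min_{v_1,v_2}\rank\bigl(T_{S\cap X}+v_1\otimes b_1\otimes c_1+v_2\otimes b_2\otimes c_2\bigr),
\]
minimized over all shifts $v_i$. Bounding that minimum is the whole difficulty; it is what the all-shear hypothesis of Theorem~\ref{thm:zerocorner} controls, and it does not reduce to the rank of $T_{S\cap X}$.

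The missing idea is to work level by level with the rank clause of Lemma~\ref{lem:splitting11}, not the uniqueness-strength criterion of Theorem~\ref{thm:lp}. Every eleven-set has $d(I)\ge23$, which gives rank eleven from rank ten. Then every twelve-set with $d(I)\ge25$ has rank twelve, regardless of which exceptional ten- or eleven-sets it contains. The paper in fact proves strong eleven first, in Theorem~\ref{thm:strong11}.

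What remains after that is:
\begin{itemize}
\item the $20$ mixed supports above;
\item $1816$ of the $1820$ core supports, which also carry $67$-minors;
\item the four unions of three tetrads. For these the paper takes the rank-eight core $H=T_{F\cup B}$ and uses the fixed-strip inequality (Lemma~\ref{lem:fixedstrip11}) to get $\rank H(R)\ge8$ for every shear $R$. It then applies the rank clause of Theorem~\ref{thm:zerocorner} with $B_1=0$, $q=8$, $c=4$.
\end{itemize}
For rank alone, a zero-corner argument needs this lower bound for every shear. Your proposed certificate, that the only rank-four completion is unsheared, does not by itself exclude sheared completions of smaller rank.
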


\begin{proof}
Apply Theorem~\ref{thm:increment} to $\Delta_{D(2)}(12)=0$, proved by
Theorem~\ref{thm:rank12}.
\end{proof}

Theorem~\ref{thm:levelnext} allows defects $0$, $1$ or $2$ at level thirteen;
none is selected by the present certificates.

This bounds the rank radius and not the border radius, and the reason is worth stating because the certificate
at the residue is a border-rank certificate: the Young flattening gives $\brank(T_S)\ge10$ on the $32$, but
the other $6\,540\,715\,864$ ten-subsets are certified for rank only, Theorem~\ref{thm:lp} giving no
border-rank information at all. So $\bar\Delta_{D(2)}(10)$ is not established and the border radius stays
at $\Kcert$. Theorem~\ref{thm:strong10} closes the ten-set identifiability gap;
Section~\ref{app:strong11} then proves strong rigidity through eleven by an exact
splitting census and $656$ further global certificates. By Theorem~\ref{thm:strong11},
distinct equal-length replacements require at least twelve changed terms.
Theorem~\ref{thm:rank12} raises the shortening barrier to thirteen.
Proposition~\ref{prop:strongsharp} attains the separate twelve-change bound for equal-length replacements;
it makes no claim that the twelve-term support can be shortened.
Along the actual parameter curve,
Corollary~\ref{cor:strong11curve} also proves generic strong $\KcertSG$;
the ambient-component generic strong bound of Corollary~\ref{cor:lpgeneric}
remains $\KcertLP$.
The sixteen-term core is strongly eleven-rigid at every nonzero parameter
by Theorem~\ref{thm:uniformcore11}; uniformity for mixed supports remains open.
On each exceptional ten-subset the factor-span sum is $7+7+7=21$, below the $22$
required by the Lovitz--Petrov criterion, so the new global arguments are essential.
The differential of the ten-term parametrisation has rank $190$ at each displayed
decomposition, leaving only its $20$ factor-gauge directions. This is consistent with
uniqueness but is not its proof: tangent isolation alone would not exclude a second decomposition.

Level $k$ is \emph{certified} when every subset of size at most $k$ carries a certificate, which is stronger
than having scanned level $k$; those from Lemma~\ref{lem:pencil} count towards $\rrho$ but not $\brho$. The
scan is exhaustive at every level: of the $\binom{48}{k}$ subsets the factor-rank test leaves none at $k\le5$
and then $32$, $4{,}240$ and $154{,}936$ at $k=6,7,8$, out of $12{,}271{,}512$, $73{,}629{,}072$ and
$377{,}348{,}994$. Inside $X$ the test does not merely weaken with the level but stops entirely above $k=8$,
since by Theorem~\ref{thm:collapse} and Lemma~\ref{lem:kr} no subset of $X$ has a factor rank exceeding $8$.
Levels $3$ to $6$ give identical counts at $\tau\in\{2,3,5,-1,1/2\}$, and at level $6$ the
same $32$ index sets at all five. The list is not constant in $\tau$, however, and the discussion below names
the two parameters at which it grows.

These lists are not opaque search output. Since $D(t)$ is Khatri--Rao nondegenerate, Lemma~\ref{lem:kr} turns
the cheap test into a statement about factor vectors alone: $S$ is exceptional precisely when its factor
vectors are linearly dependent in all three modes simultaneously. Running that criterion from scratch over all
$\binom{48}{k}$ subsets regenerates each stored list \emph{identically}, as a set of index sets and not merely
in cardinality, at $32$, $4{,}240$ and $154{,}936$ for $k=6,7,8$. So a referee can reconstruct every exception
at every level from the published factor matrices by linear algebra, without running or trusting our search,
and the classification of the exceptions is a question about three rank-$16$ matroids on $\{1,\dots,48\}$. The
$\mathbb{F}_2$ application in the supplement has the same shape:
in both cases the subsets on which the flattening bound fails admit a structural characterisation, so what the
expensive certificates must cover is determined in advance rather than discovered. The level-$8$ list is
moreover generic and not an accident of the parameter: running the same criterion over $\bbQ(t)$ returns
exactly the same $154{,}936$ index sets, with empty symmetric difference in both directions, so the exception
set at $\tau=2$ is the exception set at the generic point. It is not the exception set at every parameter:
factor ranks are lower semicontinuous, so the list can only grow at special parameters, and at
$\tau=\pm\tfrac14$ it does. There the flattening test stops reaching two subsets it reaches generically ---
$\{1,5,8,18,23,42,43,47\}$ has mode ranks $(7,8,8)$ at $t=2,3,5,7$ and $(7,7,7)$ at $\tau=\tfrac14$, and
$\{1,5,11,13,21,27,39,41\}$ does the same at $\tau=-\tfrac14$ --- and the commutator certificates carry them
there instead. The parameter-uniform extension is discussed separately in
Remark~\ref{thm:familywide}; constancy of the exception list is not assumed.

The mode-rank profiles organize the census proof. All $32$ level-$6$ holdouts
have profile $(5,5,5)$. At level $7$, the two rank-$6$ modes give an
invertible slice for $4176$ supports; the other $64$ have an identically
singular pencil. At level $8$, $154{,}224$ supports have two rank-$7$ modes:
$151{,}664$ admit an invertible slice in that pairing and $2560$ do not.
The remaining $712$ have a permutation of $(6,6,7)$; their rank-$6$ pairing
admits an invertible slice and a rank-$4$ commutator, giving border rank at
least $6+\lceil4/2\rceil=8$.

The $64$ singular cases have profiles $(6,6,5)$ or $(6,5,6)$, and the $2560$
have $(7,7,6)$ or $(7,6,7)$. The proof of Theorem~\ref{thm:rho6} uses the
singular-pencil lemma for their rank and Young flattenings for their border
rank. A singular pencil in this pairing does not exclude an invertible
slice after projection to smaller matrix modes: projected commutators are
a separate possible certificate, not a premise of that proof.

Three mechanisms produce exceptions. A subset may contain a mode-$U$
dependent triple; it may contain a mode-$U$ circuit lying inside $X$; or it may meet $X$ in a set that is
dependent in exactly one of the second and third modes, $X$ supplying half of the exceptionality and the
remaining mode's dependence coming from outside $X$. The third arises because the three modes pair the four
blocks of $X$ differently, so a subset of $X$ can be dependent in one of them and not the other, and it is
invisible to the first two tests: the first is a statement about mode $U$ alone, and the second implicitly
assumes $X$ supplies all three modes at once. At level $7$ the three account for $2448$, $1408$ and $384$ of
the $4240$; at level $8$ for $106\,688$, $30\,680$ and $16\,928$ of the $154\,936$, with a further $192$
carrying both a triple and an $X$-circuit. A residue of $448$ at level $8$, $0.29\%$ of the exceptions, arises
from none of them: for each, $S\cap X$ and $S\setminus X$ are independent in all three modes, so the
dependence is mixed across the split. At level $6$ all $32$ are of the second kind. The hardest are
structural rather than accidental: by Lemma~\ref{lem:cb} the $2560$ at level $8$ admit no invertible slice
combination precisely because their two factor matrices are never simultaneously nonsingular on any $n$
columns, the analogue of the compression spaces carried by the level-$7$ holdouts. The criterion of \cite{BL}
fails on every one, their Kruskal profile being uniformly $(2,3,3)$, summing to $8$ against the $17$ required.
This is a matter of design rather than a near miss: a subset is exceptional precisely because it carries a
dependency among its factor vectors, and a dependent triple crushes the corresponding Kruskal rank, so
Kruskal-type criteria fail on every exception containing a dependent triple.
Among all eight-subsets, $19{,}444{,}824$ contain such a triple; this is not
the count of the $154{,}936$ flattening holdouts. Per-subset Kruskal certifies
only level $4$; Theorem~\ref{thm:lp}, which
constrains spans rather than $k$-ranks, is insensitive to this and is why the rank radius runs past it. They
need not fail on every exception: a subset whose factor vectors are dependent only in a large circuit can
retain Kruskal ranks high enough for \cite{BL}.

\begin{corollary}\label{cor:generic}
There is a Zariski-open neighborhood $U$ of $D(2)$ in the complex variety
$V_{48}(\M{4,4,4})$ such that $\brho(D')\ge\Kcert$ for every $D'\in U$.
Consequently this bound holds on a dense open subset of every irreducible component containing $D(2)$,
and $\brho(D(t))\ge\Kcert$ on a cofinite Zariski-open subset of $\bbC^\times$ containing $2$.
\end{corollary}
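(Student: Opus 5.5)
The plan is to combine Theorem~\ref{thm:rho6} with Theorem~\ref{thm:open}. First place $D(2)$ in the complex quasi-affine variety $V_{48}(\M{4,4,4})$ of ordered length-$48$ decompositions with nonzero factors. The census gives $\bar\Delta_{D(2)}(k)=0$ for every $k\le\Kcert$. By Theorem~\ref{thm:open} with $q=0$, each locus $U_k=\{D'\in V_{48}:\bar\Delta_{D'}(k)\le0\}$ is Zariski open and contains $D(2)$. Border defects are nonnegative, since $\brank(T_S)\le|S|$, so on $U=\bigcap_{k\le\Kcert}U_k$ every subset of size at most $\Kcert$ has border rank equal to its cardinality. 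By definition this says $\brho(D')\ge\Kcert$ on $U$. Being a finite intersection of open neighborhoods of $D(2)$, $U$ is itself a Zariski-open neighborhood of $D(2)$. Alternatively, one can note that Lemma~\ref{lem:subadd} already lets $U_{\Kcert}$ alone suffice.

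For the second claim, let $Y$ be an irreducible component of $V_{48}$ containing $D(2)$. Then $U\cap Y$ is open in $Y$ and nonempty, since it contains $D(2)$. In an irreducible space every nonempty open subset is dense, so $U\cap Y$ is dense open in $Y$. The last clause of Theorem~\ref{thm:open} gives the same conclusion directly.

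For the curve, the plan is to consider the map $\phi:\bbC^\times\to V_{48}$, $\tau\mapsto D(\tau)$. Its coordinates lie in $R\otimes\bbC=\bbC[t,t^{-1}]$. It lands in $V_{48}$ because of the identity \eqref{eq:family} and because factors are nonzero at every $\tau\ne0$: by Theorem~\ref{thm:sep}, each factor has a nonzero coordinate, since a unit $2\times2$ minor forces this. So $\phi$ is a morphism from an irreducible curve. Then $\phi^{-1}(U)$ is open in $\bbC^\times$ and contains $2$, hence is nonempty. Nonempty Zariski-open subsets of a curve are cofinite, which gives the last assertion.

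The only point needing care, and the one I expect to be the main obstacle, is checking that Theorem~\ref{thm:open} genuinely applies. This requires that the secant cones $\sigma_L$ be closed, and that $D'\mapsto T_S$ be a morphism on the quasi-affine variety; both were established there. One must also confirm that the census result was proved over $\bbC$ and not merely over $\bbQ$. Theorem~\ref{thm:rho6} states this explicitly, since border rank is computed over the algebraic closure.
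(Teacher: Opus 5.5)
Your proposal is correct and follows the paper's proof. You apply Theorem~\ref{thm:open} with $q=0$ at $D(2)$ using the census of Theorem~\ref{thm:rho6}, intersect with each irreducible component through $D(2)$ to get density, and pull back along the regular map $t\mapsto D(t)$ to get a cofinite open set containing $2$. Your added checks, namely nonnegativity of border defects and nonvanishing of all factors via the unit minors of Theorem~\ref{thm:sep}, are valid details that the paper leaves implicit.
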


\begin{proof}
Apply Theorem~\ref{thm:open} with $q=0$ to the point $D(2)$, which has $\bar\Delta_{D(2)}(k)=0$ for
$k\le\Kcert$ by Theorem~\ref{thm:rho6}, obtaining a Zariski-open set $U$ containing $D(2)$ on which the border
frontier vanishes through level $\Kcert$. Its intersection with each irreducible component through $D(2)$
is nonempty open, hence dense. The inverse image of $U$ under the regular map $t\mapsto D(t)$ is open
and contains $2$; its complement in $\bbC^\times$ is finite.
The statement uses the border-rank radius rather than the larger rank radius, because the set
$\{D:\bar\Delta_D(k)\le q\}$ shown open there is defined by border rank, and rank is not semicontinuous.
\end{proof}

\begin{remark}[Parameter-uniform extension]\label{thm:familywide}
The certified conclusion on the curve is currently the generic one of
Corollary~\ref{cor:generic}: $\brho(D(t))\ge\Kcert$ for all but finitely many
$t\in\bbC^{\times}$. Historical symbolic computations reported the stronger
bound at every nonzero characteristic-zero parameter. Their retained summaries
partition the $20{,}992{,}826$ mode-$U$ dependent eight-subsets into five classes
of sizes $19{,}264{,}551$, $1{,}570{,}585$, $2{,}736$, $154{,}936$ and $18$,
the last requiring checks at roots of nonconstant minor polynomials.
The full per-shard certificates have not been retained in the supplied package.
We therefore do not assert the parameter-uniform strengthening as a theorem
here, and none of the results uses it. The source pipeline and historical
accounting are supplied as a reproducible research direction.
\end{remark}

A uniform proof must also account for specializations of the flattening
classification: checking only the generic list of $154{,}936$ exceptions is
insufficient, since the list grows at $t=\pm1/4$ as described above.

\begin{corollary}\label{cor:lpgeneric}
Strong $\KcertLP$-local rigidity, and hence $\rrho\ge\KcertLP$, holds on a dense Zariski-open subset of every
irreducible component of $V_{48}(\M{4,4,4})$ containing $D(2)$; and $D(t)$ is strongly $\KcertLP$-locally rigid
with $\rrho(D(t))\ge\KcertLP$ for all but finitely many $t\in\bbC^{\times}$.
\end{corollary}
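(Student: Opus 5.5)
The approach is to show that the hypothesis of Theorem~\ref{thm:lp} is a Zariski-open condition on ordered decompositions, and then use openness exactly as Corollary~\ref{cor:generic} uses Theorem~\ref{thm:open}. For a fixed $U$ with $2\le|U|\le\KcertLP$, the condition $e_A(U)+e_B(U)+e_C(U)\le|U|-2$ can be rewritten as $\sum_m d^U_m\ge 2|U|+2$, where $d^U_m$ is the dimension of the span of the mode-$m$ factors indexed by $U$.

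First I would show this locus is open in $V_{48}(\M{4,4,4})$. Each $d^U_m$ is the rank of a matrix whose entries are coordinate functions of $D$. Matrix rank is lower semicontinuous, so $\{D: d^U_m(D)\ge j\}$ is open for each $j$. The locus $\{\sum_m d^U_m\ge N\}$ is a finite union, over triples $(j_A,j_B,j_C)$ with $j_A+j_B+j_C\ge N$, of finite intersections of such sets, and is therefore open. Intersecting over the finitely many $U$ with $|U|\le\KcertLP$ gives an open set $W\subseteq V_{48}(\M{4,4,4})$. Theorem~\ref{thm:lp} shows every $D'\in W$ is strongly $\KcertLP$-locally rigid with $\Delta_{D'}(\KcertLP)=0$, hence $\rrho(D')\ge\KcertLP$. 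Theorem~\ref{thm:lp} applies to any ordered decomposition with nonzero summands, so no property specific to $D(2)$ is needed beyond membership in $W$.

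Next, $D(2)\in W$ by the exhaustive scan in the proof of Theorem~\ref{thm:strong9}. Its intersection with any irreducible component $Y$ through $D(2)$ is therefore a nonempty open subset of $Y$, hence dense. For the curve, $t\mapsto D(t)$ is a regular map $\bbC^\times\to V_{48}$ because the coefficients lie in $R$. The preimage of $W$ is open in the irreducible curve $\bbC^\times$ and contains $2$, so its complement is finite.

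The only delicate point is the direction of semicontinuity. Strong rigidity and rank are not semicontinuous themselves, so I must not argue openness of those properties directly. Instead openness has to pass through the span-dimension hypothesis, which is a lower bound on ranks and hence an open condition. This is also why the statement reaches only $\KcertLP$ generically rather than the pointwise $\KcertS$: the level-$10$ and level-$11$ certificates are not of this open form.
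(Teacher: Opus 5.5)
Your proposal is correct and follows the paper's proof essentially step for step. The Lovitz--Petrov hypothesis is rewritten as lower bounds on factor-matrix ranks, hence open. These open conditions are intersected over the finitely many supports, shown to contain $D(2)$ by Theorem~\ref{thm:strong9}, and then spread to components and pulled back along the curve; your explicit finite union of intersections for $\{\sum_m d^U_m\ge 2|U|+2\}$ just spells out the paper's appeal to lower semicontinuity of a sum.
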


\begin{proof}
For $U\subseteq\{1,\dots,48\}$ and each mode $m$, the entries of the factor matrix $M_m(U)$ whose columns are
the mode-$m$ factors indexed by $U$ are coordinates on $V_{48}(\M{4,4,4})$, and $\rank M_m(U)\ge r$ holds
exactly off the vanishing of all $r\times r$ minors, so $\rank M_m(U)$ is lower semicontinuous. A sum of lower
semicontinuous functions is lower semicontinuous, so $\{D:\sum_m\rank M_m(U)\ge2|U|+2\}$ is Zariski open. Let
$W$ be the intersection of these sets over the $\sum_{j=2}^{\KcertLP}\binom{48}{j}$ subsets $U$ with
$2\le|U|\le\KcertLP$; being a finite intersection of open sets, $W$ is open, and Theorem~\ref{thm:lp} applies at
every point of it. Theorem~\ref{thm:strong9} says $D(2)\in W$, so $W\ne\emptyset$. Two consequences follow
independently. First, if $C$ is an irreducible component of $V_{48}(\M{4,4,4})$ containing $D(2)$ then $W\cap C$
is a nonempty open subset of $C$, hence dense in it. Second, $t\mapsto D(t)$ is a morphism
$\bbC^{\times}\to V_{48}(\M{4,4,4})$, so its preimage of $W$ is Zariski open in $\bbC^{\times}$ and contains
$t=2$; the complement of a nonempty open subset of an irreducible curve is a proper closed subset, hence
finite.
\end{proof}

The restriction to all but finitely many parameters is not removed here.
This is a statement about identifiability rather than rank alone; the determinantal certificates of
Section~\ref{sec:cert} do not supply uniqueness.

Theorem~2 of \cite{LovitzPetrov} is stated for a vector space over an arbitrary field, as is Kruskal's theorem
after \cite{Rhodes}, so it applies at the complex points of $W$. The verification that $D(2)\in W$ is modular,
and lifts for the reason that matters here: the hypothesis is a \emph{lower} bound on matrix ranks, and a rank
modulo $p$ is at most the rank over $\bbQ$, so a verification that passes modulo $p$ passes over $\bbQ$, and
over $\bbC$ because matrix rank is insensitive to field extension. The corollary asserts that $t=2$ is not
special without identifying which parameters are; an explicit exceptional set is a separate computation, and
$\brho$ is untouched.

Corollary~\ref{cor:lpgeneric} escapes the obstruction that confines Corollary~\ref{cor:generic} to the
border radius not by making rank semicontinuous but by spreading a
\emph{sufficient condition} that already is: the hypothesis of Theorem~\ref{thm:lp} constrains only matrix
ranks, so its locus is open, while its conclusion is about tensor rank. The two corollaries are the same
technique applied to different certificates.

\begin{corollary}[Generic rank radius from mixed certificates]\label{cor:rank10generic}
There is a Zariski-open neighborhood $W$ of $D(2)$ in the complex variety
$V_{48}(\M{4,4,4})$ such that $\rrho(D')\ge10$ for every $D'\in W$.
Consequently this rank bound holds on a dense open subset of every irreducible component containing $D(2)$,
and $\rrho(D(t))\ge10$ for all but finitely many $t\in\bbC^{\times}$.
\end{corollary}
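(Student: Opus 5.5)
The plan is to reuse the mechanism of Corollary~\ref{cor:lpgeneric}: do not spread the conclusion $\Delta\le0$, which is not semicontinuous for rank, but spread a finite family of \emph{sufficient conditions}, each cutting out an open locus and each implying rank $|S|$ on its support. By Lemma~\ref{lem:defect}(a), or Lemma~\ref{lem:subadd}, it suffices to control subsets of size exactly $10$. There are two kinds.

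For the $6\,540\,715\,864$ ten-subsets satisfying the hypothesis of Theorem~\ref{thm:lp} at $D(2)$, the locus where $\sum_m\rank M_m(U)\ge2|U|+2$ for all $U\subseteq S$ with $2\le|U|\le10$ is Zariski open by lower semicontinuity of matrix rank, exactly as in Corollary~\ref{cor:lpgeneric}. It contains $D(2)$ by Proposition~\ref{prop:lp10}. On this locus Theorem~\ref{thm:lp} gives $\rank(T_S)=10$.

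For the $32$ exceptional subsets of Proposition~\ref{prop:lp10}, I would use the Young flattening certificate, which bounds \emph{border} rank. For each such $S$ and the chosen mode $A$, I first project to a fixed seven-dimensional quotient, via a fixed integer projection recorded at $D(2)$. The condition that the Young flattening of the projected tensor has rank at least $181$ (the $p=3$ choice) is then the nonvanishing of some $181$-minor. That minor is a polynomial in the factor coordinates of $D'$, because $T_S(D')$ and the fixed projection are polynomial in them. So the locus is open, and it contains $D(2)$ by the computation in Proposition~\ref{prop:lp10}. On it, $\brank(T_S)\ge\lceil181/20\rceil=10$, hence $\rank(T_S)=10$.

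Alternatively, apply Theorem~\ref{thm:open} directly. At $D(2)$ we have $\bar\delta(S)=0$ for these $32$ subsets, and for each fixed support the locus $\{\brank(T_S)\ge|S|\}$ is open as the complement of the preimage of $\sigma_{|S|-1}$. This avoids tracking projections, and I would use it as the primary argument. The subtle point is that $\brank$ is computed for the unprojected $T_S$, and $\brank\le|S|$ always holds, so equality is open.

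Let $W$ be the intersection of the $\binom{48}{10}$ open sets so obtained. It is finite, open, and contains $D(2)$, and on it every ten-subset has rank $10$, so $\rrho\ge10$ by Lemma~\ref{lem:subadd}. The density statements then follow verbatim as in Corollaries~\ref{cor:generic} and~\ref{cor:lpgeneric}: $W$ meets each irreducible component through $D(2)$ in a nonempty open, hence dense, set. The preimage under the morphism $t\mapsto D(t)$ is a nonempty open subset of $\bbC^\times$, hence cofinite.

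The only real obstacle is conceptual rather than computational. One must make sure that no piece of the argument at $D(2)$ relies on a rank-only certificate that is not open, such as the singular-pencil Lemma~\ref{lem:pencil} or the uniqueness arguments of Theorem~\ref{thm:strong10}. Those arguments are not needed here, since ten-subsets are covered entirely by Theorem~\ref{thm:lp} and the border-rank certificates.
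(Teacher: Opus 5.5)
Your proposal is correct and matches the paper's proof. It uses open Lovitz--Petrov factor-rank conditions for the non-exceptional ten-subsets and a fixed-projection Young-flattening $181\times181$ minor, nonzero at $D(2)$, for the $32$ exceptions. It then takes a finite intersection, uses subadditivity for smaller subsets, and finishes with the usual density and pullback-to-the-curve argument.

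Your preferred variant for the $32$ exceptions is equally valid: openness of $\{D':\brank(T_S(D'))\ge10\}$ as the complement of the preimage of $\sigma_9$, as in Theorem~\ref{thm:open}. The paper instead names one explicit minor. That is the same mechanism, but it gives a concrete, checkable defining inequality for the open set.
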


\begin{proof}
For each ten-subset $S$ outside the $32$ of Proposition~\ref{prop:lp10}, impose the Lovitz--Petrov
inequalities on every $U\subseteq S$ with $|U|\ge2$. These are open conditions on factor-matrix ranks,
as in the proof of Corollary~\ref{cor:lpgeneric}, and they all hold at $D(2)$: the smaller subsets are
covered by Theorem~\ref{thm:strong9}, and $S$ by Proposition~\ref{prop:lp10}. They imply
$\rank(T_S(D'))=10$ wherever they hold.

For each of the remaining $32$ subsets, fix the three linear projections to $\bbC^7$ used at $D(2)$ in
the proof of Proposition~\ref{prop:lp10}, and apply these same maps at every $D'$. The resulting $245\times245$ Young matrix
has polynomial entries in $D'$ and has rank at least $190$ at $D(2)$. Choose one of its nonzero
$181\times181$ minors there. Its nonvanishing defines an open neighborhood on which
\[
\rank(T_S(D'))\ge\brank(T_S(D'))\ge
\brank\bigl((\pi_A\otimes\pi_B\otimes\pi_C)T_S(D')\bigr)
\ge\left\lceil\frac{181}{\binom63}\right\rceil=10.
\]
The reverse inequality follows from the ten displayed terms. The projections are fixed, not a
parameter-dependent compression: their target dimensions and the divisor $\binom63=20$ do not vary.

Intersect the finitely many open conditions just described. The intersection contains $D(2)$ and gives
rank $10$ for every ten-subset. Subadditivity then gives full rank for all smaller subsets, hence
$\rrho\ge10$. Its intersection with each irreducible component through $D(2)$ is nonempty open and
therefore dense. Its inverse image under the morphism $t\mapsto D(t)$ contains $2$ and is open in
$\bbC^{\times}$, so its complement is finite.
\end{proof}

This corollary spreads a sufficient certificate, not tensor rank itself: no semicontinuity of tensor rank
is assumed. It gives neither strong $10$-local rigidity, since the $32$ Young certificates do not prove
uniqueness, nor border radius $10$, since the other ten-subsets use rank-only certificates.

Corollary~\ref{cor:rank11generic} strengthens this rank conclusion to $11$:
intersect the neighborhood above with paired-independence conditions and the finitely
many open eleven-set factor-span conditions $d(I)\ge23$, all verified at $D(2)$,
and apply the rank part of the splitting lemma. This does not require generic uniqueness.

Along the actual curve, Corollary~\ref{cor:strong11curve} also proves strong
$\KcertSG$-local rigidity for all but finitely many nonzero parameters,
using Lovitz--Petrov factor minors for non-core supports through level ten,
paired-product minors, the mixed eleven-support splitting conditions, and the
uniform-core theorem (Theorem~\ref{thm:uniformcore11}).
The earlier rational-function transfers remain independent evidence. This curve statement
does not extend the ambient-component strong bound of Corollary~\ref{cor:lpgeneric}.

The family used here is theirs: the $48$ triples of \cite[Appendix~A]{LWH}, evaluated at $t=2$, agree with
$D(2)$ in all $2304$ entries exactly, with no reordering of terms, permutation of modes or rescaling.

% These are the general arguments on which the support classifications rely.
% They stay in the main article, in normal body type.
% Standalone appendix fragment; uses the manuscript's theorem environments
% and commands \rank, \spn, \bbC.  No additional bibliography is required.
% The all-32 application is intentionally separate from this first-subset proof.
\section{Uniqueness from a zero-corner completion certificate}\label{app:zerocorner}

The following criterion is global: it concerns every minimal decomposition,
without an assumption about smoothness or local isolation.  All ranks in the
criterion are over the stated field.
The rank mechanism is the classical substitution/adjoining-slices bound
\cite[Lemmas~2--3]{Shitov}: minimize the rank of the modified core and
add the dimensions of the adjoined slice spaces. The additional assertion
here is purity of every minimal decomposition when zero shear is the
unique minimizer; this yields uniqueness from the three block decompositions.

\begin{theorem}[Zero-corner completion]\label{thm:zerocorner}
Let the ground field be infinite, let
$B=B_0\oplus B_1$, $C=C_0\oplus C_1$, and put
$b=\dim B_1$, $c=\dim C_1$.  Suppose
\[
 T=H+L+U,\qquad
 H\in A\otimes B_0\otimes C_0,\quad
 L\in A\otimes B_1\otimes C_0,\quad
 U\in A\otimes B_0\otimes C_1,
\]
where $\rank H=q$, $\rank L=b$, $\rank U=c$, and the intrinsic
$B_1$-support of $L$ and $C_1$-support of $U$ are full.  Zero-dimensional
arm spaces are allowed, with the zero tensor represented by the empty
decomposition.
For $S:B_1\to B_0$ and $R:C_1\to C_0$, set
\[
 H(S,R)=H+(\mathrm{id}\otimes S\otimes\mathrm{id})L
                +(\mathrm{id}\otimes\mathrm{id}\otimes R)U.
\]
Assume $\rank H(S,R)\ge q$ for all $S,R$.  Then
$\rank T=q+b+c$.
If, in addition, equality $\rank H(S,R)=q$ forces $S=R=0$, every
minimal decomposition has exactly $q,b,c$ terms in the respective blocks
of $H,L,U$.  Under this additional hypothesis, uniqueness of the three
block decompositions implies uniqueness of $T$.
\end{theorem}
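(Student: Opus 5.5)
The plan is to prove the lower bound $\rank T\ge q+b+c$ by the substitution argument of \cite[Lemmas~2--3]{Shitov}, run in the $C$-mode and then in the $B$-mode. I would then reread the equality case through the same construction to get purity. The upper bound $\rank T\le q+b+c$ is immediate by concatenating minimal decompositions of $H$, $L$ and $U$.

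\emph{Lower bound.} Take any decomposition $T=\sum_{i=1}^r a_i\otimes b_i\otimes c_i$ with nonzero factors, and split $c_i=c_i^0+c_i^1$ along $C=C_0\oplus C_1$.
\begin{itemize}
\item Projecting the third mode onto $C_1$ kills $H$ and $L$ and leaves $U$. Since $U$ has full $C_1$-support, the $c_i^1$ span $C_1$, so some $c$ indices $I$ have $\{c_i^1\}_{i\in I}$ a basis of $C_1$.
\item Define $R:C_1\to C_0$ by $R(c_i^1)=-c_i^0$ for $i\in I$, and put $\varphi(x^0+x^1)=x^0+Rx^1$, a map $C\to C_0$ that is the identity on $C_0$. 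Then $(\mathrm{id}\otimes\mathrm{id}\otimes\varphi)T=H+L+(\mathrm{id}\otimes\mathrm{id}\otimes R)U$. Since $\varphi(c_i)=0$ for $i\in I$, this tensor is a sum of at most $r-c$ of the images.
\item The $B_1$-projection of this image is still the $B_1$-projection of $L$, which has full support. So among the surviving terms there are $b$ indices $J$ whose $B_1$-components form a basis of $B_1$.
\item Choose $S:B_1\to B_0$ and $\psi:B\to B_0$ in the same way, so that $\psi$ kills $b_j$ for $j\in J$. Applying $\psi$ in the second mode carries $L$ to $(\mathrm{id}\otimes S\otimes\mathrm{id})L$ and fixes $H$ and the $R$-twisted $U$ term, which already lie in $B_0$. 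The result is $H(S,R)$, written with at most $r-b-c$ terms.
\end{itemize}
The hypothesis $\rank H(S,R)\ge q$ then gives $r\ge q+b+c$.

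\emph{Purity.} Now let $r=q+b+c$ and assume $q$ is the unique minimum, attained only at $S=R=0$. The displayed $H(S,R)$ then has rank exactly $q$, so $R=0$ and $S=0$. Hence $c_i^0=0$ for $i\in I$, and likewise for $J$ in the reduced decomposition. Moreover, every index outside $I$ killed by $\varphi$ would drop the count below $q$, so no index outside $I$ is killed by $\varphi$.

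The main obstacle is to turn this into ``exactly $c$ terms have nonzero $C_1$-component''. The basis $I$ is only one choice, so I would use matroid basis exchange. Suppose $j\notin I$ has $c_j^1\ne0$. Exchange $j$ into a basis $I'$. Rerunning the argument with $I'$ forces $c_j^0=0$. But then, for the original $I$, the map $\varphi=$ (projection along $C_1$) also kills term $j$. This yields a decomposition of some $H(S,R)$ with at most $q-1$ terms, a contradiction.

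So exactly the terms indexed by $I$ have $c$-factors outside $C_0$, and these factors lie in $C_1$. The same exchange argument, applied in the $B$-mode to the remaining terms, shows that exactly the $J$ terms have $b$-factors in $B_1$ and all others have $b_i\in B_0$.

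It remains to place each group in its block.
\begin{itemize}
\item The $I$-terms sum to the $A\otimes B\otimes C_1$ part of $T$, namely $U$. Projecting their $B$-factors to $B_1$ gives $\sum_{i\in I}a_i\otimes b_i^1\otimes c_i=0$. Since the $c_i$ are independent, each $a_i\otimes b_i^1$ vanishes, so $b_i\in B_0$.
\item The $J$-terms therefore sum to $L$.
\item The rest, $q$ terms, sum to $H$.
\end{itemize}
Thus every minimal decomposition splits as $q,b,c$ terms in the blocks of $H,L,U$, and uniqueness of each block decomposition yields uniqueness of the decomposition of $T$.

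The infinite-field hypothesis should only be needed where the cited substitution lemma uses it. Zero-dimensional arms make the corresponding step vacuous.
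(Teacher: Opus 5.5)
Your proof is correct, but it obtains the $b+c$ disjoint killed terms differently from the paper.

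\textbf{The paper's route.} It retracts both modes at once. It chooses $\alpha\in A^*$ nonvanishing on every $a_j$, which is the only use of the infinite field. Contracting the vanishing $A\otimes B_1\otimes C_1$ block then gives the identity $V\operatorname{diag}(\alpha(a_j))W^{\mathsf T}=0$ for the $B_1$- and $C_1$-quotient matrices. From this it deduces that every column basis of $V$ has a disjoint column basis of $W$, and symmetrically.

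\textbf{Your route.} You substitute sequentially. Because the $B_1\otimes C_1$ block of $T$ is zero, killing a $C_1$-basis $I$ by $\varphi$ leaves the $B_1$-part of the image equal to $L$. So a $B_1$-basis $J$ can be taken among the survivors, and disjointness is automatic. This is more elementary, and as written it never uses that the field is infinite. Your closing hedge is therefore unnecessary: your argument proves the statement over any field.

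\textbf{Purity step.} Here the two routes diverge again. The paper prescribes columns symmetrically, which puts every $v_j$ in $B_0$ or $B_1$ and every $w_j$ in $C_0$ or $C_1$; it then excludes $A\otimes B_1\otimes C_1$ terms by minimality. Your $J$ must avoid $I$, so you can exchange only among non-$I$ indices and must place the $I$-terms separately. You do this correctly, using the no-extra-killed-term count and the independence of the $c_i$, $i\in I$.

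\textbf{One omission.} The paper's proof also observes that a minimal decomposition has independent paired products, so its factors lie in the intrinsic supports. Yours leaves this out, which is harmless for the statement as posed. The observation is what allows the theorem to be applied to competitors written in larger ambient spaces, as in the ten-subset application.
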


\begin{proof}
Take a minimal decomposition $T=\sum_{j=1}^r a_j\otimes v_j\otimes w_j$.
Its factors belong to the intrinsic supports: in a minimal decomposition the
paired products are linearly independent, since a dependence permits one
term to be eliminated by absorbing its remaining factor into the others.
The corresponding flattening image therefore equals the span of the factors
in the remaining mode.  This also justifies restricting a decomposition
initially given in larger ambient spaces.

Let $\beta:B\to B_1$ and $\gamma:C\to C_1$ be the projections, and let
$V=[\beta(v_j)]$, $W=[\gamma(w_j)]$.  Their row ranks are $b,c$.
Choose $\alpha\in A^*$ with $d_j=\alpha(a_j)\ne0$ for every $j$;
the infinite-field hypothesis permits avoidance of these finitely many
proper hyperplanes.  The zero corner gives
\begin{equation}\label{eq:zerocornerorthogonal}
 V\operatorname{diag}(d_1,\ldots,d_r)W^{\mathsf T}=0.
\end{equation}
For \emph{any} column basis $I$ of $V$, the columns of $W$ outside $I$
span $C_1$.  Otherwise a nonzero row functional $\ell$ would make $\ell W$
supported on $I$, and \eqref{eq:zerocornerorthogonal} would imply
$V_I\operatorname{diag}(d_i:i\in I)(\ell W_I)^{\mathsf T}=0$.
Both square factors are invertible, contradicting $\rank W=c$.
Thus every $V$-basis has a disjoint $W$-basis $J$, and symmetrically.
In particular, every nonzero column of either matrix can be prescribed
on its own side of such a pair.

There are unique retractions $P_S:B\to B_0$, $Q_R:C\to C_0$, identity
on the core spaces, killing respectively the $v_i$ for $i\in I$ and the
$w_j$ for $j\in J$: their quotient vectors are bases.
Applying these retractions kills at least $b+c$ distinct terms, while the
displayed expression becomes $H(S,R)$.  Hence $r\ge q+b+c$.
The three given block decompositions attain this bound.

Assume now the additional zero-shear hypothesis.  For a minimal
decomposition, the same argument for every disjoint pair $I,J$ now gives
$\rank H(S,R)\le q$, hence $S=R=0$.
The selected $v_i$ belong to $B_1$, and the selected $w_j$ belong to $C_1$.
Prescribing each nonzero quotient column in turn shows that every $v_j$
lies in $B_0$ or $B_1$, and every $w_j$ lies in $C_0$ or $C_1$.
Terms in $A\otimes B_1\otimes C_1$ would sum to zero; minimality excludes
such a nonempty subset.  The other three groups separately decompose
$H,L,U$, require at least $q,b,c$ terms, and have exactly that total.
The stated block counts and uniqueness follow.
\end{proof}

\paragraph{The first exceptional ten-subset.}
Work over $\bbC$ and use the summand labels
\[
 \mathcal H=(4,17,34,40),\qquad
 \mathcal J=(12,19,22),\qquad \mathcal K=(15,44,48).
\]
Choose the $B$ basis $(b_{\mathcal H},b_{\mathcal K})$, the $C$ basis
$(c_{\mathcal H},c_{\mathcal J})$, and the $A$ basis
$(a_4,a_{12},a_{15},a_{17},a_{19},a_{22},a_{34})$.
The first four coordinates in $B,C$ define $B_0,C_0$; the last three
define $B_1,C_1$.  The nonbasis $A$ factors are
\begin{align*}
 a_{40}&=-a_4-2a_{12}-2a_{15}+a_{17}-a_{34},\\
 a_{44}&=-a_{12}-a_{15}+a_{17}+a_{22}-a_{34},\qquad
 a_{48}=a_{12}+a_{15}-a_{17}-a_{19}+a_{34}.
\end{align*}
The arm factors lying in the core spaces are the columns of
\[
 [b_{12}\ b_{19}\ b_{22}]=
 \begin{pmatrix}4&4&4\\1&1&-1\\-1&1&-1\\-1&1&1\end{pmatrix},
 \qquad
 [c_{15}\ c_{44}\ c_{48}]=
 \begin{pmatrix}-1&-1&1\\4&4&4\\4&-4&-4\\4&-4&4\end{pmatrix}.
\]
Thus $H=\sum_{h\in\mathcal H}a_h\otimes b_h\otimes c_h$ is the
core, and the sums on $\mathcal K,\mathcal J$ are $L,U$.
Each block has independent factor families, of sizes $4,3,3$ respectively.
Its rank and uniqueness follow directly: its intrinsic slice space is
diagonal in the two corresponding factor bases, and its only rank-one
directions are the displayed diagonal axes.

\begin{lemma}[The first core completion]\label{lem:completionfirst}
For arbitrary $u_k\in B_0$ $(k\in\mathcal K)$ and
$v_j\in C_0$ $(j\in\mathcal J)$, the tensor
\[
 H+\sum_{k\in\mathcal K}a_k\otimes u_k\otimes c_k
   +\sum_{j\in\mathcal J}a_j\otimes b_j\otimes v_j
\]
has rank at least four, and has rank four only when all six shear vectors
vanish.
\end{lemma}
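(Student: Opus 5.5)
The plan is to use Strassen's commutator condition for border rank (equivalently, for rank) of a concise $4\times4$ pencil, applied in the core coordinates. Write the sheared tensor as $H(S,R)\in A\otimes B_0\otimes C_0$ with $\dim B_0=\dim C_0=4$. Its $A$-slices are the $4\times4$ matrices obtained by contracting with functionals on the seven-dimensional $A$. The six shear vectors $u_k\in B_0$, $v_j\in C_0$ give $24$ shear variables in total. In the coordinates $(b_{\mathcal H})$ and $(c_{\mathcal H})$, the core $H$ is diagonal: contracting with the dual basis functional $a_h^*$ for $h\in\{4,17,34\}$, and with a suitable combination for $h=40$, gives the elementary diagonal matrices $E_{hh}$. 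The arms enter only through slices attached to $a_{12},a_{15},a_{19},a_{22}$ and the relations for $a_{40},a_{44},a_{48}$.

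The first step is the lower bound. A functional $\alpha$ killing $a_{12},a_{15},a_{19},a_{22}$ but not $a_4,a_{17},a_{34},a_{40}$ (these four are independent, since $a_{40}$ involves $a_4,a_{17},a_{34}$ together with killed vectors) gives a slice. Because the relations express $a_{44},a_{48}$ through $a_{17},a_{34}$ and killed vectors, this slice may pick up shear contributions from the $\mathcal K$ arm. I would first check that a generic such $\alpha$ with $\alpha(a_{44})=\alpha(a_{48})=0$ exists; this is the claim that ``an invertible slice is constant in the shear variables'' from Theorem~\ref{thm:strong10}. The four linear conditions $\alpha(a_{12})=\alpha(a_{15})=\alpha(a_{19})=\alpha(a_{22})=0$, together with $\alpha(a_{44})=\alpha(a_{48})=0$, reduce on the span to conditions $\alpha(a_{17})=\alpha(a_{34})$ and similar, leaving room for a diagonal slice with nonzero entries. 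Such a slice $T_a$ is then invertible and shear-independent. Since $H(S,R)$ is concise in $B_0,C_0$ (already $T_a$ has full rank), it has rank at least $4$.

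The second step handles equality. If $\rank=4$, Strassen's inequality forces all commutators $[T_bT_a^{-1},T_cT_a^{-1}]$ to vanish. The matrices $T_xT_a^{-1}$ must then commute pairwise, and since $T_a^{-1}$ and the diagonal core slices are diagonal with distinct-enough entries, all $T_xT_a^{-1}$ are simultaneously diagonalizable. Commuting with a diagonal matrix having distinct entries forces diagonality. Take $x=a_{12}^*$ and similar functionals. Each such slice equals a core contribution plus a rank-one shear term $b_{12}\otimes v_{12}$ or $u_k\otimes c_k$. The commutator conditions are bilinear a priori but become linear in the shear once one factor is a fixed diagonal core slice. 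This yields the linear system of rank $22$ in the $24$ variables.

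The main obstacle is closing the two-dimensional kernel. On that kernel, I would restrict and compute the remaining commutator entries, namely the ones pairing two arm slices. These are homogeneous linear forms in the two kernel parameters, and I need them to span both, which forces the parameters to vanish. This is an exact rational computation on the displayed matrices. I would carry it out explicitly, or, failing a clean hand computation, record the rank-$22$ matrix and two linear forms with a nonzero $2\times2$ determinant as the certificate. Finally, a rank-$4$ tensor has zero commutators (the converse direction is only needed as a necessary condition), so zero shear is the unique rank-four completion.
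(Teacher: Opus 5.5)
Your route is the paper's. The functionals annihilating all six arm $A$-factors are spanned by $e_0^*$ and $e_3^*+e_6^*$ in the basis $a_4,a_{12},a_{15},a_{17},a_{19},a_{22},a_{34}$. They give the shear-independent slices $D_0=\operatorname{diag}(1,2,2,-1)$ and $D_1=\operatorname{diag}(0,1,1,0)$. Rank four then forces $M_iD_0^{-1}M_j=M_jD_0^{-1}M_i$; your appeal to Strassen's inequality is a valid substitute for the paper's remark that both matrix-factor families of a four-term decomposition must be bases. The equations with $D_1$ are linear of rank $22$ in the $24$ shear coordinates, and one further commutator closes the kernel.

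One sentence in your second step is false as applied here. It is true that commuting with a diagonal matrix with distinct entries forces diagonality, but no shear-independent matrix of that kind exists. Every fixed slice is $\operatorname{diag}(x,y,y,-x)$, so every fixed ratio, e.g.\ $D_1D_0^{-1}=\operatorname{diag}(0,\tfrac12,\tfrac12,0)$, has the pattern $(\lambda,\mu,\mu,\lambda)$. Commuting with it forces only block structure for the coordinate split $\{0,3\}\sqcup\{1,2\}$, not diagonality. This is exactly why the system has rank $22$ rather than $24$: it yields $u_{15}=v_{12}=0$ and a two-parameter family for $u_{44},u_{48},v_{19},v_{22}$. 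Taken literally, your diagonality step would already force zero shear and contradicts your next sentence; replace it with the block-structure statement.

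You also leave the step you call the main obstacle undone, and linearity of the closing entries is not automatic, since the commutator of two shear-dependent slices is quadratic in general. The paper's choice works because, on the kernel, $M_1=\operatorname{diag}(0,0,0,-2)-N$ and $M_3=\operatorname{diag}(0,1,0,1)+N$ with $N=u_{44}c_{44}^{\mathsf T}-u_{48}c_{48}^{\mathsf T}$. The quadratic terms of $M_1D_0^{-1}M_3-M_3D_0^{-1}M_1$ therefore cancel, and its $(0,3)$ and $(1,2)$ entries are $8t$ and $s$. Together with the $22\times22$ minor of determinant $-64$ certifying the rank of the linear system, that is the entire certificate.
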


\begin{proof}
Write $M_0,\ldots,M_6$ for the matrix slices in the chosen $A$ basis.
The functionals $e_0^*$ and $e_3^*+e_6^*$
annihilate all six arm factors.  Consequently the slices
\[
 D_0=M_0+2(M_3+M_6)=\operatorname{diag}(1,2,2,-1),\qquad
 D_1=M_3+M_6=\operatorname{diag}(0,1,1,0)
\]
are fixed for every shear.  Since $\det D_0=-4$, the rank is at least four.
If it is four, both matrix-factor families of a four-term decomposition
are bases and all coefficients of $D_0$ are nonzero.  Therefore
\begin{equation}\label{eq:completioncommutator}
 M_iD_0^{-1}M_j-M_jD_0^{-1}M_i=0\qquad(0\le i,j\le6).
\end{equation}
There is no excluded parameter locus: $D_0$ is constant and invertible.

First use these equations with $D_1$ in place of $M_j$.  They are linear
in the 24 shear coordinates and their complete solution is
\begin{align*}
 u_{15}&=v_{12}=0,\\
 u_{44}&=(t,-s/4,-s/4,t/4),&u_{48}&=(t,s/4,s/4,t/4),\\
 v_{19}&=(s/4,t,-t,s),&v_{22}&=(s/4,-t,t,s).
\end{align*}
Here is an explicit rank certificate for this elimination.  Order variables
by $u_{15}$, $u_{44}$, $u_{48}$, $v_{12}$, $v_{19}$, $v_{22}$, four coordinates each,
and equations by $i=0,\ldots,6$ and row-major entries of
$M_iD_0^{-1}D_1-D_1D_0^{-1}M_i$.
The resulting $112\times24$ matrix has a $22\times22$ minor of determinant
$-64$, using columns $0,\ldots,21$ and rows
\[
 \{17,18,20,23,24,29,33,34,36,39,40,45,
       65,66,68,71,72,77,81,82,84,87\}.
\]
The two independent displayed kernel vectors prove the reverse rank bound.
After their substitution, the $(0,3)$ and $(1,2)$ entries of
$M_1D_0^{-1}M_3-M_3D_0^{-1}M_1$ are $8t$ and $s$.
Equation~\eqref{eq:completioncommutator} forces $t=s=0$.
These rational identities establish the conclusion over all of $\bbC$,
not only at rational parameter values.
\end{proof}

Theorem~\ref{thm:zerocorner} with $q=4$, $b=c=3$ now proves that this
ten-subset has rank ten and a unique minimal complex decomposition.
This is a pointwise statement for the specified tensor; no openness assertion
for uniqueness along the parameter family is used.

% Exact reconstruction and independent checks:
% reports/unique10_followup/zero_corner/core_completion.py
% reports/unique10_followup/xhigh/check_completion_independent.py

\begin{corollary}[Generic strong rigidity through level ten on the curve]
\label{cor:strong10curve}
There is a Zariski-open subset $U\subseteq\bbC^\times$ containing $2$
on which $D(\tau)$ is strongly $10$-locally rigid. Hence this holds for
all but finitely many nonzero complex parameters.
This conclusion concerns the Li--Wang--Hu parameter
curve; no corresponding level-ten assertion on whole components of the
decomposition variety is made here.
\end{corollary}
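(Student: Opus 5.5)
The plan is to spread every certificate used at $D(2)$ through level ten into an open condition on the parameter $\tau$, and then intersect finitely many of these conditions. Strong $10$-local rigidity for a ten-subset gives it for all smaller subsets, since a subset of a uniquely decomposed minimal-length tensor inherits uniqueness. Subadditivity as in Lemma~\ref{lem:subadd} handles rank, and cancellation inside the ten-set handles uniqueness. It therefore suffices to treat all supports of size at most ten directly, with the Lovitz--Petrov inequalities covering the non-exceptional ones. Concretely, for every $U$ with $2\le|U|\le10$ that is not one of the $32$ exceptional ten-subsets of Proposition~\ref{prop:lp10}, I will fix one nonzero minor at $t=2$ of each factor matrix $M_m(U)$ witnessing $\sum_m\rank M_m(U)\ge2|U|+2$. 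Each such minor is a Laurent polynomial in $t$, nonzero at $2$, so it vanishes at only finitely many $\tau$. Where all of them are nonzero, Theorem~\ref{thm:lp} gives uniqueness for every non-exceptional support, exactly as in Corollary~\ref{cor:lpgeneric}.

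For the $32$ exceptional ten-subsets I will redo the two global arguments of Theorem~\ref{thm:strong10} with $\tau$ symbolic over $\bbQ(t)$, and record the finitely many polynomials whose nonvanishing keeps each step valid.

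For the sixteen $(3,3,0)$ supports, the hypotheses of Theorem~\ref{thm:zerocorner} are open. Independence of each block's factor families is a nonvanishing minor. The choice of bases and complements $B_0\oplus B_1$, $C_0\oplus C_1$ is also a nonvanishing minor, and the fixed functionals then give an invertible slice $D_0(t)$ with $\det D_0(t)\ne0$. The commutator system in the $24$ shear variables is linear with coefficients in $\bbQ(t)$. Its rank is at least $22$ off the zero set of a fixed $22\times22$ minor, and it is at most $22$ where the two kernel vectors, made rational in $t$, stay independent. On the kernel, the two residual linear forms span both parameters off the vanishing of a $2\times2$ determinant. Off these finitely many roots the zero-shear conclusion and uniqueness hold. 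Uniqueness of the three block decompositions is itself a consequence of factor independence, so it spreads as well.

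For the sixteen $(2,2,2)$ supports, I will fix the two projections used at $t=2$. The Koszul matrices having rank $60$, the stacked constraint rank being $39$, and the $1013$ support inequalities \eqref{eq:pairedsylvester} are all lower bounds on matrix ranks or nonvanishing conditions, hence open. The ten paired products also remain in the kernel identically, since their membership is a polynomial identity in $t$. Together these mean the kernel equals their span wherever the rank-$39$ minor is nonzero, so Lemma~\ref{lem:saturatedprojection} applies on an open set.

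The main obstacle is making sure every ingredient of those global certificates is genuinely an open condition rather than a closed one. Equalities such as "kernel equals span" and "rank exactly $22$" must be split into a lower bound, which is open, and an identity, which holds over $\bbQ(t)$. I also need to check that the support inequalities in Lemma~\ref{lem:saturatedprojection} are of the strict, nonvanishing type. The intersection of all these finitely many nonempty open subsets of $\bbC^\times$ contains $2$ and is cofinite, which gives the corollary.
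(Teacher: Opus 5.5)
\textbf{Gap: the zero-corner cases.} Your treatment of the non-exceptional supports matches the paper: you fix the Lovitz--Petrov factor minors at $t=2$ and intersect their nonvanishing loci. Your saturated-projection step for the $(2,2,2)$ supports is also sound in substance. Openness of that certificate is Theorem~\ref{thm:projectionopen}, and it pulls back along the curve. To make ``the original $49$ coordinates'' meaningful at $\tau\ne2$, however, you need either the ambient completion of Lemma~\ref{lem:projectionambient} or an argument that the factor spans do not move.

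The gap is the zero-corner step, where you assert that the hypotheses of Theorem~\ref{thm:zerocorner} are open. They are not:
\begin{itemize}
\item The splitting $T_S=H+L+U$ with vanishing $A\otimes B_1\otimes C_1$ corner requires the $B$-factors of the arm $\mathcal J$ to lie in $B_0=\spn\{b_h:h\in\mathcal H\}$, and the $C$-factors of $\mathcal K$ to lie in $C_0=\spn\{c_h:h\in\mathcal H\}$.
\item The slices $D_0,D_1$ are independent of the shear only because $e_0^*$ and $e_3^*+e_6^*$ annihilate all six arm $A$-factors. That uses the exact linear relations for $a_{44},a_{48}$.
\item Your final $2\times2$ determinant test presupposes that the residual commutator entries remain homogeneous linear in $(s,t)$ after substitution, i.e.\ that their quadratic coefficients vanish.
\end{itemize}
Each of these is a Zariski-closed condition (a rank upper bound or a polynomial identity). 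Nonvanishing of minors near $\tau=2$ gives none of them. Semicontinuity even points the wrong way: away from $2$ the spans of $b_{\mathcal H}\cup b_{\mathcal J}$ and $c_{\mathcal H}\cup c_{\mathcal K}$ can only grow, and any growth destroys the zero corner, so the $t=2$ certificate would not apply at all. Consistently, the paper makes no ambient level-ten claim: the zero-corner certificates are exactly the ones it does not spread by openness.

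\textbf{The missing ingredient.} You need a reason why these closed conditions persist along the Li--Wang--Hu curve. The paper obtains it from the Laurent normal form (Lemma~\ref{lem:coreweight}). On the core, $a_i(t)=a_i(2)$, and the $B$- and $C$-factors are carried to their $t=2$ values by invertible Laurent maps. The only residue is the weight $\eta=t/2$ on $G_4$, which contains or misses each of $H,L,U$ wholesale. Rescaling $B_0,B_1,C_0,C_1$ then carries the weighted sum termwise to the certified $t=2$ configuration. Lemma~\ref{lem:weightstable} does the same for the saturated cases. Together this is Theorem~\ref{thm:uniformcore11}, valid at \emph{every} $\tau\ne0$, after which the corollary needs openness only for the non-core supports.

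Within your own framework, you could instead verify symbolically over $\bbQ(t)$ the block containments, the annihilation identities and the vanishing of the quadratic coefficients. As written, however, the proposal neither states nor proves these identities, and it misclassifies them as open conditions.
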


\begin{proof}
Theorem~\ref{thm:uniformcore11} covers every core support through level
ten for every nonzero parameter.  For each non-core support through level
ten, retain the finite factor minors and paired-product minors that witness
the Lovitz--Petrov hypotheses at $t=2$.  Their common nonvanishing is an
open subset containing $2$, on which Theorem~\ref{thm:lp} proves rank
minimality and global uniqueness.  Intersecting these finitely many opens
with the uniform-core locus gives the assertion.
\end{proof}

% Standalone manuscript fragment.  Uses existing \rank, \spn, \bbC commands.
% The all-case certificate table is intentionally separate.
\section{Stacked saturated projections}\label{app:saturatedprojection}

Several saturated Koszul flattenings can recover summands globally even when
the kernel obtained from a single projection is too large.
The deterministic uniqueness theorem \cite[Thm.~2.7]{KMW} already uses
saturated Koszul--Young images and rank-one extraction, with explicit
rank hypotheses on $M,M',N,N',P,P'$; its generic guarantee is the
separate Theorem~2.8. Here several projected constraints are pulled back
to one original paired space, and the final rank-one exclusion uses
finite support inequalities. Section~\ref{app:kmwcomparison} compares
these particular sufficient criteria: the full KMW hypotheses and the
Sylvester-equipped stacked certificate are incomparable, while both imply
an abstract reduced paired-kernel certificate. The contribution used here
is the explicit certificate construction and its reduced-incidence open
transfer, not the general principle of Koszul-based uniqueness.
Work over $\bbC$, and let
$T=\sum_{i=1}^r a_i\otimes b_i\otimes c_i\in A\otimes B\otimes C$
with all factors nonzero.
For a nonempty finite collection of specified linear maps $P_s:A\to A_s$,
with $\dim A_s=d_s$ and $0\le p_s<d_s$, define
\[
 Y_s(T):\bigwedge^{p_s}A_s\otimes B^*
       \longrightarrow\bigwedge^{p_s+1}A_s\otimes C,
 \qquad
 Y_s(a\otimes b\otimes c)(\omega\otimes\beta)
     =(P_sa\wedge\omega)\otimes c\,\beta(b).
\]
On a nonzero product tensor this map has rank
$k_s=\binom{d_s-1}{p_s}$ if $P_sa\ne0$, and rank zero otherwise.
Write $M_s=Y_s(T)$, choose a matrix $L_s$ spanning its left kernel, and set
\[
 G_s:A\otimes C\longrightarrow
       \operatorname{Hom}\!\left(\bigwedge^{p_s}A_s,\mathrm{codomain}(L_s)\right),
 \qquad
 G_s(a\otimes c)(\omega)=L_s\bigl((P_sa\wedge\omega)\otimes c\bigr).
\]
Thus $G_s$ is linear in the \emph{original} paired-product coordinates.

\begin{lemma}[Saturated projected images]\label{lem:saturatedprojection}
Suppose $\rank M_s=rk_s$ for each selected $s$, and
\[
 \bigcap_s\ker G_s=\spn\{a_i\otimes c_i:1\le i\le r\}=:S,
 \qquad \dim S=r.
\]
If the only rank-one directions in $S$ are the displayed $r$ directions,
then $T$ has rank $r$ and a unique minimal decomposition.
\end{lemma}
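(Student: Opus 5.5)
Take an arbitrary decomposition $T=\sum_{j=1}^{r'}a'_j\otimes b'_j\otimes c'_j$ with $r'\le r$ and nonzero factors. The goal is to show that each paired product $a'_j\otimes c'_j$ lies in $S$. Once that is known, the rank-one hypothesis on $S$ makes each $a'_j\otimes c'_j$ proportional to some $a_i\otimes c_i$. Independence of the $r$ displayed paired products then finishes the argument: $T$ viewed as a map $B^*\to A\otimes C$ has image exactly $S$, of dimension $r$. Hence $r'\ge r$, so $r'=r$, and the $r'$ primed paired products must span $S$. Since each is proportional to a displayed one and they span an $r$-dimensional space, they hit every displayed direction exactly once. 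Comparing coefficients in the basis $\{a_i\otimes c_i\}$ then forces $b'_j$ to be the matching $b_i$ up to the gauge scalar. This gives rank $r$ and uniqueness up to permutation and gauge.

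The core step is to show $G_s(a'_j\otimes c'_j)=0$ for every $s$ and every $j$. By linearity of $Y_s$,
\[
M_s=\sum_{j=1}^{r'}Y_s(a'_j\otimes b'_j\otimes c'_j).
\]
Each summand has rank at most $k_s$, so $\rank M_s\le r'k_s\le rk_s$. Equality $\rank M_s=rk_s$ therefore forces $r'=r$, forces $P_sa'_j\ne0$ for every $j$, and forces the images of the summands to form a direct sum decomposition of $\operatorname{im}M_s$. In particular, each summand's image lies in $\operatorname{im}M_s$.

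The image of $Y_s(a'\otimes b'\otimes c')$ is $\{(P_sa'\wedge\omega)\otimes c':\omega\in\bigwedge^{p_s}A_s\}$, because $b'\ne0$ lets the factor $\beta(b')$ take any value. Since $L_s$ annihilates $\operatorname{im}M_s$, we get $L_s\bigl((P_sa'_j\wedge\omega)\otimes c'_j\bigr)=0$ for all $\omega$, which says exactly $G_s(a'_j\otimes c'_j)=0$.

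Intersecting over $s$ places every $a'_j\otimes c'_j$ in $\bigcap_s\ker G_s=S$. Note that $a'_j\otimes c'_j\ne0$, so each one is a genuine rank-one direction in $S$. This is where saturation does its work: a single projection's kernel may be too large, but the stacked intersection is exactly $S$ by hypothesis.

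The main obstacle is making the rank-count step fully rigorous. That step is the subadditivity argument showing each summand's image lies in $\operatorname{im}M_s$, since the summand images are subspaces of $\operatorname{im}M_s$ whose dimensions add up to $\dim\operatorname{im}M_s$. Along the way I also need to handle the possibility of $r'<r$ cleanly. Here $\rank M_s\le r'k_s<rk_s$ already contradicts the hypothesis, so the rank is exactly $r$ with no separate argument needed.

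A second minor point is the final matching step. Proportionality $a'_j\otimes c'_j=\lambda_j a_{\pi(j)}\otimes c_{\pi(j)}$ for a map $\pi$ that must be a bijection follows from spanning. Then $\sum_i a_i\otimes b_i\otimes c_i=\sum_j\lambda_j a_{\pi(j)}\otimes b'_j\otimes c_{\pi(j)}$, and independence of the paired products gives $b_{\pi(j)}=\lambda_jb'_j$. This is the required uniqueness up to permutation and gauge.
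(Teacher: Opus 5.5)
Your route is the paper's: saturation forces every term of a competitor of length $r'\le r$ to have $Y_s$-rank exactly $k_s$, with the term images forming a direct sum equal to $\operatorname{im}M_s$. Hence each competing pair $a'_j\otimes c'_j$ lies in every $\ker G_s$, so in $S$, so on a displayed rank-one line, and $r'=r$. That part is correct.

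The gap is in the matching step. You claim that $T$, viewed as a map $B^*\to A\otimes C$, has image exactly $S$. From this you deduce that the primed pairs span $S$ and so meet each displayed line once. But that image is $\{\sum_i\beta(b_i)\,a_i\otimes c_i:\beta\in B^*\}$, whose dimension is $\dim\spn\{b_i\}$. This equals $r$ only if the $b_i$ are independent, and the hypotheses do not give that. Saturation only forces $r(d_s-p_s)\le d_s\dim\spn\{b_i\}$ (see Section~\ref{app:kmwcomparison}). In the paper's own applications independence fails: the exceptional ten-supports have all three factor spans of dimension $7<10$. So, as written, nothing excludes two competing terms landing on the same displayed line while another line is missed.

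The repair uses only what you already have. One option is the paper's: once $\rank T=r$, every $r$-term decomposition is minimal, so its paired products are linearly independent. A dependence would let one term's $B$-factor be absorbed into the others, shortening the decomposition. Then $r$ independent vectors on $r$ given lines meet each line exactly once. The other option uses your own direct-sum conclusion. If $\pi(j)=\pi(j')$ with $j\ne j'$, then $a'_j\propto a'_{j'}$ and $c'_j\propto c'_{j'}$. The two terms then have the same nonzero $Y_s$-image, contradicting directness of the sum. Either way $\pi$ is a bijection. Your coefficient comparison in the independent basis $\{a_i\otimes c_i\}$ then correctly gives $b_{\pi(j)}=\lambda_j b'_j$. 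That final step, unlike the spanning claim, does not need the $b_i$ to be independent.
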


\begin{proof}
A saturated map gives $\rank T\ge r$, and the displayed expression attains
the bound.  For any other $r$-term decomposition, its $r$ summand matrices
under $Y_s$ have ranks at most $k_s$, while their sum has rank $rk_s$.
Consequently every rank is $k_s$, and the sum of their image spaces is exactly
$\operatorname{im}M_s$.  In particular no projected factor vanishes and each
summand image is annihilated by $L_s$.
For $b\ne0$, the image of $Y_s(a\otimes b\otimes c)$ equals the image of
$\omega\mapsto(P_sa\wedge\omega)\otimes c$.  Hence every original
alternative pair $a\otimes c$ belongs to every $\ker G_s$, and therefore to $S$.

Paired products in a minimal decomposition are independent: a dependence
allows one term to be eliminated by absorbing its remaining factor into the
others.  Thus the alternative uses all $r$ known pair directions once each.
After rescaling those pairs, comparison of their independent coefficients
in $(A\otimes C)\otimes B$ recovers exactly the displayed $b_i$.
The same independence identifies each factor span with its intrinsic
flattening image, so alternatives initially written in larger ambient spaces
are covered as well.
\end{proof}

The rank-one assertion has the following finite sufficient certificate,
which is precisely the two-factor specialization of
\cite[Cor.~19]{LovitzPetrov}:
for every $I\subseteq\{1,\ldots,r\}$ with $|I|\ge2$, check
\begin{equation}\label{eq:pairedsylvester}
 \rank[a_i]_{i\in I}+\rank[c_i]_{i\in I}-|I|\ge2.
\end{equation}
Indeed, a paired sum supported exactly on $I$ is the matrix
$[a_i]_{i\in I}\operatorname{diag}(\lambda_i)[c_i]_{i\in I}^{\mathsf T}$,
with all $\lambda_i\ne0$, and Sylvester's inequality bounds its rank below
by the left side of \eqref{eq:pairedsylvester}.  Exact rational checks of these
factor ranks therefore exclude all additional \emph{complex} rank-one points.

\begin{lemma}[Reduced paired intersection]\label{lem:pairedreduced}
Suppose $q_i=a_i\otimes c_i$ are independent and satisfy
\eqref{eq:pairedsylvester}.  Then, for every $i$,
\[
 S\cap(a_i\otimes C+A\otimes c_i)=\bbC q_i.
\]
Consequently $\mathbb P(S)\cap\operatorname{Seg}(\mathbb P(A)\times
\mathbb P(C))$, defined scheme-theoretically by the restricted $2\times2$
minors, consists of exactly $r$ reduced points.
\end{lemma}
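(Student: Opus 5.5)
Take $x\in S\cap(a_i\otimes C+A\otimes c_i)$ and write $x=\sum_j\lambda_j q_j$. Then $x-\lambda_i q_i$ also lies in $a_i\otimes C+A\otimes c_i$, and it is supported on $I=\{j\ne i:\lambda_j\ne0\}$. Every element of $a_i\otimes C+A\otimes c_i$ has matrix rank at most $2$, and the first step is to sharpen this. Suppose $I\ne\emptyset$. Apply \eqref{eq:pairedsylvester} to $I\cup\{i\}$: the element $x=\sum_{j\in I\cup\{i\}}\lambda_jq_j$ with $\lambda_i\ne0$ has rank at least $\rank[a_j]_{I\cup\{i\}}+\rank[c_j]_{I\cup\{i\}}-|I|-1\ge2$ when $\lambda_i\ne0$. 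If instead $\lambda_i=0$, apply the inequality to $I$ itself and get rank at least $2$ when $|I|\ge2$. If $|I|=1$ with $\lambda_i=0$, then $x=\lambda_jq_j$ is rank one and lies in $a_i\otimes C+A\otimes c_i$. Pairwise non-proportionality, which is \eqref{eq:pairedsylvester} for $|I|=2$, forces $a_j\not\parallel a_i$ and $c_j\not\parallel c_i$, and this makes membership impossible: a rank-one $a_j\otimes c_j$ in the sum space needs $a_j\parallel a_i$ or $c_j\parallel c_i$.

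So the real work is the rank-two case. My plan is to argue via column and row spaces. An element $y=a_i\otimes\gamma+\alpha\otimes c_i$ has column space inside $\spn(a_i,\alpha)$ and row space inside $\spn(\gamma,c_i)$. Consequently the matrix $y-\mu q_i$ has column space containing $a_i$ or row space containing $c_i$ only in constrained ways. The cleaner route is to use a quotient. Pass to $\bar A=A/\langle a_i\rangle$ and $\bar C=C/\langle c_i\rangle$. The image of $a_i\otimes C+A\otimes c_i$ in $\bar A\otimes\bar C$ is zero, so $\sum_{j\ne i}\lambda_j\bar a_j\otimes\bar c_j=0$.

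I would then apply the Sylvester bound to the reduced families on $I$. We have $\rank[\bar a_j]_I\ge\rank[a_j]_{I\cup\{i\}}-1$ and similarly for $c$. Since $\lambda_j\ne0$ on $I$, the inequality \eqref{eq:pairedsylvester} on $I\cup\{i\}$ yields
\[
\rank\Bigl(\sum_{j\in I}\lambda_j\bar a_j\otimes\bar c_j\Bigr)\ge(\rank[a_j]_{I\cup\{i\}}-1)+(\rank[c_j]_{I\cup\{i\}}-1)-|I|\ge2+|I|+1-2-|I|=1.
\]
This contradicts the vanishing, so $I=\emptyset$ and $x\in\bbC q_i$. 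This is the step I expect to be the crux, and it absorbs the case analysis above uniformly. It also needs the case $|I|=1$ handled, which the same count covers because $|I\cup\{i\}|=2\ge2$.

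For the scheme-theoretic consequence, I would identify the Zariski tangent space of $\operatorname{Seg}$ at $[q_i]$ with $(a_i\otimes C+A\otimes c_i)/\bbC q_i$. The tangent space of the intersection $\mathbb P(S)\cap\operatorname{Seg}$ is then the intersection of tangent spaces, $(S\cap(a_i\otimes C+A\otimes c_i))/\bbC q_i=0$. Each point is therefore reduced. That the set-theoretic rank-one points of $\mathbb P(S)$ are exactly the $[q_i]$ follows from \eqref{eq:pairedsylvester} exactly as in the remark preceding the lemma, and independence makes them $r$ distinct points.
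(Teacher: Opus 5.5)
Your proposal is correct and is essentially the paper's proof. The quotient step matches it: modulo $\bbC a_i$ and $\bbC c_i$ the sum must vanish, while Sylvester's inequality on $I\cup\{i\}$ bounds the rank of the reduced sum below by $1$. Reducedness then follows from the vanishing tangent space of the intersection via Nakayama, as in the paper; your opening case analysis is superseded by the quotient argument and can simply be dropped.
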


\begin{proof}
If a nonzero sum $\sum_{j\ne i}\lambda_jq_j$ belongs to the displayed
tangent space, let $J$ be its nonempty coefficient support and put
$I=J\cup\{i\}$.  Modulo the lines $\bbC a_i$ and $\bbC c_i$ the sum
vanishes, whereas Sylvester's inequality bounds its matrix rank below by
\[
 (\rank[a_j]_{j\in I}-1)+(\rank[c_j]_{j\in I}-1)-|J|
 =\rank[a_j]_{j\in I}+\rank[c_j]_{j\in I}-|I|-1\ge1.
\]
The displayed space is the affine tangent space to the Segre cone at $q_i$.
This proves the tangent assertion.  The preceding set-theoretic argument
gives exactly the $r$ displayed support points.  Their projective tangent
spaces in the intersection are zero, so each local Artinian maximal ideal
$\mathfrak m$ satisfies $\mathfrak m/\mathfrak m^2=0$.
Nakayama's lemma gives $\mathfrak m=0$, proving reducedness.
\end{proof}

\begin{lemma}[Finite ambient completion]\label{lem:projectionambient}
Suppose a certificate as in Lemma~\ref{lem:saturatedprojection} is given
on the factor spans $A_0,B_0,C_0$, with $r\ge2$ and $m$ selected maps.
In any larger spaces $A,B,C$, at most
$m(1+\dim(A/A_0))$ extensions of those maps give the same saturated ranks
and common pair kernel $S$.  Rational data admit rational extensions.
\end{lemma}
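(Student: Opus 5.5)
The proof I have in mind builds the extended maps explicitly. Fix a complement $A=A_0\oplus A'$ with basis $e_1,\dots,e_n$ of $A'$, where $n=\dim(A/A_0)$. Choose complements $B=B_0\oplus B'$ and $C=C_0\oplus C'$ as well. The claimed family has two parts. First come the $m$ extensions $\tilde P_s$ of the given maps by zero on $A'$. Second, for each $s$ and each $\ell$, there is one further extension $P_{s,\ell}$: it agrees with $P_s$ on $A_0$, sends $e_\ell$ to a vector $y_{s,\ell}\in A_s$ still to be chosen, and kills the other $e_k$. This gives $m(1+n)$ maps. Since all factors $a_i,b_i,c_i$ lie in $A_0,B_0,C_0$, every extended $Y$-map on $T$ agrees with the original $M_s$ after enlarging the domain by $B'^{*}$-directions that it kills. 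It also agrees after enlarging the codomain by $\bigwedge^{p_s+1}A_s\otimes C'$, which it misses. So the ranks stay $rk_s$ and saturation is preserved.

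Next I compute the common kernel. The left kernel $\tilde L_s$ of the enlarged $M_s$ contains $L_s$ together with every functional on the $C'$-component. Take $x=x_0+x_C+x_{A'}$ with $x_0\in A_0\otimes C_0$, $x_C\in A_0\otimes C'$ and $x_{A'}\in A'\otimes C$. For the maps $\tilde P_s$, the $C'$-functionals force $(P_s\otimes\mathrm{id})x_C$ to be zero after wedging with every $\omega$. I would then like to conclude $x_C=0$, but this only follows if $\bigcap_s\ker P_s=0$ on $A_0$. I will need that fact, and plan to get it from the original certificate: otherwise $\ker P_s\otimes C_0$ would lie in every $\ker G_s$, hence in $S$. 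Since $r\ge2$ and \eqref{eq:pairedsylvester} holds, $S$ contains no subspace of the form $a\otimes C_0$ with $\dim C_0\ge2$, so such an inclusion is impossible. The $L_s$-part then puts $x_0\in S$ by the original certificate. The maps $\tilde P_s$ kill $x_{A'}$, so they impose nothing on it. Finally, the map $P_{s,\ell}$ sees $x_{A'}$ only through the $e_\ell$-component $e_\ell\otimes\gamma_\ell$, which it sends to $(y_{s,\ell}\wedge\omega)\otimes\gamma_\ell$, together with the contribution of $x_0$.

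The key requirement is therefore on the choice of $y_{s,\ell}$. Consider the linear conditions that $\tilde L_s$ annihilates $(P_sx_0\wedge\omega)\otimes(\cdot)+(y_{s,\ell}\wedge\omega)\otimes\gamma_\ell$ for all $\omega$. I want these, combined over the finitely many $s$ and with $x_0\in S$ already known, to force $\gamma_\ell=0$, so that the common kernel is exactly $S$. For a single $\ell$ the relevant map is linear in $(x_0,\gamma_\ell)$. Its injectivity in $\gamma_\ell$ is the non-vanishing of a suitable maximal minor, which is polynomial in $y_{s,\ell}$. The failure set is therefore Zariski closed, and it suffices to exhibit one $y$ off it.

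I expect this exhibition to be the main obstacle. The danger is $y=P_s a_i$, which reproduces a known pair $a_i\otimes c_i$ and leaves a kernel direction. My first attempt is to take $y_{s,\ell}=P_s(a_1+a_2)$, using $r\ge2$. Then $e_\ell\otimes\gamma$ can only mimic $(a_1+a_2)\otimes\gamma$, and this is excluded because \eqref{eq:pairedsylvester} with $I=\{1,2\}$ shows that $(a_1+a_2)\otimes\gamma\notin S$ for $\gamma\ne0$. Here one must check that no combination of $x_0$ with the $e_\ell$-term rescues membership in the kernel. If this direct choice does not close, I would fall back on a genericity argument: the bad locus is proper, so a generic, and in particular a rational, $y$ works. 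This also gives the rationality clause, since a nonempty Zariski-open subset of an affine space over $\bbQ$ has rational points, and the chosen complements and zero extensions are rational. The count $m(1+n)$ is then an upper bound on the number of maps needed.
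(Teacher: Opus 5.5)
Your construction has the same shape as the paper's: zero extensions $\tilde P_s$, plus one map per $(s,\ell)$ that agrees with $P_s$ on $A_0$. The zero extensions, together with $\bigcap_s\ker P_s=0$ and injectivity of $u\mapsto(\omega\mapsto u\wedge\omega)$, pin the common kernel on $A_0\otimes C$ to $S$. After that, each $\gamma_\ell$ must satisfy $\tilde L_s\bigl((y_{s,\ell}\wedge\omega)\otimes\gamma_\ell\bigr)=0$ for all $s,\omega$.

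The gap is the step you flag yourself: exhibiting a good $y$. Your choice $y_{s,\ell}=P_s(a_1+a_2)$ works exactly when $a_1+a_2$ lies on none of the lines $\bbC a_i$. The inequality \eqref{eq:pairedsylvester} for $I=\{1,2\}$ does not give this. It only constrains combinations of $a_1\otimes c_1$ and $a_2\otimes c_2$, whereas $(a_1+a_2)\otimes\gamma$ can equal a third displayed pair.

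Here is a concrete counterexample. Take $A_0=\bbC^2$, $a_1=e_1$, $a_2=e_2$, $a_3=e_1+e_2$, let $b_1,b_2,b_3$ and $c_1,c_2,c_3$ be bases of $\bbC^3$, and use the single map $P=\mathrm{id}$ with $p=0$. Then:
\begin{itemize}
\item the flattening has rank $3$ and its pair kernel is $S$;
\item $S$ contains exactly the three displayed rank-one directions;
\item every inequality \eqref{eq:pairedsylvester} holds, each with left side $2$.
\end{itemize}
Now adjoin $e_3$ to $A$. Your two maps send $e_3$ to $0$ and to $a_3$, and $e_3\otimes c_3-a_3\otimes c_3$ lies in both kernels but not in $S$. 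The genericity fallback does not rescue this. Properness of the bad locus is exactly the existence of one good point, which is what is missing.

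The missing idea, which is the paper's, is to choose $a_*\in A_0$ off the finitely many lines $\bbC a_i$ and set $y_{s,\ell}=P_s(a_*)$. Such an $a_*$ can be taken rational inside a rational plane of $A_0$, since $\dim A_0\ge2$. Then $\gamma\mapsto(G_s(a_*\otimes\gamma))_s$ is injective on all of $C$:
\begin{itemize}
\item the $C'$-component of $\gamma$ vanishes, because some $P_sa_*\ne0$ and wedging is injective;
\item a nonzero $\gamma\in C_0$ in the kernel would make $a_*\otimes\gamma$ an unlisted rank-one direction of $S$.
\end{itemize}
This also settles rationality directly, with no Zariski-density argument.

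Separately, you invoke \eqref{eq:pairedsylvester} to obtain $\dim C_0\ge2$ and $\bigcap_s\ker P_s=0$. A certificate as in Lemma~\ref{lem:saturatedprojection} assumes only that $S$ has exactly the $r$ displayed rank-one directions. The paper also applies the present lemma to Kothari--Moitra--Wein certificates, where the Sylvester inequalities may fail. Argue from the rank-one hypothesis instead: with $r\ge2$, either $\dim C_0=1$ or a nonzero $a\in\bigcap_s\ker P_s$ would put infinitely many rank-one directions into $S$.
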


\begin{proof}
Write $A=A_0\oplus D$ and $C=C_0\oplus E$, and first extend each $P_s$
by zero on $D$, obtaining $\Pi_s$.  At $T$ the enlarged Koszul matrix
has the same image $W_s\subseteq\bigwedge^{p_s+1}A_s\otimes C_0$
and rank; enlarging $B$ only adds zero input directions.
Choose $L_s$ on the full codomain with kernel $W_s$.
Both $\dim A_0$ and $\dim C_0$ are at least two: otherwise all directions
in the $r$-dimensional $S$ would be rank one.  Also
$\bigcap_s\ker P_s=0$ on $A_0$, since a nonzero vector $a$ there would
give $a\otimes C_0\subseteq S$, contradicting finite rank-one support.
For $0\le p_s<d_s$, the map
$u\mapsto(\omega\mapsto u\wedge\omega)$ is injective.
Thus the equations over $E$ for the zero extensions, followed by
$\bigcap_s\ker P_s=0$, show that their common kernel on $A_0\otimes C$
is exactly $S$.

Choose $a_*\in A_0$ outside the finitely many lines $\bbC a_i$.
The map $c\mapsto(G_s(a_*\otimes c))_s$ on the full $C$ is injective,
since a nonzero element of its kernel would give an unlisted rank-one
direction in $S$.  For a basis $e_u$ of $D$, let $\ell_u$ be its
coordinate functionals, extended by zero on $A_0$, and add the maps
\[
 \Pi_{s,u}(x)=\Pi_s(x)+\ell_u(x)P_s(a_*).
\]
They agree on $A_0$, so their matrices at $T$ have the same $W_s,L_s$
and saturated rank.  For
$Q=Q_0+\sum_u e_u\otimes c_u$, with $Q_0\in A_0\otimes C$, the difference
of the pair equations for $\Pi_{s,u}$ and $\Pi_s$ is
$G_s(a_*\otimes c_u)$.  All differences vanishing forces every $c_u=0$;
the original equations then give $Q_0\in S$.  Conversely every displayed
pair satisfies every equation.  This proves the bound of $m$ base maps
and at most $m\dim D$ additions, including all mixed outside-support terms.
For rational data choose rational complements and $a_*$; the latter exists
by avoiding finitely many lines in a rational plane in $A_0$.
\end{proof}

\begin{theorem}[Openness on the secant variety]\label{thm:projectionopen}
Under the hypotheses of Lemma~\ref{lem:saturatedprojection}, let $r\ge2$
and suppose the paired Segre intersection is the $r$ displayed
\emph{reduced} points.  There is a Zariski-open neighborhood of $T$ in
the reduced affine $r$th secant variety of
$\operatorname{Seg}(\mathbb P(A)\times\mathbb P(B)\times\mathbb P(C))$
on which the same fixed projections certify rank $r$ and a unique minimal
complex decomposition.  The recovered paired points form a finite
\'{e}tale family of degree $r$ and span the common pair kernel.
After arbitrary enlargement of the ambient spaces the same conclusions
hold using the finite completion of Lemma~\ref{lem:projectionambient}.
No normality or smoothness of the secant variety is required.
\end{theorem}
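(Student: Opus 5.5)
The plan is to show that every hypothesis of Lemma~\ref{lem:saturatedprojection} is an open condition in a neighborhood of $T$ on the reduced secant variety $\sigma_r$, once the paired Segre points are followed along an étale family. Work first on the factor spans, then apply Lemma~\ref{lem:projectionambient} to reach the ambient spaces.

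First, saturation. Each $M_s=Y_s(T)$ depends linearly on $T$. Under any $r$-term expression each summand contributes rank at most $k_s$, so $\rank Y_s(T')\le rk_s$ on all of $\sigma_r$: this holds on the dense set of sums of $r$ products, and rank is lower semicontinuous, so the bound passes to the closure. The condition $\rank Y_s(T')\ge rk_s$ is open, given by the nonvanishing of some minor. So on a neighborhood the rank is exactly $rk_s$. There the left kernel varies algebraically: we can choose $L_s(T')$ as a regular matrix function, for instance via a fixed complement and Cramer's rule on a nonvanishing minor. Hence $G_s(T')$ is regular and $\dim\bigcap_s\ker G_s(T')$ is upper semicontinuous.

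Second, the kernel $S(T')$ must have dimension exactly $r$. The upper bound $\dim\le r$ is open. For the lower bound, at points that are genuine $r$-term sums, the first half of the proof of Lemma~\ref{lem:saturatedprojection} puts all $r$ pair products into the kernel. I expect the real difficulty at points of $\sigma_r$ that are only border limits, where a priori there are no pairs. To handle this, consider the incidence scheme
\[
Z=\{(T',x): T'\in\sigma_r,\ x\in\mathbb P(S(T'))\cap\operatorname{Seg}\}
\]
on the open set where $\dim S(T')\le r$. Its fibre over $T$ is $r$ reduced points, by the hypothesis or by Lemma~\ref{lem:pairedreduced}. Projecting $Z\to\sigma_r$ is then unramified at those points, and the key step is to show it is étale of degree $r$ near $T$.

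This is the main obstacle. The route I would try is as follows. The locus of genuine $r$-term sums contains a dense open subset of $\sigma_r$. Over that subset $Z$ has at least $r$ distinct points in each fibre, the pair products, which are independent. Since the fibre over $T$ has length exactly $r$, upper semicontinuity of fibre length for the finite part shows that, after shrinking, each fibre has length at most $r$. Together with the dense set where the length is at least $r$, this gives a finite flat family of constant degree. The argument has to go through the irreducible components of $Z$ passing through the $r$ points, each dominating $\sigma_r$ because it contains the pair-product section. Unramified plus flat gives étale. Normality of $\sigma_r$ is avoided because the points $x_i(T')$ are produced directly as sections of the unramified cover, which is locally an isomorphism onto its image near each point, rather than through Zariski's main theorem.

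Third, conclude. The $r$ sections $x_i(T')$ remain linearly independent by openness, so they span $S(T')$ and give the only rank-one directions there. Each $x_i$ is a product $a_i'\otimes c_i'$. Coefficient comparison in $(A\otimes C)\otimes B$ against $T'$ yields $b_i'$ regularly, and exhibits $T'$ as an $r$-term sum. Then Lemma~\ref{lem:saturatedprojection} applies pointwise, giving rank $r$ and a unique minimal decomposition. The recovered paired points are the fibres of the étale family. Finally, replacing the maps $P_s$ by the finite completion of Lemma~\ref{lem:projectionambient} reproduces the same saturated ranks and kernel at $T$ in the enlarged spaces, and the argument above runs unchanged.
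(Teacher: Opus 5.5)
Your outline follows the paper's route. You build a chart where each $Y_s$ has rank $rk_s$ with regular left kernels, and the incidence scheme $Z$ of rank-one points in the common kernel. You need finiteness after removing the image of the non-quasi-finite locus; state this step explicitly, because semicontinuity of fibre length needs $\pi\colon Z\to U$ finite. You then get at least $r$ points over the dense locus of genuine decompositions, constant degree, \'etaleness from the reduced central fibre, and finish with Lemma~\ref{lem:saturatedprojection} pointwise and Lemma~\ref{lem:projectionambient} for enlargement. Two of your steps differ from the paper's but are valid. You get local freeness from constant fibre length over a reduced base, where the paper uses Nakayama over the local domain. You get \'etaleness from flat plus unramified, where the paper inverts the trace discriminant. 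The gaps are in your third step.

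\textbf{First gap: the inclusion $T'\in S(T')\otimes B$.} Coefficient comparison against $T'$ presupposes this inclusion, i.e.\ that every contraction $T'(\beta)$, $\beta\in B^*$, lies in the common kernel. At border points, which you single out as the difficulty, you never prove it. Without it, the $r$ rank-one directions in $S(T')$ say nothing about $T'$, so neither $\rank T'\le r$ nor the hypotheses of Lemma~\ref{lem:saturatedprojection} follow. The missing line is $G_s(T')\bigl(T'(\beta)\bigr)(\omega)=L_s(T')\,Y_s(T')(\omega\otimes\beta)=0$, which holds because $L_s(T')$ spans the left kernel of $Y_s(T')$. The recovered $b_i'$ are then nonzero, since saturation forces rank at least $r$.

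\textbf{Second gap: the $r$ sections do not exist.} An \'etale map is not a Zariski-local isomorphism. Moreover, the summands of a generic decomposition are permuted by monodromy; already for $2\times2\times2$ tensors the two summands are exchanged around the hyperdeterminant. So $Z$ is irreducible of degree $r\ge2$ over $U$ and has no section over any nonempty Zariski-open set. Your ``independent by openness'' step is therefore unsupported, and so is your remark about a ``pair-product section''.

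Argue fibrewise instead, as the paper does. Restricting linear forms gives a map from the dual of the kernel bundle to $\pi_*\mathcal O_Z(1)$, both locally free of rank $r$. It is an isomorphism at $T$ because the central points are independent. On the nonvanishing locus of its determinant, every fibre consists of $r$ independent rank-one directions spanning $S(T')$.

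Separately, $\dim S(T')\ge r$ needs no \'etale family. It is a closed condition that holds on the dense locus of genuine decompositions, hence everywhere on the chart.
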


\begin{proof}
First work in the given ambient spaces.  Denote the reduced affine secant
variety by $\Sigma$, and put $N=\dim A\dim C$ and $\rho_s=rk_s$.
It is integral, being the reduced closure of the image of the irreducible
space of ordered $r$-term decompositions.  The rank-one bound gives
$\rank Y_s(T')\le\rho_s$ throughout $\Sigma$.  Choose nonzero central
$\rho_s$-minors.  On their simultaneous nonvanishing chart the ranks equal
$\rho_s$, and elimination gives regular full left-kernel matrices $L_s(T')$
and a regular stacked pair map $G(T')$.

The ordered tuples with independent pairs form a nonempty open subset
of the tuple space, hence have dense image in $\Sigma$.  On this dense
image inside the chart, saturation puts their $r$ independent pairs in
$\ker G(T')$.  All $(N-r+1)$-minors of $G$ therefore vanish identically.
A central nonzero $(N-r)$-minor now gives, on a smaller neighborhood $U$,
a rank-$r$ kernel bundle $K$.  Moreover
$T'\in B\otimes K(T')$ for every $T'\in U$: for each $\beta\in B^*$
and $\omega$, applying $G_s$ to the $B$-contraction $T'(\beta)$ gives
$L_s(T')Y_s(T')(\omega\otimes\beta)=0$.

Let $\pi:Z\to U$ be the closed incidence subscheme of $\mathbb P(K)$
cut out by the paired Segre minors.  Its central fiber is reduced of
length $r$.  The morphism is projective, and every point of that fiber
is in its quasi-finite locus.  Remove from $U$ the closed image of the
complement of this locus.  The resulting morphism is proper and
quasi-finite, hence finite~\cite[Tag 02LS]{StacksProject}.
The geometric generic fiber has at least $r$ distinct points:
the dominant map from the open tuple locus above supplies a decomposition
with independent pairs after a field extension, and saturation places
those pairs in this fiber.

Set $R=\mathcal O_{U,T}$, a domain, and let $M$ be the finite $R$-algebra
of this incidence.  Since $M/\mathfrak m_RM\cong\bbC^r$, Nakayama gives
a surjection $R^r\twoheadrightarrow M$.  The generic-fiber observation
bounds the generic rank of $M$ below by $r$, and the surjection bounds it
above by $r$.  Its kernel has rank zero and is a submodule of a free module
over a domain, so it vanishes.  Hence $M$ is free of rank $r$; spreading
this isomorphism to a neighborhood makes $\pi$ finite locally free of
degree $r$.  The central trace pairing is nondegenerate because its algebra
is $\bbC^r$.  Invert its discriminant to make $\pi$ finite
\'{e}tale~\cite[Tag 0BJF]{StacksProject}.

Restriction of linear forms gives
$K^*\to\pi_*\mathcal O_Z(1)$ between rank-$r$ bundles; the target is
locally free because $\pi$ is finite locally free and $\mathcal O_Z(1)$
is invertible.  At $T$ this map is an isomorphism, since the displayed
points are independent.  Retaining its nonzero determinant makes all
fibers consist of exactly $r$ independent rank-one directions spanning
$K(T')$.  The inclusion $T'\in B\otimes K(T')$ supplies an $r$-term
decomposition.  Saturation gives rank at least $r$, so none of its terms
vanishes, and Lemma~\ref{lem:saturatedprojection} proves uniqueness.

For larger ambient spaces first apply Lemma~\ref{lem:projectionambient}.
The paired intersection in the fixed $S\subseteq A_0\otimes C_0$ is
unchanged as a scheme: restricting the enlarged matrix minors to $S$
gives exactly the old minors and zero equations.  The preceding proof
therefore applies in the larger secant variety as well.
\end{proof}

The reducedness hypothesis is supplied by Lemma~\ref{lem:pairedreduced}
for our subset certificates.  A mere count of rank-one directions does
not control the central scheme length in the freeness argument.
The restriction $r\ge2$ in the ambient completion is also material:
for $r=1$, targets with $d_s=1,p_s=0$ can have zero pair equations even
after enlarging $A$.  Rank-one uniqueness itself holds on the nonzero
locus of the first secant variety.

\paragraph{An explicit certificate for the fourth exceptional subset.}
The summand labels are
\[
 4,12,15,17,25,28,32,34,38,40.
\]
Use the $A$ basis $(a_4,a_{12},a_{15},a_{17},a_{25},a_{28},a_{32})$,
and the two maps $P_0,P_1:\bbC^7\to\bbC^5$ given by
\[
 P_0=\left[I_5\,\middle|\,
 \begin{matrix}-7&2\\-3&-5\\-7&4\\-5&3\\8&3\end{matrix}\right],
 \qquad
 P_1=\left[I_5\,\middle|\,
 \begin{matrix}-1&-7\\-2&8\\-1&5\\9&-6\\-9&-3\end{matrix}\right].
\]
Take $p_0=p_1=2$.  Both $70\times70$ Koszul matrices have exact rank
$60=10\binom42$.  Each projected horizontal pair map has size
$100\times35$ and rank $23$, so its individual kernel has dimension $12$.
After composition with $P_s\otimes\mathrm{id}_C$, their stack on the
original $49$ paired-product coordinates has size $200\times49$ and exact
rank $39$.  The ten displayed original $A$--$C$ product columns are
independent and annihilated by this stack; they therefore form its entire kernel.
The minimum left sides of \eqref{eq:pairedsylvester}, for support sizes
$2,\ldots,10$, are
\[
 2,3,3,4,4,4,4,5,4.
\]
These exact checks cover all $1013$ subsets of size at least two.
Lemma~\ref{lem:saturatedprojection} proves rank ten and global uniqueness
over $\bbC$ for the original tensor.  In particular, no assertion that either
individual projected kernel already equals the displayed span is needed.
Lemma~\ref{lem:pairedreduced} and Theorem~\ref{thm:projectionopen} also
give a certified open neighborhood in the rank-ten secant variety.
In the original $16$-dimensional $A$ space, ambient completion uses at most
$2(1+16-7)=20$ projections.  Pullback along a regular map from an
irreducible algebraic curve into this secant variety through the tensor gives a nonempty open
set, with finite complement for a finite-type curve; rational families
must first exclude their poles.  For the actual parameter family,
Corollary~\ref{cor:strong10curve} combines the uniform core theorem with
finite open conditions on the remaining supports to obtain generic strong
radius ten, without asserting a component-wide strong-ten bound.

% Full raw-coordinate input is checked before seven-dimensional compression.
% Primary certificate:
% reports/unique10_followup/xhigh/remaining_half/combined_projection.py
% Independent implementation, importing none of the primary construction:
% reports/unique10_followup/zero_corner/remaining_half/stacked_projection_independent.py

\section{Global uniqueness through eleven terms}\label{app:strong11}

This section uses the complete ten-set census of Proposition~\ref{prop:lp10}
and the strong-ten conclusion of Theorem~\ref{thm:strong10} as established
inputs. Its pointwise certificates concern the rational decomposition $D(2)$;
the final transfer argument establishes strong eleven on an open neighborhood
of $2$ along the actual parameter curve.
Write $d(I)=d_A(I)+d_B(I)+d_C(I)$ for its factor-span sum.

The next lemma reformulates \cite[Cor.~20]{LovitzPetrov}: in three modes
its proper-subpartition threshold is $n+r\le d([n])-2$.
Taking $r=n-1$ gives the rank threshold; taking $r=n$ and using
proper-subset uniqueness on the resulting blocks gives the uniqueness
threshold. These thresholds are inherited, not new splitting results.
We include the connected-component proof to expose the hypotheses used
by the finite certificates.

\begin{lemma}[Lovitz--Petrov splitting corollary]\label{lem:splitting11}
Let $x_1,\ldots,x_n$ be independent nonzero three-factor product tensors,
and suppose every nonempty proper displayed subset is rank-minimal.
If $d([n])\ge2n+1$, their sum has rank $n$.
If all proper displayed subsets are also identifiable and
$d([n])\ge2n+2$, its displayed decomposition is globally unique.
\end{lemma}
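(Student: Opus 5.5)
The plan is to reduce both assertions to the splitting dichotomy of \cite[Cor.~20]{LovitzPetrov}, recalled before the lemma, and then to handle the two sides of the dichotomy separately. Put $T=x_1+\cdots+x_n$ and suppose $T=y_1+\cdots+y_r$ with nonzero product tensors $y_j$. For the rank claim take $r\le n-1$. For the uniqueness claim take $r=n$ and assume the multiset $\{y_j\}$ differs from $\{x_i\}$ up to gauge.

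In both cases I first cancel any $y_j$ that coincides with some $x_i$. This leaves an equation between the remaining $x_i$ (indexed by $I$) and the remaining $y_j$. If $I$ is a nonempty proper subset of $[n]$, the hypotheses on proper displayed subsets already finish the argument: rank-minimality of $I$ forbids fewer than $|I|$ remaining $y$'s, and identifiability of $I$ forbids a different set of $|I|$ of them.

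If no cancellation occurs, I consider the vanishing sum $x_1+\cdots+x_n-y_1-\cdots-y_r=0$ and apply the splitting result. Its conclusion, in the form I intend to use, is the following. If $n+r\le d([n])-2$, then there is a nontrivial partition $[n]=I_1\sqcup\cdots\sqcup I_s$ with $s\ge2$, together with a matching partition of the $y$'s into groups $J_1,\dots,J_s$, such that $\sum_{i\in I_t}x_i=\sum_{j\in J_t}y_j$ for every $t$.

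First I verify that the threshold is met, using that the flattening images of $T$ equal the factor spans of the $x$'s. This is where independence of the $x_i$, and hence of their paired products, is used, as in the proof of Lemma~\ref{lem:kr}. The verification of the threshold is then immediate:
\begin{itemize}
\item for $r=n-1$ it reads $2n-1\le d([n])-2$, which is exactly $d([n])\ge2n+1$;
\item for $r=n$ it reads $2n\le d([n])-2$, which is exactly $d([n])\ge2n+2$.
\end{itemize}

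Next I treat the split. Each block equation $\sum_{i\in I_t}x_i=\sum_{j\in J_t}y_j$ exhibits a proper displayed subset $I_t$ together with a decomposition of length $|J_t|$.

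For the rank claim, rank-minimality of every proper subset gives $|J_t|\ge|I_t|$ for each $t$. Summing over $t$ gives $r\ge n$, contradicting $r\le n-1$.

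For the uniqueness claim, summing the same inequalities gives $n=r=\sum_t|J_t|\ge\sum_t|I_t|=n$, so $|J_t|=|I_t|$ for every $t$. Identifiability of each proper block $I_t$ then matches the $y$'s in $J_t$ with the $x$'s in $I_t$ up to gauge and permutation. Taking the union over blocks shows that $\{y_j\}$ is the displayed decomposition, which is the desired contradiction.

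I expect the main obstacle to be the application of the splitting step itself, specifically checking that its hypotheses hold in the form used here. The inequality must be stated in terms of the spans of the $x$'s alone, since the spans of the $y$'s are not known in advance. I would close this gap by noting that each $y_j$ lies in the same three flattening images, because paired products in a shortest competitor are independent, as argued in the proof of Theorem~\ref{thm:zerocorner}. For the non-minimal $r=n$ alternative I would handle this differently: I would first pass to a connected component of the vanishing sum, which is the connected-component argument mentioned before the lemma. On that component the spans are controlled by those of its $x$-part, so the inequality can be checked there.
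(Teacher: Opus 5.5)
Your proof is correct and follows the paper's route: the paper presents this lemma as exactly the subpartition form of \cite[Cor.~20]{LovitzPetrov} applied at $r\le n-1$ and $r=n$, with minimality and identifiability then used blockwise. Its written proof only unpacks that corollary, via the matroid components of $E=(x_1,\ldots,x_n,-y_1,\ldots,-y_r)$ and the contrapositive of \cite[Thm.~4]{LovitzPetrov}, which gives $d([n])-2\le d(E)-2\le\dim\spn E\le n+r-1$ for connected $E$.

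The obstacle you anticipate is not real. Adjoining the $y$'s can only enlarge factor spans, so $d(E)\ge d([n])$ and the $x$-only threshold applies directly. You can therefore drop the flattening-image detour, whose justification is in fact false: independent product tensors such as $a\otimes b\otimes c$ and $a\otimes b\otimes c'$ need not have independent paired products. Likewise, a non-minimal $n$-term competitor cannot occur once the rank part gives $\rank T=n$.
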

\begin{proof}
Compare with a minimal decomposition $\sum_{j=1}^r y_j$, $r\le n$, and
form the labeled vector multiset $E=(x_1,\ldots,x_n,-y_1,\ldots,-y_r)$.
The spans of its vector-matroid connected components form a direct sum,
so the zero sum restricts to a zero sum in each component.
Independence of the $x_i$ and minimality of the competitor exclude
components containing only one side. If $E$ is disconnected, each component
therefore contains a nonempty proper displayed subset. The corresponding
competing subdecomposition is minimal (otherwise the entire competitor
shortens), and hence has the same cardinality. Thus $r=n$; with proper-subset
uniqueness the tensor multisets also agree component by component.
A shorter competitor, or a distinct minimal competitor in the uniqueness
case, must consequently give connected $E$.
The contrapositive of the splitting theorem of
\cite[Thm.~4]{LovitzPetrov} gives
\[
 d([n])-2\le d_A(E)+d_B(E)+d_C(E)-2
 \le\dim\spn E\le n+r-1.
\]
Thus $r\ge d([n])-n-1$. The first threshold excludes $r<n$;
after rank $n$ is known, the second excludes distinct minimal competitors.
\end{proof}

The three paired-product matrices of $D(2)$ have column rank $48$,
certified by nonzero minors modulo $1000003$. This directly identifies
the intrinsic supports of every displayed subset with its factor spans;
it does not assume minimality of the full $48$-term decomposition.
In any minimal competitor the paired columns are independent, since a
dependence permits absorption of one remaining-mode factor into the others
and shortens the decomposition. Its factors therefore belong to those
same intrinsic supports. All subsequent coordinate compressions are checked
to be injective on the relevant spans, using all original coordinates.

\paragraph{An exhaustive reduction to the core.}
Use the sixteen-element set
\[
 X=\{4,9,12,15,17,19,22,25,28,32,34,35,38,40,44,48\}.
\]
Its factor spans have dimension eight in each mode. The rational quotient
maps in \texttt{quotient\_certificate.json} have exactly those kernels.
Each outside factor maps to a nonzero multiple of one of eight coordinate
directions, each occurring four times. For $I=J\mathbin{\dot\cup}K$,
$J\subseteq X$, $K\subseteq X^c$, let $t(K)$ be the sum of the numbers
of quotient directions hit in the three modes. Dimension in the quotient
and in its kernel gives
\[
 d(I)\ge d(J)+t(K),\qquad
 d(J)\ge\mu_{|J|},\qquad
 (\mu_0,\ldots,\mu_{11})=(0,3,6,9,11,14,15,17,18,20,21,23).
\]
The latter inequalities are certified by modular nonzero minors for every
core subset, so remain lower bounds over $\bbC$.

Suppose $|I|=11$, $K\ne\varnothing$, and $d(I)\le23$.
If $|K|=1$, the bound $21+3=24$ is a contradiction.
If $|K|\ge2$, deleting any outside index leaves an outside-containing
ten-set, whose factor-span sum is at least $22$ by
Proposition~\ref{prop:lp10}. An index singleton in a quotient direction
is a coloop in that mode. If it were singleton in two modes, deleting it
would instead give sum at most $21$. Thus each outside index of a
candidate is singleton in at most one quotient mode.

Enumerate increasing outside subsets $K$ to depth eleven. A subtree may
be stopped when the sum of the three factor ranks over $\mathbb F_3$,
taken as an ordinary integer sum, reaches $24$: this lower bound is
monotone. At each remaining node, complete it
by all $J\subseteq X$ of size $11-|K|$ only if the singleton condition
and $t(K)+\mu_{11-|K|}\le23$ hold. These two gates skip completions
at that node only, \emph{not} its descendants. Every possible
counterexample must survive these gates. The complete counts are
\[
\begin{array}{c|r|r|r}
 |K|&\text{surviving }K&\text{completed supports}&d(I)<24\\\hline
 6&32&139776&0\\
 7&6720&12230400&0\\
 8&4592&2571520&0
\end{array}
\]
All other outside sizes give no completions. The $14941696$ completed
checks all have sum of the three factor ranks over $\mathbb F_3$, taken as
an ordinary integer sum, at least $24$. Two GF$(3)$ elimination
implementations give the same counts. Thus all outside-containing eleven-sets
have $d(I)\ge24$, and Lemma~\ref{lem:splitting11} proves uniqueness.

Among the $\binom{16}{11}=4368$ core sets, the exact profiles
$(7,8,8),(8,7,8),(8,8,7),(8,8,8)$ occur respectively
$48,304,304,3712$ times. Modular lower bounds and rational rank-seven
checks establish these equalities. In particular every eleven-set has
$d(I)\ge23$, already proving rank eleven. Only the $656$ core sets
with sum $23$ need additional uniqueness certificates.

\paragraph{The $624$ saturated certificates.}
In injective eight-dimensional core coordinates, the stored projections
$P:A\to\bbC^5$, with exterior degree $2$, give $80\times80$ Koszul
matrices of rank $66=11\binom42$. For each of $624$ supports, two or
three such maps give a stacked pair map of rank $53$ on the original
$64$-dimensional paired space. Its kernel is the span of the eleven
independent displayed pairs. There are $336$ two-map and $288$ three-map
certificates, totaling $1536$ maps. Over every core subset of sizes
$2,\ldots,11$, in every pairing, the minima in
\eqref{eq:pairedsylvester} are
\[
 2,3,3,4,4,4,4,4,4,4.
\]
Sylvester's inequality therefore excludes all other \emph{complex}
rank-one directions in each paired span.
Lemma~\ref{lem:saturatedprojection} applies: saturation puts every
alternative pair in the common kernel, and coefficient comparison in the
independent displayed pair basis recovers its remaining factor.

For completeness, the modular kernel calculation lifts to characteristic
zero. Rank $66$ modulo $p=1000003$ meets the universal upper bound $66$.
Elimination at a pivot minor nonzero modulo $p$ takes place over the
localized coefficient ring and gives a rational left-kernel basis reducing
to the computed one. The nonzero modular $53$-minor thus belongs to a
rational stacked pair map. Its eleven known independent kernel columns
give the matching upper bound $53$. These are positive-minor certificates,
not finite-field point counts. An independently indexed Kronecker-product
construction at prime $65521$ verifies the same conditions.

The rank step in the following one-arm reduction is the slice-substitution
bound of \cite[Lemmas~2--3]{Shitov}. The zero-shear condition adds the
purity needed for the uniqueness conclusion.
\begin{lemma}[One-arm completion]\label{lem:onearm11}
Let $C=C_0\oplus C_1$, $\dim C_1=c$, and $T=H+U$, with
$H\in A\otimes B\otimes C_0$ of rank $h$ and
$U\in A\otimes B\otimes C_1$ of rank $c$ and full $C_1$ support.
For every $R:C_1\to C_0$ suppose
\[
 \rank H(R)\ge h,\qquad
 \rank H(R)\le h\ \Longrightarrow\ R=0,
 \quad H(R)=H+(\mathrm{id}\otimes\mathrm{id}\otimes R)U.
\]
Then $T$ has rank $h+c$, every minimal decomposition is pure in the two
$C$ blocks, and uniqueness of $H,U$ implies uniqueness of $T$.
\end{lemma}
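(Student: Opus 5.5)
The plan is to obtain Lemma~\ref{lem:onearm11} as the special case $B_1=0$ of Theorem~\ref{thm:zerocorner}, or, if that specialisation is unclear, to rerun its proof with only one arm. In the specialisation, take $B_0=B$, $B_1=0$, $b=0$ and $L=0$. Zero-dimensional arm spaces are explicitly allowed there, with the zero tensor represented by the empty decomposition. Then $H(S,R)$ reduces to $H(R)$, since $S$ ranges only over the zero map. The hypotheses $\rank H(R)\ge h$ for all $R$, and $\rank H(R)\le h\Rightarrow R=0$, are exactly the two rank hypotheses of the theorem with $q=h$. The theorem then gives rank $h+0+c$, purity of every minimal decomposition with $h$ terms in the $H$ block and $c$ in the $U$ block, and uniqueness of $T$ from uniqueness of the blocks. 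The block decomposition of $L$ is empty and hence trivially unique.

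To be safe, I would also sketch the direct argument, because the zero-corner proof uses the orthogonality relation \eqref{eq:zerocornerorthogonal}, which is vacuous when $b=0$. The direct argument runs as follows.

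\emph{Step 1: supports.} Take a minimal decomposition $T=\sum_j a_j\otimes v_j\otimes w_j$. Its paired products are independent, so all factors lie in the intrinsic supports. Let $\gamma:C\to C_1$ be the projection and $W=[\gamma(w_j)]$. Then $\gamma T=U$ has full $C_1$ support, so $W$ has rank $c$.

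\emph{Step 2: the rank bound.} Choose a column basis $J$ of $W$. There is a unique retraction $Q_R:C\to C_0$, identity on $C_0$, that kills $w_j$ for all $j\in J$; it is determined by some $R:C_1\to C_0$, and $(\mathrm{id}\otimes\mathrm{id}\otimes Q_R)T=H(R)$. Applying $Q_R$ deletes the $c$ terms indexed by $J$, so $h\le\rank H(R)\le r-c$. This gives $r\ge h+c$, and the concatenated block decompositions attain the bound.

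\emph{Step 3: purity.} For a minimal decomposition, Step 2 gives $\rank H(R)\le h$ for every column basis $J$, hence $R=0$. So every $w_j$ with $j\in J$ lies in $C_1$. Every nonzero column of $W$ extends to a basis, so every $w_j$ with $\gamma(w_j)\ne0$ lies in $C_1$; the others already lie in $C_0$. The terms then split into a $C_0$ group summing to $H$ and a $C_1$ group summing to $U$. These need at least $h$ and $c$ terms respectively, and the total is $h+c$, so the counts are exact. Uniqueness of $H$ and of $U$ then gives uniqueness of $T$.

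The main obstacle is Step 3: making sure the retraction argument really forces every nonzero-projection factor into $C_1$, and not only those in one chosen basis. The remedy is to vary $J$ over all column bases, exactly as in the proof of Theorem~\ref{thm:zerocorner}. Unlike that theorem, no disjoint $V$-basis is needed here, so the infinite-field choice of $\alpha$ is also unnecessary.
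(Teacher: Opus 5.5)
Your proposal is correct and takes the paper's own route: the paper proves Lemma~\ref{lem:onearm11} by specializing Theorem~\ref{thm:zerocorner} to $B_1=0$, $L=0$, $q=h$, and then invoking its rank clause and its zero-shear clause. Your supplementary direct argument is also sound, and it rightly observes that when $b=0$ the disjoint-basis step and the infinite-field choice of $\alpha$ are unnecessary, since every column basis of $W$ already gives a retraction $Q_R$ with $(\mathrm{id}\otimes\mathrm{id}\otimes Q_R)T=H(R)$.
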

\begin{proof}
Apply Theorem~\ref{thm:zerocorner} with $B_1=0$, $L=0$, $q=h$, and
the stated space $C_1$.  Its rank clause gives $\rank T=h+c$.  The
zero-shear clause gives purity and the stated uniqueness conclusion.
\end{proof}

\begin{lemma}[Fixed-strip inequality]\label{lem:fixedstrip11}
Let $A=A_F\oplus A_B$, $B=B_F\oplus B_B$, with
$\dim A_F=\dim B_F=4$. Suppose the four $B_F$ slices of $W$ are
$e_s\otimes c_s\in A_F\otimes C$, each $c_s\ne0$.
If $Z$ is its projection onto $A_B\otimes B_B\otimes C$, then
$\rank W\ge4+\rank Z$.
\end{lemma}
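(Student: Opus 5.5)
The plan is the classical substitution argument \cite[Lemmas~2--3]{Shitov}, the same rank mechanism as in Theorem~\ref{thm:zerocorner}. I read $e_1,\dots,e_4$ as a basis of $A_F$, following the notation, and $f_1,\dots,f_4$ as the dual coordinates on $B_F$, extended by zero on $B_B$. Write $W=\sum_{s=1}^4 e_s\otimes f_s\otimes c_s+Y$ with $Y\in A\otimes B_B\otimes C$. This is what the hypothesis on the $B_F$ slices says, once $f_s$ also denotes the basis vector of $B_F$.

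\textbf{Step 1.} Take a minimal decomposition $W=\sum_{j=1}^r a_j\otimes b_j\otimes w_j$. Let $\beta:B\to B_F$ be the projection along $B_B$. Contracting $W$ against the four coordinate functionals of $B_F$ gives the slices $e_s\otimes c_s$. These are linearly independent, since the $e_s$ are independent and every $c_s\ne0$. Each slice is a combination of the $a_j\otimes w_j$, with coefficients given by the coordinates of $\beta(b_j)$. If the vectors $\beta(b_j)$ spanned a proper subspace of $B_F$, some nonzero functional on $B_F$ would kill all of them. The corresponding combination of the four slices would then vanish, contradicting their independence. Hence the $\beta(b_j)$ span $B_F$, and I choose an index set $I$ with $|I|=4$ such that $\{\beta(b_i)\}_{i\in I}$ is a basis of $B_F$.

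\textbf{Step 2.} Since $\beta$ restricted to $\spn\{b_i\}_{i\in I}$ is an isomorphism onto $B_F$, we have $B=\spn\{b_i\}_{i\in I}\oplus B_B$. Let $\pi:B\to B_B$ be the projection along $\spn\{b_i\}_{i\in I}$; it is the identity on $B_B$. Also let $\alpha:A\to A_B$ be the projection along $A_F$. Apply $\alpha\otimes\pi\otimes\mathrm{id}$ to both expressions for $W$.
\begin{itemize}
\item On the decomposition, the four terms indexed by $I$ are killed, so the image has rank at most $r-4$.
\item On the structured expression, each term $e_s\otimes f_s\otimes c_s$ is killed by $\alpha$, because $e_s\in A_F$.
\item The remaining part $Y$ lies in $A\otimes B_B\otimes C$, where $\pi$ acts as the identity, so it maps to $(\alpha\otimes\mathrm{id}\otimes\mathrm{id})Y$.
\end{itemize}
This last tensor is exactly $Z$: the projection of $W$ onto $A_B\otimes B_B\otimes C$ sees only $Y$, because the $f_s$-terms have zero $B_B$-component.

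\textbf{Step 3.} Combining the two computations gives $\rank Z\le r-4=\rank W-4$, which is the claim.

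The only point needing care is Step 1, namely that four of the $b_j$ have projections forming a basis of $B_F$. This is exactly where the hypotheses that the $e_s$ are independent and each $c_s\ne0$ enter, through the independence of the four $B_F$ slices. Everything else is linear bookkeeping about which terms survive the two projections.
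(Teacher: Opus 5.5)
Your proof is correct and follows the paper's own argument. Independence of the four slices forces four competing $B$-factors whose $B_F$-projections form a basis. A retraction onto $B_B$ kills those four terms, and the projection onto $A_B$ removes the fixed strip together with its retraction error, leaving exactly $Z$. You also spell out two steps that the paper's three-line proof leaves implicit: the killing-functional argument for spanning, and the complement $B=\spn\{b_i\}_{i\in I}\oplus B_B$.
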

\begin{proof}
The four slices are independent. Choose four competing $B$ factors whose
$B_F$ projections form a basis and kill them by a retraction onto $B_B$.
Subsequent projection onto $A_B$ kills the entire retraction error from
the fixed strip and leaves exactly $Z$, with at least four terms removed.
\end{proof}

\paragraph{The $32$ completion certificates.}
For each remaining support the rational certificate specifies a mode
permutation and $I=F\mathbin{\dot\cup}B\mathbin{\dot\cup}U$ with
sizes $4,4,3$. The eight $F,B$ factors are bases in the first two modes;
their third-mode span $C_0$ has dimension four, complemented by the
three $U$ factors. The arm's second-mode coordinates vanish in the
$F$ strip. These statements are checked by exact reconstruction from
all sixteen coordinates. Put $H=T_{F\cup B}$; the strong-ten base
gives rank and uniqueness of $H$ and $T_U$.

In the indicated bases the bottom tensor of $H(R)$ has slices
\[
 M_s(R)=E_{ss}+\sum_{j=1}^3u_{sj}v_jr_j^{\mathsf T},
 \qquad R=[r_1\ r_2\ r_3]\in\bbC^{4\times3}.
\]
Here $u,v$ are the bottom four coordinates of the first two arm factors.
Each certificate gives $\rank u=2$ and
$\alpha\in\ker u^{\mathsf T}$ with no zero coordinate. The slice
$D_0=\sum_s\alpha_sM_s=\operatorname{diag}(\alpha)$ is constant and
invertible for \emph{every} $R$. Thus the bottom tensor has rank at
least four and Lemma~\ref{lem:fixedstrip11} gives $\rank H(R)\ge8$.
If $\rank H(R)\le8$, the bottom tensor has rank four. Its invertible
slice makes both square matrix-factor families invertible, so necessarily
\begin{equation}\label{eq:completion11}
 M_sD_0^{-1}M_t-M_tD_0^{-1}M_s=0.
\end{equation}
For a basis of $\ker u^{\mathsf T}$, the additional constant slices
$D_\beta=\operatorname{diag}(\beta)$ give linear equations
$M_sD_0^{-1}D_\beta-D_\beta D_0^{-1}M_s=0$ in the twelve entries of $R$.
For each support their coefficient matrix has exact rank eight. The
certificate records its entire four-dimensional kernel. Substitution into
\eqref{eq:completion11} gives four selected entries that are homogeneous
linear polynomials in those four parameters with nonsingular coefficient
matrix. Checking polynomial equality includes verifying all quadratic
coefficients vanish; this is not a linearization. Consequently $R=0$.
Lemma~\ref{lem:onearm11} now proves global uniqueness.

Here is one explicit certificate. Take
$F=(9,25,28,35)$, $B=(12,19,22,32)$, $U=(4,17,34)$ in the original
mode order. Use $F,B$ as the first two bases and $B,U$ as the third.
Then
\[
 u=\frac12\begin{pmatrix}-1&1&-1\\-1&-1&1\\-1&-1&1\\-1&1&-1\end{pmatrix},
 \qquad
 v=\frac1{16}\begin{pmatrix}1&4&-4\\1&4&4\\1&-4&-4\\-1&4&-4\end{pmatrix}.
\]
Choose $\alpha=(-1,-1,1,1)$ and $\beta=(0,-1,1,0)$.
The full linear solution is
\[
 R=\begin{pmatrix}0&\theta_1&\theta_1\\0&-\theta_2&\theta_2\\
 0&-\theta_3&\theta_3\\0&\theta_4&\theta_4\end{pmatrix}.
\]
For $C=M_1D_0^{-1}M_2-M_2D_0^{-1}M_1$, its entries
$(C_{14},C_{23},C_{32},C_{41})$ are
$(\theta_4,-\theta_3,-\theta_2,-\theta_1)/4$, with coefficient determinant
$-1/256$. Hence all parameters vanish. The other $31$ supports have
their own certificates; no unproved symmetry extrapolation is used.

\begin{theorem}[Strong rigidity through eleven]\label{thm:strong11}
Over $\bbC$, $D(2)$ is strongly $11$-locally rigid. In particular
$\srho(D(2))\ge11$ and $\rrho(D(2))\ge11$.
\end{theorem}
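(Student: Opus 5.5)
The plan is to assemble Theorem~\ref{thm:strong11} from the pieces already established in this section. Strong $11$-local rigidity means that every subset of size at most eleven is rank-minimal and has a unique decomposition of length at most its cardinality. Subsets of size at most ten are covered by Theorem~\ref{thm:strong10}, which gives both rank and global uniqueness. So it remains to treat the $\binom{48}{11}$ eleven-subsets. The rank claim $\rrho(D(2))\ge11$ then follows automatically, since strong rigidity implies rank equal to cardinality (Definition~\ref{def:rho}, $\srho\le\rrho$).

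For an eleven-subset $I$, I would split according to whether $I$ meets the complement of the core $X$.

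\emph{Outside-containing supports.} The exhaustive reduction to the core gives $d(I)\ge24=2\cdot11+2$. This rests on the quotient inequality $d(I)\ge d(J)+t(K)$, the table of $\mu_j$, the coloop argument, and the gated enumeration with $\mathbb F_3$ ranks as monotone lower bounds. Every proper displayed subset is rank-minimal and identifiable by Theorem~\ref{thm:strong10}. The summands are independent because the paired-product matrices have column rank $48$. Hence Lemma~\ref{lem:splitting11} gives rank eleven and global uniqueness.

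\emph{Core supports.} The $4368$ eleven-subsets of $X$ all have $d(I)\ge23$, so the rank clause of Lemma~\ref{lem:splitting11} gives rank eleven. The $3712$ supports with profile $(8,8,8)$ have $d(I)=24$, and the uniqueness clause applies directly. The remaining $656=48+304+304$ supports with sum $23$ are split into $624+32$. The $624$ are handled by Lemma~\ref{lem:saturatedprojection}: it uses the stacked saturated Koszul maps with kernel equal to the eleven displayed pairs, together with the Sylvester inequalities \eqref{eq:pairedsylvester} excluding other rank-one directions. The $32$ are handled by Lemma~\ref{lem:onearm11}, with $H=T_{F\cup B}$ and the arm $U$. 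Its hypotheses come from three ingredients: Lemma~\ref{lem:fixedstrip11} giving $\rank H(R)\ge8$, the constant invertible slice $D_0=\operatorname{diag}(\alpha)$, and the commutator system forcing $R=0$. Uniqueness of $H$ (eight terms) and $T_U$ (three terms) comes from Theorem~\ref{thm:strong10}.

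The main obstacle is the bookkeeping that these cases exhaust everything. I must confirm four things. First, the $624$ and $32$ certificate index sets are disjoint and cover exactly the $656$ sum-$23$ core sets. Second, the modular verdicts lift to $\bbC$: they are positive minors and lower bounds, and kernel dimensions are matched by explicit independent columns. Third, competitors written in larger ambient spaces are controlled via intrinsic supports. Fourth, a competitor of length at most eleven for a subset of size eleven is excluded by the rank part before uniqueness is invoked. I would check the third point first, since it is used uniformly: the paired-independence argument places the factors of any minimal competitor in the factor spans, so the compressed certificates apply.
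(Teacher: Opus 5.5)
Your proposal is correct and follows the paper's own argument. Strong ten covers all proper faces, and the splitting lemma with $d(I)\ge24$ handles every outside-containing eleven-set and the $3712$ core sets of profile $(8,8,8)$. The $656$ core sets of factor-span sum $23$ are covered by the disjoint $624$ saturated-projection and $32$ one-arm completion certificates, with exhaustiveness checked by equality of index sets.
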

\begin{proof}
The strong-ten base handles proper subsets. Splitting handles every
outside-containing eleven-set and $3712$ core eleven-sets. The remaining
$656$ supports are exactly the disjoint union of the $624$ projection and
$32$ completion certificate sets, verified by equality of index sets.
The preceding arguments establish rank and global uniqueness for all of them.
\end{proof}

\begin{corollary}[Generic rank radius at least eleven]\label{cor:rank11generic}
There is a Zariski-open neighborhood of $D(2)$ in
$V_{48}(\M{4,4,4})$ on which the rank radius is at least $11$.
It is dense in every irreducible component through $D(2)$, and
$\rrho(D(t))\ge11$ for all but finitely many $t\in\bbC^\times$.
\end{corollary}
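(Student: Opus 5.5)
The plan is to spread \emph{sufficient certificates} rather than tensor rank itself, following Corollaries~\ref{cor:lpgeneric} and~\ref{cor:rank10generic}. I start from the open neighborhood $W$ of $D(2)$ built in Corollary~\ref{cor:rank10generic}: on $W$ every ten-subset has rank ten, so $\rrho\ge10$ there. By Lemma~\ref{lem:subadd} it then suffices to find a smaller open neighborhood on which every eleven-subset $S$ satisfies $\rank(T_S(D'))=11$. The upper bound $\le11$ is automatic from the eleven displayed terms.

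I would use the rank clause of the splitting Lemma~\ref{lem:splitting11}. It needs three things at $D'$:
\begin{enumerate}
\item[(i)] the eleven product tensors of $S$ are independent;
\item[(ii)] every nonempty proper displayed subset of $S$ is rank-minimal;
\item[(iii)] $d(S)\ge23$.
\end{enumerate}
Condition (ii) holds on $W$, since there every subset of size at most ten has rank equal to its cardinality. For (i), I would impose independence of the 48 paired products in one pairing. Independence of the paired products $a_i\otimes b_i$ implies independence of the full product tensors, and it also makes the support identifications behind the splitting lemma valid at $D'$. At $D(2)$ this is witnessed by the nonzero $48\times48$ paired minors (Theorem~\ref{thm:main-cert}, or the modular minors in Section~\ref{app:strong11}), and nonvanishing of a fixed minor is an open condition.

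For (iii), each $d_m(S)=\rank M_m(S)$ is lower semicontinuous, as in the proof of Corollary~\ref{cor:lpgeneric}. Hence $\{D':d(S)\ge23\}$ is Zariski open for each of the finitely many $S$. At $D(2)$ this holds for every eleven-set: outside-containing sets have $d\ge24$ by the exhaustive census of Section~\ref{app:strong11}, and core sets have $d\ge23$ from their exact profiles. The census values are modular lower bounds, and these lift to $\bbQ$ and hence $\bbC$. Intersecting $W$ with these finitely many open sets and the paired-minor condition gives an open neighborhood of $D(2)$ on which Lemma~\ref{lem:splitting11} yields rank eleven for every eleven-set. Subadditivity then gives $\rrho\ge11$.

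The density and curve statements follow as in Corollary~\ref{cor:generic}. A nonempty open set meets each irreducible component through $D(2)$ in a dense open subset. Its preimage under the morphism $t\mapsto D(t)$ is open in $\bbC^\times$ and contains $2$, so its complement is finite.

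The main point to check carefully is that Lemma~\ref{lem:splitting11} really needs only rank-minimality of proper subsets and span data at $D'$, with no uniqueness input. The rank clause uses only the connected-component argument together with \cite[Thm.~4]{LovitzPetrov}, so the 624 projection and 32 completion certificates are not needed. A secondary point is that the ten-level input must hold at every point of the neighborhood and not just at $D(2)$; this is exactly what the fixed-projection Young minors and Lovitz--Petrov inequalities of Corollary~\ref{cor:rank10generic} provide.
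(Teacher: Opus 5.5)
Your proposal is correct and follows the paper's proof essentially step for step. You intersect the rank-ten neighborhood of Corollary~\ref{cor:rank10generic} with paired-product independence and with the open loci $d(I)\ge23$ for all eleven-sets, which hold at $D(2)$ by the mixed census and the core profiles. You then apply the rank clause of Lemma~\ref{lem:splitting11} and pull back along $t\mapsto D(t)$, exactly as the paper does.

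The only difference is that you impose independence in a single pairing, whereas the paper retains rank $48$ in all three. One pairing suffices, because the rank clause needs only independence of the product tensors and the trivial inequality $d([n])\le d(E)$. Your side remark about support identification is therefore unneeded, and in any case one pairing would identify the intrinsic support in only one mode.
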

\begin{proof}
Let $W_{10}$ be the open neighborhood supplied by
Corollary~\ref{cor:rank10generic}, on which all subsets of size at most
ten are minimal. Intersect it with the open locus of rank $48$ for the
paired-product matrices and, for every eleven-set $I$, the locus $d(I)\ge23$.
Each latter locus is a finite union of intersections of open determinantal
conditions. There are finitely many $I$, and the census and paired minors
show that the resulting open set $W_{11}$ contains $D(2)$.
At every point of $W_{11}$ the displayed tensors are independent, proper
subsets are minimal, and Lemma~\ref{lem:splitting11} proves rank eleven.
Nonempty openness gives the component assertion. The actual family has
coordinates in $\mathbb Z[1/2,t,t^{-1}]$ and nonzero factors on
$\bbC^\times$, and its tensor identity gives a regular morphism into
$V_{48}$. The inverse image of $W_{11}$ contains $2$ and is open in
$\bbC^\times$, hence has finite complement.
\end{proof}

\begin{corollary}[Generic strong eleven on the curve]\label{cor:strong11curve}
There is a Zariski-open neighborhood of $2$ in $\bbC^\times$ on which
$D(\tau)$ is strongly $11$-locally rigid. In particular
$\srho(D(\tau))\ge11$ for all but finitely many nonzero complex parameters.
\end{corollary}
\begin{proof}
For every non-core support of size at most ten, retain the finite factor
minors that witness the Lovitz--Petrov inequalities at $t=2$; these are
the nonexceptional supports in Proposition~\ref{prop:lp10}, together with
the already certified lower levels. Retain also the nonzero paired-product
minors at $2$. Each is a determinantal open condition on the parameter
curve. Their finite intersection $U_{\le10}$ contains $2$. On it,
Theorem~\ref{thm:lp} proves rank-minimality and global uniqueness for every
non-core support through level ten. Every core support through level eleven,
including the exceptional ten-supports, is instead covered directly for every
nonzero parameter by Theorem~\ref{thm:uniformcore11}.

For each outside-containing eleven-support retain factor minors witnessing
$d(I)\ge24$ at $2$, as established by the complete mixed-support census.
There are finitely many such supports. Their intersection with $U_{\le10}$
still contains $2$; on it every proper face is rank-minimal and identifiable,
and Lemma~\ref{lem:splitting11} proves global uniqueness. Thus this finite
intersection certifies every displayed support through level eleven.
A nonempty open subset of the irreducible curve $\bbC^\times$ has
finite complement.
\end{proof}

The rank transfer in Corollary~\ref{cor:rank11generic} uses open
sufficient matrix-rank conditions, not semicontinuity of tensor rank.
The strong-eleven transfer is restricted to the actual parameter curve;
no level-eleven strong bound on whole components of the decomposition
variety follows here. The ambient-component strong bound remains nine,
and the certified border bound remains eight.

\paragraph{Reproducibility scope.}
The new certificate package reconstructs the full matrix-multiplication
identity, quotient maps and core ranks from \texttt{family\_t2.json},
replays the mixed census with two GF$(3)$ implementations, verifies all
saturated certificates with separate constructions at two primes, and
checks all completion polynomial identities over $\bbQ$.
The command \texttt{python verify.py --independent} checks equality of
the coverage sets as well. Its level-ten inputs are established results:
matching the supplied bad-ten list does not independently prove that list
complete, nor do representative ten-set notes replace their all-case checks.
The new replay and those base checks are separate evidence dependencies.

\section{A sharp strong radius and a uniform core}\label{app:sharpuniform}

Put
\[
 G_1=(4,17,34,40),\quad G_2=(9,25,28,35),\quad
 G_3=(12,19,22,32),\quad G_4=(15,38,44,48).
\]

\begin{proposition}[An exact twelve-term replacement]\label{prop:strongsharp}
The strong radius of $D(2)$ is exactly $11$. The least number of its
summands changed by a distinct $48$-term decomposition is exactly $12$.
\end{proposition}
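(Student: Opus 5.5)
The lower bound is already in hand: Theorem~\ref{thm:strong11} gives $\srho(D(2))\ge11$. By the cancellation argument of Corollary~\ref{cor:other48}, any distinct $48$-term decomposition changes at least $12$ summands. If it reused $37$ or more, the two unmatched residues would be distinct decompositions of $T_S$ of the same length at most $11$, contradicting strong $11$-rigidity.

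It therefore remains to exhibit a twelve-subset $S$ with a second decomposition of length at most $12$ that is distinct up to permutation and gauge. Since $\rrho(D(2))\ge12$ by Theorem~\ref{thm:rank12}, that competitor has length exactly $12$. Splicing it in yields a $48$-term decomposition that reuses exactly $36$ summands, which gives $\srho(D(2))\le11$ and the matching twelve-change bound.

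The natural place to look is the core $X$ with its four cycle-blocks $G_1,\dots,G_4$. By Proposition~\ref{prop:lp10}, the pairwise unions collapse in one mode: $G_1\cup G_2$ and $G_3\cup G_4$ at $(6,8,8)$, and so on. Every eleven-subset of $X$ is unique, but only $656$ of them sit at $d(I)=23$, so the slack is thinnest on twelve-subsets of $X$. The candidate I would test first is a union of three blocks, say $S=G_1\cup G_2\cup G_3$, whose factor spans are $8$-dimensional in every mode. That gives $d(S)=24=2\cdot12$, below the Lovitz--Petrov uniqueness threshold $2n+2=26$, while $d\ge2n+1$ still holds. This is consistent with rank $12$ and non-unique decomposition.

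The concrete construction uses the block structure. Each block $G_i$ has mode ranks $(4,4,4)$, so $T_{G_i}$ is a concise $4\times4\times4$ tensor of rank $4$, hence diagonalizable: a unit tensor in suitable bases. A union of two blocks with a collapsed mode, for example $G_1\cup G_2$ with $A$-span of dimension $6$, shares a two-dimensional $A$-intersection. I would look for a rational change of basis on that shared subspace that re-splits $T_{G_1}+T_{G_2}$ plus the third block into twelve new rank-one terms. The first attempt is a Strassen-type reparametrisation: write the $2\times2$ overlap slices and apply an invertible rational $2\times2$ transformation mixing the shared $A$-directions, in the same spirit as the $(S,R)$ shear of Theorem~\ref{thm:zerocorner}. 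The aim is to find a nonzero shear at which the completed core has rank exactly the block count, so zero shear is no longer the unique minimizer.

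Failing a clean conceptual construction, the fallback is computational. Solve for a twelve-term decomposition of $T_S$ over $\bbQ$ by alternating least squares or Newton iteration seeded away from $D_S$, then rationalise the coordinates, which are likely dyadic given the ring $R$. The main obstacle is producing this explicit competitor, not verifying it. Once found, verification is routine: check the tensor identity on all $16^3$ coordinates in exact arithmetic, and confirm distinctness by exhibiting a competitor summand whose factor is non-proportional to every original factor in some mode. That certifies $d(D(2),D')=12$ exactly.
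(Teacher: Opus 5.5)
Your lower bound is correct and matches the paper: Theorem~\ref{thm:strong11} plus cancellation of the common summand multiset forces at least twelve changes. Once a distinct twelve-term decomposition of some $T_S$ is available, your distinctness check also suffices, because strong $11$-rigidity already rules out any overlap with the removed terms.

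The gap is the upper bound, which is the whole new content of the proposition. You never produce a second decomposition of $T_{G_1\cup G_2\cup G_3}$, and you give no argument that one exists. Failing the Lovitz--Petrov thresholds is not evidence of non-uniqueness: the $32$ exceptional ten-sets and the $656$ eleven-sets with $d(I)=23$ fail them too, and they are nevertheless identifiable. Your arithmetic there is also off, since $d(S)=24<25=2n+1$. So this support fails even the rank threshold, and its rank twelve comes from the completion bound of Section~\ref{app:rank12}, not from splitting. A numerical search seeded away from $D_S$ is a heuristic, not a proof: if the support were identifiable the search would simply fail, and nothing in your proposal excludes that.

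The missing ingredient is an explicit nonzero shear. Use the chart in which $F=G_2$ and $B=G_3$ supply bases in the first two modes, and the arm $U=G_1$ has third factors $(0,e_j)$ spanning the complement $C_1$ of the $G_3$ span and second factors $(0,\bar b_j)$ in the $G_3$ block. There the paper writes down a nonzero $R=[r_1\ \cdots\ r_4]$ together with a rank-four decomposition
\[
\sum_{s=1}^4 e_s^{\otimes3}+\sum_{j=1}^4\bar a_j\otimes\bar b_j\otimes r_j=\sum_{j=1}^4 a'_j\otimes b'_j\otimes c'_j ,
\]
which is an explicit failure of the zero-shear hypothesis of Lemma~\ref{lem:onearm11}. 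Shearing the arm's third factors to $(-r_j,e_j)$ creates two corrections:
\begin{itemize}
\item a bottom correction, absorbed by the four terms $((0,a'_j),(0,b'_j),(c'_j,0))$;
\item a top correction, absorbed into the $F$ terms by changing their second factors to $e_f+\lambda_f(0,\bar b_{j_f})$, using $r_{j_f}=\lambda_fc_f$ and the fact that the top of $a_{j_f}$ is $e_f$.
\end{itemize}
The twelve resulting rational terms are then checked exactly.

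Your guess that zero shear stops being the unique minimizer is the right mechanism. However, the collapses it uses are $G_1\cup G_3$ in mode $B$ and $G_2\cup G_3$ in mode $C$. The shear also acts on the arm's third factors, not as a $2\times2$ mixing of the shared $A$-directions of $G_1\cup G_2$. Without an explicit $R$, or a proof that one exists, $\srho(D(2))\le11$ is not established.
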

\begin{proof}
The lower bound is Theorem~\ref{thm:strong11}. Here is a rational upper
certificate, independent of any assertion of rank-twelve minimality.
Put $F=(9,25,28,35)$, $G=(12,19,22,32)$ and $U=(4,17,34,40)$.
Use the original factors of $F,G$ as ordered bases in the first two
modes and those of $G,U$ in the third. All three bases have rank eight.
Write $c_f$ for the first four coordinates of the third factor of $F_f$,
and $a_j,b_j$ for the eight coordinates of the first two factors of $U_j$;
write $\bar a_j,\bar b_j$ for their bottom four coordinates.
The third factor of $U_j$ is $(0,e_j)$ and $b_j=(0,\bar b_j)$.
Define matrices by columns, $R=[r_j]$, $A'=[a'_j]$, $B'=[b'_j]$:
\[
\begin{gathered}
R=\frac18\begin{pmatrix}12&12&12&-3\\12&-12&12&3\\
12&12&-12&3\\-12&12&12&3\end{pmatrix},\quad
A'=\frac18\begin{pmatrix}9&-3&-3&1\\-3&9&1&-3\\
-3&1&9&-3\\1&-3&-3&9\end{pmatrix},\\
B'=\frac12\begin{pmatrix}3&0&0&1\\0&3&-1&0\\
0&-1&3&0\\1&0&0&3\end{pmatrix},\qquad C'=B'/2.
\end{gathered}
\]
Exact multiplication gives
\[
 \sum_{s=1}^4 e_s^{\otimes3}
 +\sum_{j=1}^4\bar a_j\otimes\bar b_j\otimes r_j
 =\sum_{j=1}^4 a'_j\otimes b'_j\otimes c'_j.
\]
For $j_f=(3,1,4,2)_f$ and $\lambda_f=(3,-3,-3/4,-3)_f$,
one has $r_{j_f}=\lambda_fc_f$ and the top four coordinates of
$a_{j_f}$ are $e_f$. The following twelve triples therefore have the
same tensor sum as $F\cup G\cup U$:
\[
\begin{split}
 &(e_f,\ e_f+\lambda_f(0,\bar b_{j_f}),\ (c_f,0))\quad(1\le f\le4),\\
 &((0,a'_j),(0,b'_j),(c'_j,0))\quad(1\le j\le4),\\
 &(a_j,b_j,(-r_j,e_j))\quad(1\le j\le4).
\end{split}
\]
Here the $e_f$ in the first two modes denotes the corresponding vector
in eight dimensions. Indeed the bottom correction is the displayed
matrix identity, and its top correction cancels using
$r_{j_f}=\lambda_fc_f$. All twelve products are nonzero and mutually
distinct, and none equals any of the original forty-eight products.
These finite rational assertions, including the complete reconstruction
in sixteen coordinates per mode, are checked by
\path{strong12_obstruction_audit/verify_explicit.py}; the separate
\path{verify_export.py} checks the exported factors directly in all
$4096$ tensor coordinates. The supplementary data map locates both checkers.
The checked new terms do not match
any removed summand.  Since the original terms are pairwise distinct, the
unchanged $36$ terms are exactly the common tensor-summand multiset: a new
term that happened to equal an unchanged term would only add a second copy
and could not increase that common multiplicity.  Retaining the other $36$
terms therefore proves the upper bound and the exact equal-length distance.
No shorter decomposition of this twelve-term partial sum is asserted or
needed.
\end{proof}

\begin{remark}[Tensor-summand versus input-product reuse]
The replacement reuses $36$ complete tensor summands.  In the usual
first-two-modes convention it reuses $40$ input-product directions: the
four directions indexed by $4,17,34,40$ recur with changed output factors.
Thus this example has tensor-summand distance $12$ and input-product
distance $8$.  It makes no minimality assertion for the latter metric.
The exact comparison is checked by
\path{evidence/metrics/check_product_distance.py}.
\end{remark}

\begin{remark}[An inequivalent replacement]\label{rem:replacementorbit}
At $t=2$, the twelve new summands have factor-matrix rank triples
$(2,3,2)$, $(2,3,3)$ and $(2,2,3)$, four of each. Every factor matrix
of the original $D(t)$ has rank at most two for every $t\ne0$.
Matrix-multiplication isotropy acts by invertible left/right matrix maps
and possible transposition and mode permutation (the supplementary
proposition on isotropy invariance); it preserves these ranks,
as do factor gauges and term permutations. Hence the replacement is not
equivalent to any member of the original family under these transformations.
The sharp distance twelve is therefore attained in a different isotropy
orbit, not merely by another gauge/permutation representative. This does
not classify the replacement among all previously known algorithms.
Exact certificates are included in the mathematical supplement's data map.
\end{remark}

\begin{lemma}[Coefficient stability]\label{lem:weightstable}
Every nonzero coefficient reweighting of a decomposition certified by
Lemma~\ref{lem:saturatedprojection} has the same rank and a unique minimal
decomposition. With zero coefficients allowed, the rank is the support
size and the unique decomposition is the displayed subdecomposition.
\end{lemma}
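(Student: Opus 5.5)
Let $T=\sum_{i=1}^r a_i\otimes b_i\otimes c_i$ be certified by Lemma~\ref{lem:saturatedprojection} with maps $P_s$. For weights $\lambda_i$ set $T_\lambda=\sum_i\lambda_i a_i\otimes b_i\otimes c_i$. The plan is to show that the same maps $P_s$ certify $T_\lambda$, first for nonzero weights and then for the support subdecomposition when some weights vanish.

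\emph{Nonzero weights.} Each $Y_s$ is linear. Every summand contributes an image $W_{s,i}=\operatorname{im}Y_s(a_i\otimes b_i\otimes c_i)$ of dimension $k_s$. Saturation $\rank M_s=rk_s$ means that these $r$ subspaces form a direct sum. Rescaling a summand by $\lambda_i\ne0$ leaves $W_{s,i}$ unchanged. Hence $Y_s(T_\lambda)$ still has image $\bigoplus_i W_{s,i}$, of rank $rk_s$. Its left kernel is therefore the same, so we may keep the same $L_s$, and each $G_s$ is literally unchanged. The common kernel $\bigcap_s\ker G_s$ is then still $S=\spn\{a_i\otimes c_i\}$, since rescaling the pairs does not change this span. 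The rank-one directions in $S$ are a property of $S$ alone, so they are still only the displayed $r$ lines. Lemma~\ref{lem:saturatedprojection} therefore applies verbatim to $T_\lambda$ and gives rank $r$ and a unique minimal decomposition.

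\emph{Zero weights allowed.} Let $I=\{i:\lambda_i\ne0\}$. Then $T_\lambda$ is the reweighted subsum on $I$, and $\operatorname{im}Y_s(T_\lambda)=\bigoplus_{i\in I}W_{s,i}$, of rank $|I|k_s$, so it is again saturated. The left kernel $L'_s$ is now larger, and the new common pair kernel $S'=\bigcap_s\ker G'_s$ must be identified. The inclusion $S'\supseteq S_I:=\spn\{a_i\otimes c_i:i\in I\}$ is clear. For the reverse inclusion, suppose $x\in A\otimes C$ lies in every $\ker G'_s$. Then for every $\omega$ the vectors $\sum(P_sa\wedge\omega)\otimes c$ coming from $x$ lie in $\bigoplus_{i\in I}W_{s,i}\subseteq\operatorname{im}M_s$. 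So $x\in\bigcap_s\ker G_s=S$. Write $x=\sum_i\mu_i a_i\otimes c_i$. The contribution of $x$ is then $\sum_i\mu_i\cdot(\text{image of }\omega\mapsto(P_sa_i\wedge\omega)\otimes c_i)$, and that image is exactly $W_{s,i}$, because $b_i\ne0$. Since the sum over all $i$ is direct, membership in $\bigoplus_{i\in I}W_{s,i}$ forces $\mu_i\cdot(\cdot)=0$ for $i\notin I$. Because each $W_{s,i}$ is nonzero, as $P_sa_i\ne0$ by saturation, this gives $\mu_i=0$ for $i\notin I$. Hence $S'=S_I$, of dimension $|I|$.

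A rank-one direction of $S_I$ is a rank-one direction of $S$, so it is displayed, and it lies in $S_I$ only if its index is in $I$, by independence of the pairs. Lemma~\ref{lem:saturatedprojection} applied to the subdecomposition on $I$ then gives rank $|I|$ and uniqueness. The case $I=\varnothing$ is trivial: the zero tensor with the empty decomposition. The main care point is this kernel identification, specifically the step that uses directness of the sum $\bigoplus_i W_{s,i}$ to kill the coefficients outside $I$.
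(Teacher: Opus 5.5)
Your argument is correct in substance, but one inference in the nonzero-weight step is not justified as written. You say the summand images $W_{s,i}$ form a direct sum and are unchanged by rescaling, and conclude that $Y_s(T_\lambda)$ has image $\bigoplus_iW_{s,i}$. Independence of the images alone does not give this. For $M_1=e_1f^{\mathsf T}$ and $M_2=e_2f^{\mathsf T}$ the images are independent, yet every combination $M_1+\lambda M_2$ has rank one.

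Saturation supplies what is missing: since $\rank M_s$ equals the sum of the summand ranks, the co-images are independent as well. Concretely, write $Y_s(a_i\otimes b_i\otimes c_i)=U_iV_i$, with $U_i$ injective and $V_i$ surjective of rank $k_s$. Saturation forces $[U_1\ \cdots\ U_r]$ to be injective and the stacked $V_i$ to be surjective. Hence $Y_s(T_\lambda)=[U_1\ \cdots\ U_r]\operatorname{diag}(\lambda_iI_{k_s})[V_1;\ldots;V_r]$ has rank $rk_s$ and image $\bigoplus_iW_{s,i}$. This is exactly the paper's argument, and with it your nonzero case coincides with the paper's. Restricting the factorization to the blocks indexed by $I$ also justifies your claim that the support-$I$ sum is saturated: a column sub-block of an injective matrix is injective, and a row sub-block of a surjective one is surjective.

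For zero weights you take a genuinely different route. The paper recomputes nothing. It fills the missing coefficients with nonzero values and applies the nonzero case to the resulting full sum. A shorter or different equal-length decomposition of the support-$I$ sum, together with the complementary terms, would then contradict rank $r$ or uniqueness of that full sum, after cancelling the common complementary multiset.

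You instead re-verify the hypotheses of Lemma~\ref{lem:saturatedprojection} for the subsum with the same maps $P_s$. You identify the new common kernel as $\spn\{a_i\otimes c_i:i\in I\}$, using directness of $\bigoplus_iW_{s,i}$ and $P_sa_i\ne0$ to kill the coefficients outside $I$. That identification is correct.

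The paper's route is shorter and uses only the lemma's conclusion, so it works for any globally identifiable decomposition. Yours is longer but proves more: the fixed projections are themselves a valid saturated certificate for every sub-support. That is the form you would need to reuse the certificate data for a subsum directly, rather than only reading off its rank and uniqueness.
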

\begin{proof}
For a selected flattening write its summand matrices as $M_i=U_iV_i$,
each of rank $k$. Saturation to rank $rk$ forces $[U_1\ \cdots\ U_r]$
to be injective and the stacked $V_i$ to be surjective. Inserting
$\operatorname{diag}(\lambda_iI_k)$ for nonzero $\lambda_i$ preserves
rank and image. Thus all left kernels and pulled-back pair constraints
are unchanged, as are the displayed pair span and its rank-one directions.
The same saturated-image criterion applies. For a smaller support, fill
the missing coefficients with nonzero values. A shorter or different
equal-length replacement would contradict uniqueness of the full weighted
sum after cancellation of the common complementary tensor multiset.
\end{proof}

\begin{lemma}[Laurent normal form]\label{lem:coreweight}
Let $X$ be the sixteen-index core of Section~\ref{app:strong11},
$Q=\{15,38,44,48\}$ and $\eta=t/2$. There are ambient invertible
Laurent-polynomial maps $\Phi_B,\Phi_C$ such that, for every $i\in X$,
\[
a_i(t)=a_i(2),\qquad \Phi_Bb_i(t)=b_i(2),\qquad
\Phi_Cc_i(t)=\eta^{\mathbf1_Q(i)}c_i(2).
\]
Their determinants are $\eta^{-4}$ and $\eta^4$, respectively.
\end{lemma}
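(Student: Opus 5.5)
The plan is to build $\Phi_B$ and $\Phi_C$ as explicit change-of-basis matrices and then check every claimed identity by exact computation in $R=\bbZ[1/2,t,t^{-1}]$, as in Theorems~\ref{thm:sep} and~\ref{thm:collapse}. The first assertion, $a_i(t)=a_i(2)$ for $i\in X$, will be read off the stored Li--Wang--Hu factors. I expect the first-mode core factors to carry no $t$-dependence, so no map is needed in mode $A$.

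For mode $B$, Theorem~\ref{thm:collapse} gives an $8\times8$ minor of the core factor matrix that is a unit of $R$. Let $I_B\subseteq X$ be eight core indices supporting such a minor, so that $\{b_i(t)\}_{i\in I_B}$ is a basis of the core span for every $\tau\ne0$. I will add eight fixed standard coordinate vectors $e_{k}$, chosen so that the $16\times16$ matrix $N_B(t)=[\,b_i(t)\ (i\in I_B)\mid e_k\,]$ has unit determinant; this requires the complementary $8\times8$ minor to be a unit. Then set $\Phi_B=N_B(2)N_B(t)^{-1}$. This is a Laurent-polynomial matrix because $\det N_B(t)$ is a unit of $R$, and $\det\Phi_B=\det N_B(2)/\det N_B(t)$. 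By construction $\Phi_Bb_i(t)=b_i(2)$ on $I_B$. For the other eight core indices I must show that the coordinates of $b_i(t)$ in the basis $\{b_j(t)\}_{j\in I_B}$ are constant in $t$. This is a linear solve over $\bbQ(t)$ on the $16\times16$ core matrix, and it follows the pattern already seen in the parameter-free dependent triples and the cycle-block relations. The determinant is then a single exact computation, and I expect it to equal $\eta^{-4}$ after the choice of the $e_k$. If a pure power of $\eta$ does not come out directly, I will rescale the complementary columns by powers of $\eta$; this changes neither invertibility nor the action on the core span.

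Mode $C$ follows the same recipe with a twist. Choose a unit-minor basis $I_C\subseteq X$ and require $\Phi_Cc_i(t)=\eta^{\mathbf 1_Q(i)}c_i(2)$ on it, so that $\Phi_C=N_C'(2)N_C(t)^{-1}$, where the columns of $N_C'(2)$ indexed by $Q\cap I_C$ are multiplied by $\eta$. The remaining core vectors then have to satisfy their linear relations with coefficients that are constant up to the weights. Concretely, if $c_i(t)=\sum_j\kappa_{ij}(t)c_j(t)$, I need $\kappa_{ij}(t)=\eta^{\mathbf 1_Q(i)-\mathbf 1_Q(j)}\kappa_{ij}(2)$. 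Again this is an exact elimination over $\bbQ(t)$, followed by the determinant check against $\eta^{4}$.

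I expect the main obstacle to be this weighted consistency of the mode-$C$ relations, together with picking bases $I_B,I_C$ and completions whose determinants come out as exact monomials $\eta^{\mp4}$ rather than merely units $\pm2^at^k$. A unit of $R$ differs from $\eta^{\mp4}$ by a constant factor of the form $\pm2^a$, which the normalisation at $t=2$ absorbs. If the naive choice fails, I would first try choosing the completion vectors inside the span of the $32$ outside factors. Failing that, I would try ordering the basis block by block using $G_1,\dots,G_4$, since $Q=G_4$ is exactly one cycle-block. I expect the weight $\eta$ to be carried by that block alone, which would make the relations block-diagonal in the weights.
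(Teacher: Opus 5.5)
Your construction is the paper's: complete a unit-minor core basis by standard coordinate vectors to a frame $N_m$, set $\Phi_B=N_B(2)N_B(t)^{-1}$ and $\Phi_C=N_C(2)\operatorname{diag}(\Lambda_{J_C},I_8)N_C(t)^{-1}$, and verify over $\bbQ(t)$ that $K_B(t)=K_B(2)$ and $\Lambda_{J_C}K_C(t)=K_C(2)\Lambda$, the latter being exactly your weighted condition on the $\kappa_{ij}$. The paper records row and column choices whose minors are $-1024t^4$ and $(262144t^3)^{-1}$, with $Q\cap J_C=\{15\}$, so $\det\Phi_B=\eta^{-4}$ and $\det\Phi_C=\eta^3\cdot\eta=\eta^4$ come out without adjustment; in general a unit minor $u=\pm2^at^k$ gives $u(2)/u(t)=\eta^{-k}$, so the exponent depends on which coordinate complement you adjoin, and your rescaling fallback corrects it if needed.
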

\begin{proof}[Exact certificate]
For each core factor matrix $F_m(t)$ choose basis labels $J_m$ and
coordinate rows $R_m$ as follows, with all indices one-based:
\[
\begin{array}{c|l|l|l}
m&J_m&R_m&\det F_m[R_m,J_m]\\\hline
A&4,9,12,15,17,19,22,25&1,2,3,5,7,9,10,13&8\\
B&4,9,12,15,17,22,25,28&1,2,3,5,6,8,9,10&-1024t^4\\
C&4,9,12,15,17,19,22,34&1,2,3,5,6,7,9,11&(262144t^3)^{-1}
\end{array}
\]
Put $E_m=F_m[:,J_m]$, $K_m=E_m[R_m,:]^{-1}F_m[R_m,:]$ and
$\Lambda=\operatorname{diag}(\eta^{\mathbf1_Q(i)})_{i\in X}$.
The exact rational-function identities are
\[
F_m=E_mK_m,\qquad K_B(t)=K_B(2),\qquad
\Lambda_{J_C}K_C(t)=K_C(2)\Lambda.
\]
Complete $E_m$ by the standard coordinate vectors outside $R_m$ to a
square frame $N_m$. Then take
\[
\Phi_B=N_B(2)N_B(t)^{-1},\qquad
\Phi_C=N_C(2)\operatorname{diag}(\Lambda_{J_C},I_8)N_C(t)^{-1}.
\]
The listed minors are Laurent units, so these maps and inverses are
defined for every nonzero complex parameter. Their determinants and
factor action follow from the identities. The supplementary
\path{uniform_normal_form_audit/independent_check.py} verifies all
identities and inverses over $\bbQ(t)$, and checks all $2304$ input
coordinates against the original symbolic family.
\end{proof}

\begin{theorem}[Uniform strong eleven on the core]\label{thm:uniformcore11}
For every $t\in\bbC^\times$, each partial sum of at most eleven core
terms has rank equal to its size and a globally unique minimal complex
decomposition.
\end{theorem}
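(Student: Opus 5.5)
The plan is to reduce every parameter $t\in\bbC^\times$ to the already certified parameter $t=2$ through the Laurent normal form, and then use Lemma~\ref{lem:weightstable}-type coefficient stability to absorb the residual scalar $\eta$.

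Fix $t\ne0$ and a support $I\subseteq X$ with $|I|\le11$. Apply the ambient invertible map $\mathrm{id}\otimes\Phi_B\otimes\Phi_C$ from Lemma~\ref{lem:coreweight}. Since its determinants $\eta^{\mp4}$ are nonzero for every $\tau\ne0$, this map preserves tensor rank. It also gives a bijection between decompositions of $T_I(t)$ and of its image, respecting gauge and permutation. The image is
\[
 \sum_{i\in I}\eta^{\mathbf 1_Q(i)}\,a_i(2)\otimes b_i(2)\otimes c_i(2),
\]
so the core partial sums at $t$ are, up to an invertible change of coordinates, reweightings of the core partial sums at $2$. The weights are $1$ off $Q$ and $\eta\ne0$ on $Q$. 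Both rank and global uniqueness (up to permutation and gauge) therefore transfer from this weighted tensor back to $T_I(t)$.

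It remains to prove rank $|I|$ and uniqueness for every nonzero reweighting of each core support of size at most eleven at $t=2$. I will split by which certificate handled the support at $t=2$ in Theorems~\ref{thm:strong9}, \ref{thm:strong10} and \ref{thm:strong11}.

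\textbf{Lovitz--Petrov and splitting certificates.} These depend only on factor spans $d_m(U)$, so they are invariant under nonzero rescaling of summands. This covers the Lovitz--Petrov supports through level ten, the $3712$ core eleven-sets with $d(I)=24$, and the splitting lemma. For the splitting lemma, the rank-minimality and identifiability of proper faces are themselves weighted core supports of smaller size, so I will argue by induction on $|I|$.

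\textbf{Saturated-projection certificates.} These cover the sixteen $(2,2,2)$ ten-sets and the $624$ eleven-sets. Here I will invoke Lemma~\ref{lem:weightstable} directly.

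\textbf{Zero-corner and one-arm completion certificates.} These cover the sixteen $(3,3,0)$ ten-sets and the $32$ eleven-sets, and they are the main obstacle. Coefficient stability was proved only for the saturated criterion. My approach is to observe that rescaling a summand by $\lambda\ne0$ is the same as rescaling its $A$-factor, which changes $H$, $L$, $U$ by diagonal weights. I will try to show that the completion argument is equivariant. Weights on the core $H$ can be absorbed into its $A$-basis, which changes the slices $M_i$ by a change of $A$-coordinates only. Weights on arm terms can be absorbed into the shear variables $S,R$ by rescaling the columns $u_k,v_j$ (and $r_j$). The constant invertible slice $D_0$ then becomes a diagonal matrix with nonzero rescaled entries. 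The linear system from the commutators with $D_1$ (or $D_\beta$) becomes a diagonally rescaled system, so its rank, $22$ or $8$, should persist. The final homogeneous linear forms should acquire nonzero factors, leaving their coefficient determinant nonzero.

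If this equivariance fails to be clean, I will fall back to redoing the completion computation symbolically with $\eta$ as a parameter. The commutator linear system then has entries in $\bbQ[\eta,\eta^{-1}]$, and the goal is to exhibit Laurent-unit minors, exactly as Theorem~\ref{thm:sep} does for $t$. Only the weights $1$ and $\eta$ occur, so there is a single parameter to carry.

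Finally, the sizes at most eleven are closed downward, and every core support falls in exactly one certificate class at $t=2$ by the coverage equalities of Theorem~\ref{thm:strong11}. This yields the statement for all $t\in\bbC^\times$.
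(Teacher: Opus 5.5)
Your Laurent normal-form reduction, and your handling of the Lovitz--Petrov, splitting and saturated supports, coincide with the paper's. The completion supports, however, are not actually proved. These are the sixteen zero-corner ten-sets and the thirty-two one-arm eleven-sets, which you rightly call the crux.

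You argue equivariance for \emph{arbitrary} per-summand weights, and the key step fails. That step says core weights ``can be absorbed into its $A$-basis'', changing the slices by a change of $A$-coordinates only. In the first zero-corner ten-set the ten $A$-factors span only seven dimensions. The relations for $a_{40},a_{44},a_{48}$ (e.g.\ $a_{40}=-a_4-2a_{12}-2a_{15}+a_{17}-a_{34}$) force any linear map on $A$ that keeps every $a_i$ on its own line to be a single scalar. So a nonuniformly weighted core is not an $A$-coordinate change of a certified tensor: the arm $A$-factors move, and with them the annihilating functionals that produce the constant slices.

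Nor does the computation transfer by inspection. The parts of $M_iD_0^{-1}M_j-M_jD_0^{-1}M_i$ linear in the (reparametrized) shears are weight-free. The quadratic parts, however, are contracted through $D_0^{-1}$, which now carries the nonuniform core weights. So the cancellations that left the surviving entries homogeneous linear ($8t$ and $s$ in Lemma~\ref{lem:completionfirst}) need not persist. Your symbolic fallback in $\eta$ is an unexecuted computation, not an argument.

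The missing observation is that the weights on these charts are not arbitrary. $Q=G_4$ is a whole tetrad, and it contains or misses each chart block: $H,L,U$ for the ten-sets, and $F,B,U$ for the eleven-sets. The paper's checker verifies this whole-block condition for all $16+32$ partitions. Hence the weighted sum is $hH+\ell L+uU$, respectively $fT_F+bT_B+uT_U$, with block-constant nonzero scalars.

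The paper then scales $B_0,B_1$ by $h^{-1},\ell^{-1}$ and $C_0,C_1$ by $1,h/u$. In the one-arm case it scales $B_F,B_B$ by $f^{-1},b^{-1}$ and $C_0,C_1$ by $1,b/u$. This invertible map carries the weighted decomposition termwise onto the $t=2$ one, whose rank and uniqueness are Theorems~\ref{thm:strong10} and~\ref{thm:strong11}. No commutator certificate needs to be revisited.

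Once block constancy is in hand, your own mechanism also closes. The weighted completion is $hH+\ell(\mathrm{id}\otimes S\otimes\mathrm{id})L+u(\mathrm{id}\otimes\mathrm{id}\otimes R)U=h\,H(\ell S/h,uR/h)$, so Theorem~\ref{thm:zerocorner} applies verbatim. In the one-arm case, the $F$-weight only rescales the fixed strip of Lemma~\ref{lem:fixedstrip11}, and the bottom tensor becomes $b$ times the unweighted one evaluated at $uR/b$. Without that observation, the completion cases remain open in your proposal.
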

\begin{proof}
The normal form preserves every restricted factor matroid and paired
independence condition. Thus all core Lovitz--Petrov certificates through
ten persist. The sixteen exceptional saturated ten-sets persist by
Lemma~\ref{lem:weightstable}. For each of the sixteen zero-corner
ten-sets, $Q$ meets a whole block or is disjoint from it. In the notation
$H,L,U$ of Section~\ref{app:zerocorner}, the weighted sum is
$hH+\ell L+uU$, with all weights nonzero. Scale $B_0,B_1$ by
$h^{-1},\ell^{-1}$ and $C_0,C_1$ by $1,h/u$; this sends it termwise
to its identifiable specialization. Hence the core is strongly ten-rigid
at every nonzero parameter.

The $3712$ core eleven-sets with factor sum $24$ now satisfy
Lemma~\ref{lem:splitting11}. The $624$ saturated cases persist by
coefficient stability. For each of the remaining $32$ cases retain the
partition $F,B,U$ and mode permutation of Section~\ref{app:strong11}.
The weights are constant on each partition block, say $f,b,u$.
In the second mode, $F$ spans $B_F$ while $B,U$ lie in its complementary
block $B_B$; in the third, $F,B$ lie in $C_0$ and $U$ in $C_1$.
Scaling $B_F,B_B$ by $f^{-1},b^{-1}$ and $C_0,C_1$ by $1,b/u$
sends the weighted decomposition termwise to the specialization.
Its global uniqueness was proved in Theorem~\ref{thm:strong11}.
The supplementary independent checker verifies this block geometry
and the whole-block condition for all $32+16$ partitions. This argument
uses the established exhaustive census and pointwise certificates; it
does not replace them by a representative or by a symmetry assumption.
\end{proof}

\begin{corollary}[Uniform sharpness on the core]\label{cor:uniformcoresharp}
For every $t\in\bbC^\times$, the restriction of $D(t)$ to the
sixteen-term core $X$ has strong radius exactly $11$.  It has a distinct
sixteen-term decomposition at tensor-summand distance $12$.
\end{corollary}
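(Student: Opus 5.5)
The statement has two halves: the lower bound $\srho\ge11$ for the core restriction, and an upper certificate showing $\srho\le11$ by a distinct sixteen-term decomposition at distance $12$.

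For the lower bound, observe that the subsets of the core restriction are exactly the core supports of size at most eleven. Theorem~\ref{thm:uniformcore11} already gives, for every $t\in\bbC^\times$, rank equal to size and global uniqueness for each such partial sum. This is precisely the definition of $\srho(D(t)|_X)\ge11$.

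For the upper bound, I transport the explicit replacement of Proposition~\ref{prop:strongsharp} from $t=2$ to general $t$ via the Laurent normal form, Lemma~\ref{lem:coreweight}. The twelve supports $F\cup G\cup U=G_2\cup G_3\cup G_1$ lie inside $X$. At $t=2$, the proposition gives twelve new triples with the same sum, none equal to an original product. The remaining four core terms $G_4$ are retained, so $X$ gets a sixteen-term decomposition at distance exactly $12$; the argument about common multiplicities is the same as in that proof.

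At general $t$, apply the ambient invertible maps $\mathrm{id}\otimes\Phi_B^{-1}\otimes\Phi_C^{-1}$ to everything. By Lemma~\ref{lem:coreweight}, the core terms at $t$ map to the core terms at $2$, except that the $G_4=Q$ terms acquire the weight $\eta$ in the third factor. Since the replaced block $G_1\cup G_2\cup G_3$ is disjoint from $Q$, the weighted sum of the twelve replaced terms at $t$ equals the unweighted sum at $2$. Concretely, $\Phi_C$ multiplies $c_i(t)$ by $\eta^{-\mathbf1_Q(i)}$ relative to $c_i(2)$, so on the $Q$-free block the images coincide exactly. Pulling the twelve new triples back by $\Phi_B^{-1},\Phi_C^{-1}$ therefore gives twelve triples whose sum is $T_{G_1\cup G_2\cup G_3}(t)$.

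It remains to check that the result is genuinely distinct, and this is the step needing care. The twelve new pulled-back products are nonzero because the maps are invertible. They are also distinct from one another, since invertible linear maps preserve distinctness. They must differ from every original core product at parameter $t$. For the twelve removed products this holds because the maps send them bijectively to the $t=2$ removed products, which the new terms avoid. For the four retained $Q$ products, the image at $t$ is $a_i(2)\otimes b_i(2)\otimes\eta c_i(2)$. This is a scalar multiple of a $t=2$ product, and $\eta\neq 0$. So a coincidence with a new term would require a new $t=2$ term proportional to an original $Q$ product. I would exclude that using the rank triples of Remark~\ref{rem:replacementorbit}: the new terms have a factor matrix of rank $3$, while all original factors have rank at most two, and the rank of a factor matrix is unchanged by scaling. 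With that, the pulled-back decomposition is distinct at distance $12$. This gives $\srho\le11$, and combined with the lower bound, $\srho=11$.
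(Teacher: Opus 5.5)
Your proof is correct and follows the paper's route. The lower bound comes from Theorem~\ref{thm:uniformcore11}, and the replacement of Proposition~\ref{prop:strongsharp} is transported through Lemma~\ref{lem:coreweight}, using that $G_1\cup G_2\cup G_3$ avoids $Q$.

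The only divergence is in ruling out coincidences with the retained $G_4$ terms. You use the factor-rank triples of Remark~\ref{rem:replacementorbit}, which is valid. The paper instead observes that such a coincidence would only add a second copy and cannot raise the common multiplicity above $4$, so that check is unnecessary.

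One direction slip: it is the forward maps $\mathrm{id}\otimes\Phi_B\otimes\Phi_C$, not their inverses, that carry the parameter-$t$ core terms to their $t=2$ values. Your later pullback step uses the correct direction.
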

\begin{proof}
The lower bound is Theorem~\ref{thm:uniformcore11}.  The replacement of
Proposition~\ref{prop:strongsharp} removes
$G_1\cup G_2\cup G_3$, which is disjoint from the weighted tetrad $G_4$.
Lemma~\ref{lem:coreweight} therefore carries both its removed summands and
its twelve replacement summands termwise to their checked $t=2$ values.
Apply the inverse normal form and retain $G_4$.  The checked new terms do
not match a removed summand; the four retained terms exhaust the common
core multiset, by the same multiplicity argument as in
Proposition~\ref{prop:strongsharp}.  Thus the resulting distinct core
decomposition has distance $12$, proving the upper bound on the strong
radius and hence equality.
\end{proof}

This is a uniform statement about the core, not about all mixed supports
of the forty-eight-term family. The new normal-form and replacement
checks are separate supplementary dependencies, outside the unchanged
historical \texttt{v1.3} archive and its fixed-parameter verifier.

\begin{remark}[The remaining ambient question]
On an ambient neighborhood retaining the rank-eleven, strong-nine,
factor-minor and saturated-projection certificates, strong eleven is
equivalent to uniqueness of the $32$ residual one-arm eleven-sums.
Indeed their ten-subsets cover the sixteen zero-corner exceptions;
the other sixteen exceptional ten-sets have projection-certified
eleven-extensions. Uniqueness descends to subsets of a minimal
decomposition by replacement and cancellation. Thus these extensions
first give strong ten; splitting then handles the ordinary eleven-sets,
and the $624$ projections and $32$ hypotheses finish eleven.
The supplementary \path{ambient_reduction_audit/} verifies this incidence
coverage and full ambient differential ranks $460$ for the sixteen
zero-corner ten-sums and $506$ for the thirty-two one-arm eleven-sums,
in gauge-fixed rank-one coordinates ($46$ per summand).

For the latter sums, the injective differential and pointwise global
uniqueness also exclude affine collision of distinct minimal competitors
at the displayed fiber. In a rank-one chart choose $506$ output coordinates
with nonsingular differential. The polynomial identity
$F(v)-F(q)=H(q,v)(v-q)$ has $H(q,q)$ invertible, so equality near
the diagonal forces $v=q$. Apply this at every permutation fiber.
Any affine limiting tuple has eleven rank-one-or-zero terms; rank eleven
excludes zero terms and uniqueness leaves only those fibers. Competitors
escaping to infinity in tensor-summand coordinates remain possible.
No boundary-emptiness or geometric-component elimination certificate is
provided, so this reduction does not improve the ambient strong bound.
\end{remark}

\section{Pointwise rank rigidity through twelve}\label{app:rank12}

The results of this section concern the specified rational decomposition
$D(2)$. We use Theorem~\ref{thm:strong11}, the established bound
$d(I)\ge24$ for every outside-containing eleven-set, and the earlier
complete bound $d(I)\ge22$ for every outside-containing ten-set.
The new certificates do not replace the verification of these inputs.
Lemma~\ref{lem:splitting11} gives rank twelve whenever $d(I)\ge25$;
the threshold $d(I)\ge26$ additionally gives global uniqueness.

\paragraph{Mixed supports.}
Retain the sixteen-element core $X$ from Section~\ref{app:strong11}.
For $I=J\sqcup K$, where $J\subseteq X$ and $K\subseteq X^c$,
let $t(K)$ be the number of occupied quotient directions, summed over
the three certified quotient maps. The verified core lower bounds give
\[
 d(I)\ge\mu_{|J|}+t(K),\qquad
 (\mu_0,\ldots,\mu_{12})=(0,3,6,9,11,14,15,17,18,20,21,23,23).
\]
If $|K|=1$, this is at least $26$. If $|K|\ge2$ and $d(I)\le25$,
no outside index can occupy a singleton quotient class in two modes:
deleting it would leave a mixed eleven-set of dimension sum at least
$24$. If $|K|\ge3$, deleting any pair leaves a mixed ten-set of
dimension sum at least $22$. Thus at most three occupied quotient
classes, summed over modes, may be entirely removed by that pair.
Each removed class supplies an independent lost quotient direction,
which proves this second deletion gate.

The exhaustive traversal enumerates increasing outside subsets to depth
twelve. Its only subtree pruning criterion is that the sum of the three
factor ranks over $\mathbb F_3$, taken as an ordinary integer sum, is already
at least $26$, a monotone lower bound in characteristic zero. The deletion
gates and the bound
$\mu_{12-|K|}+t(K)\ge26$ skip completions only, not descendants.
Every remaining core completion is tested in all sixteen coordinates,
first modulo $3$ and then, if needed, modulo $1000003$.
Two arithmetic implementations perform $218180816$ completed checks
each and return the same exact $52$-support residue. Rational checks
give $20$ profiles $(8,8,8)$, $16$ profiles $(8,9,8)$, and $16$ profiles
$(8,8,9)$. The latter $32$ have rank twelve by the splitting threshold.

\paragraph{Explicit lower-bound certificates.}
After coordinate projections to eight-dimensional spaces and a stored
projection of one mode to dimension five, use the Koszul map
\[
 Y(a\otimes b\otimes c)(\omega\otimes\beta)
   =(\Pi a\wedge\omega)\otimes c\,\beta(b),
 \qquad \omega\in\Lambda^2\mathbb C^5.
\]
Its matrix is $80\times80$, and a product tensor contributes rank at
most $\binom42=6$. A nonzero $67$-minor therefore proves border rank
at least twelve. The certificate lists explicit row and column indices
for $1836$ such minors; each is nonzero modulo both $1000003$ and
$65521$. A nonzero reduction at either prime proves nonvanishing over
$\mathbb Q$ and $\mathbb C$. The displayed twelve terms give the
matching upper bound. Coverage is as follows:
\[
\begin{array}{c|r|l}
 \text{support class}&\text{number}&\text{rank-twelve certificate}\\\hline
 \text{core}&1816&67\text{-minor}\\
 \text{core}&4&\text{uniform completion bound below}\\
 \text{mixed residue},\ d=24&20&67\text{-minor}\\
 \text{mixed residue},\ d=25&32&\text{splitting lemma}\\
 \text{other mixed supports}& &d\ge26
\end{array}
\]
The verifier checks exact support-set equality: the first two rows
partition all $\binom{16}{12}=1820$ core supports, and the next two
partition the mixed residue.

\paragraph{The four exceptional core supports.}
Put
\[
 G_1=(4,17,34,40),\quad G_2=(9,25,28,35),\quad
 G_3=(12,19,22,32),\quad G_4=(15,38,44,48).
\]
The exceptional supports are the unions of three tetrads. The exact
charts use the following mode order and groups:
\[
\begin{array}{c|c|c|c|c|c}
 I&\text{mode order}&F&B&U&\alpha\\\hline
 G_1\cup G_2\cup G_3&(A,B,C)&G_2&G_3&G_1&(-1,-1,1,1)\\
 G_1\cup G_2\cup G_4&(A,C,B)&G_2&G_4&G_1&(-1,1,1,1)\\
 G_1\cup G_3\cup G_4&(A,C,B)&G_3&G_1&G_4&(-1,1,1,1)\\
 G_2\cup G_3\cup G_4&(A,B,C)&G_3&G_2&G_4&(1,-1,1,1)
\end{array}
\]
Write $H=T_{F\cup B}$ and name the reordered modes $A,B,C$.
The eight factors of $H$ are bases in its first two modes, ordered
$F,B$; hence $\rank H=8$. The third factors of $F,B$ lie in $C_0$, and
those of $B,U$ are respective bases of the complementary four-spaces
$C_0,C_1$. All second-mode factors of $U$ have zero
top block. Let $u_j,v_j$ be the bottom first- and second-mode arm
coordinates. The certificates reconstruct these identities in all
sixteen original coordinates and verify $u^T\alpha=0$ with every
entry of $\alpha$ nonzero.

For any complex linear map $R:C_1\to C_0$, define
\[
 H(R)=H+(\mathrm{id}\otimes\mathrm{id}\otimes R)T_U,
 \qquad
 Z(R)=\sum_{s=1}^4e_s\otimes e_s\otimes e_s
          +\sum_{j=1}^4u_j\otimes v_j\otimes r_j.
\]
The latter is the bottom first/second block of $H(R)$. Its contraction
by $\alpha$ in the first mode is the constant invertible matrix
$\operatorname{diag}(\alpha)$, so $\operatorname{rank}Z(R)\ge4$.
In any $r$-term decomposition of $H(R)$, the second-mode projections
to the top four-space span that space. Choose four independent such
projections and retract the second mode onto its bottom block,
identically on that block, killing the chosen factors. Projecting the
first mode onto its bottom block kills the fixed strip and its
retraction error. What remains is precisely $Z(R)$, represented by at
most $r-4$ terms. Consequently
\[
 \operatorname{rank}H(R)\ge4+\operatorname{rank}Z(R)\ge8
 \quad\text{for every complex }R.
\]
The arm has rank four and full $C_1$-support.  The rank clause of
Theorem~\ref{thm:zerocorner}, with $B_1=0$, $q=8$, and $c=4$, therefore
gives $\rank T_I=12$.  This is a uniform rank argument, not a uniqueness
assertion for completions and not a border-rank subtraction argument.

\begin{theorem}[Pointwise rank rigidity through twelve]\label{thm:rank12}
Every displayed partial sum of at most twelve terms of $D(2)$ has
tensor rank equal to its number of terms. Hence $\rrho(D(2))\ge12$.
\end{theorem}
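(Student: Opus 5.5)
The plan is to reduce everything to level exactly twelve and then assemble the certificates of this section by support class. By Lemma~\ref{lem:subadd}, rank form, if every twelve-subset $S$ satisfies $\rank(T_S)=12$, then every smaller subset has rank equal to its size. It therefore suffices to check that all $\binom{48}{12}$ twelve-supports have rank twelve; the upper bound $\rank(T_S)\le12$ always comes from the twelve displayed terms. I split each support as $I=J\sqcup K$ with $J\subseteq X$ and $K\subseteq X^c$, and treat the pure core case $K=\varnothing$ separately from the mixed case $K\ne\varnothing$.

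\textbf{Mixed supports.} Theorem~\ref{thm:strong11} makes every proper displayed subset rank-minimal. The paired-product matrices have column rank $48$, so the twelve displayed tensors are independent. Lemma~\ref{lem:splitting11} then gives rank twelve as soon as $d(I)\ge25$. The exhaustive traversal described above reduces the supports with $d(I)\le25$ to the named $52$-element residue. For this I would check its soundness once more: the $|K|=1$ bound $23+3=26$; the singleton-class deletion gate, which rests on $d\ge24$ for mixed eleven-sets; the pair-deletion gate, which rests on $d\ge22$ for mixed ten-sets; and pruning by $\mathbb F_3$ ranks, which are lower bounds in characteristic zero. With soundness in hand, every other mixed support has $d(I)\ge26$. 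In the residue, the $32$ supports with $d=25$ fall to the splitting threshold. The $20$ supports with $d=24$ fall to a nonzero $67$-minor of the $80\times80$ Koszul map: each product contributes rank at most $6$ and $67>11\cdot6$, so $\brank\ge12$ and hence $\rank\ge12$.

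\textbf{Core supports.} Of the $1820$ twelve-subsets of $X$, $1816$ carry nonzero $67$-minors and are handled the same way. The remaining four are the unions of three tetrads. For these I would invoke the rank clause of Theorem~\ref{thm:zerocorner} with $B_1=0$, $q=8$ and $c=4$, taking $H=T_{F\cup B}$ and $U=T_U$ in the chart recorded in the table. The hypotheses to verify are the following:
\begin{itemize}
\item $\rank H=8$ and $\rank T_U=4$ with full $C_1$-support, which holds because the listed factors are bases;
\item $\rank H(R)\ge8$ for every $R$, which I get from the fixed-strip Lemma~\ref{lem:fixedstrip11} together with the constant invertible contraction $\operatorname{diag}(\alpha)$ of $Z(R)$, using $u^{\mathsf T}\alpha=0$.
\end{itemize}
Only the rank clause is needed, so no zero-shear analysis is required.

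\textbf{Main obstacle.} The step I expect to be hardest is not any single argument but completeness of the coverage. I must confirm that the $1816+4$ core supports exhaust $\binom{16}{12}$, that the $20+32$ residue partitions exactly the mixed supports with $d\le25$, and that no gate in the traversal skips a descendant. I would rely on the exact support-set equality checks and the two independent traversal implementations, and argue that each gate is a valid implication from the cited level-ten and level-eleven bounds.
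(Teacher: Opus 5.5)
Your proposal is correct and follows the paper's proof essentially verbatim. It uses the same mixed-support reduction to the $52$-support residue ($32$ by the splitting threshold, $20$ by $67$-minors), the same $1816$ core $67$-minors, and the rank clause of Theorem~\ref{thm:zerocorner} with the fixed-strip bound of Lemma~\ref{lem:fixedstrip11} for the four tetrad unions. The only cosmetic difference is that you obtain sizes below twelve via Lemma~\ref{lem:subadd}, whereas the paper cites Theorem~\ref{thm:strong11} directly; you still need that theorem anyway for the splitting hypotheses.
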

\begin{proof}
Theorem~\ref{thm:strong11} settles sizes at most eleven. The exhaustive
mixed reduction, the explicit minors, and the four uniform completion
arguments settle every twelve-support.
\end{proof}

Combined with Theorem~\ref{thm:strong11}, the exact twelve-term replacement in
Proposition~\ref{prop:strongsharp} yields
$\srho(D(2))=11$ and a rational $48$-term replacement
of the full decomposition. Every strictly shorter decomposition must
discard at least thirteen original tensor summands: cancellation of
the shared terms would otherwise contradict Theorem~\ref{thm:rank12}.
The parameter $z$ in the separate symbolic replacement family varies
decompositions of the fixed partial sum at original parameter $t=2$.
The supplied symbolic certificate verifies a nonconstant rational curve
of decompositions through the displayed tuple (at $z=1$), with the
distinct rational replacement at $z=2$. Its identity is checked over
$\mathbb Q(z)$ by \texttt{verify\_rank12.py --independent}, separately
from the direct fixed-replacement check in Proposition~\ref{prop:strongsharp}.
Thus this twelve-term decomposition is nonisolated even after quotienting
by permutations and factor gauges; $z$ does not vary the original family
parameter $t$.

\section{Generic rank rigidity through twelve on the parameter curve}\label{app:curve12}

This section concerns only the Li--Wang--Hu parameter curve.  The rank-twelve
conclusion below is not an ambient-component assertion.

\begin{lemma}[Removal of block weights]\label{lem:blockweights12}
Suppose a displayed decomposition is partitioned into \(F,B,U\).  In its
second mode let the factors of \(F\) lie in \(B_F\), and those of \(B,U\)
in \(B_B\), where \(B_F\oplus B_B\) is the factor support.  In its third
mode let the factors of \(F,B\) lie in \(C_0\), and those of \(U\) lie in
\(C_1\), where \(C_0\oplus C_1\) is the factor support.  For
\(f,b,u\in\bbC^\times\), the weighted decomposition
\[
 f\sum_{i\in F}d_i+b\sum_{i\in B}d_i+u\sum_{i\in U}d_i
\]
is carried termwise to the unweighted one by invertible maps in these two
modes.  In particular, its rank and its uniqueness properties are unchanged.
\end{lemma}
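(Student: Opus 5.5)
We construct explicit diagonal-block rescalings in the second and third modes, with the first mode left untouched. The same device already appeared in the proof of Theorem~\ref{thm:uniformcore11} for the $32$ one-arm partitions.

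Write $\pi_F,\pi_B$ for the projections of the second-mode factor support onto $B_F$ and $B_B$. Likewise write $\pi_0,\pi_1$ for the projections of the third-mode support onto $C_0$ and $C_1$. Define $\Psi_B=f\,\pi_F+b\,\pi_B$, which is invertible because $f,b\ne0$. Define $\Psi_C=\mathrm{id}_{C_0}\oplus(u/b)\,\mathrm{id}_{C_1}$, invertible because $u,b\ne0$. Extend both by the identity on a complement of the factor supports if the ambient spaces are larger.

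The termwise check is immediate from the block hypotheses, taking $\lambda_i$ to be the block weight of term $i$.
\begin{itemize}
\item For $i\in F$ we have $b_i\in B_F$ and $c_i\in C_0$, so $\Psi_B b_i=f b_i$ and $\Psi_C c_i=c_i$.
\item For $i\in B$ we have $b_i\in B_B$ and $c_i\in C_0$, so $\Psi_B b_i=b\,b_i$ and $\Psi_C c_i=c_i$.
\item For $i\in U$ we have $b_i\in B_B$ and $c_i\in C_1$, so $\Psi_B b_i=b\,b_i$ and $\Psi_C c_i=(u/b)c_i$.
\end{itemize}
In every case $(\mathrm{id}\otimes\Psi_B\otimes\Psi_C)d_i=\lambda_i d_i$. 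Hence the inverse map $\mathrm{id}\otimes\Psi_B^{-1}\otimes\Psi_C^{-1}$ carries the weighted decomposition termwise to the unweighted one, which is the first claim.

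For the last sentence, an invertible map $g=\mathrm{id}\otimes g_B\otimes g_C$ sends rank-one tensors to rank-one tensors, and $g^{-1}$ does the same. It therefore induces a length-preserving bijection between decompositions of the weighted sum and decompositions of the unweighted sum. This bijection matches displayed terms, and it respects permutation and gauge because it acts factorwise. Rank, minimality and global uniqueness therefore transfer in both directions. The same holds for every sub-sum, since $g$ maps partial sums to the correspondingly weighted partial sums.

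The only point needing care is the extension to larger ambient spaces. The hypothesis that $B_F\oplus B_B$ and $C_0\oplus C_1$ are the factor supports gives well-defined projections on the supports. The identity extension on a complement keeps $g$ invertible and changes nothing on the supports. Competitor factors may a priori lie outside the supports, but invertibility of $g$ on the whole ambient space makes this irrelevant. Beyond this the argument is bookkeeping, and I expect no real obstacle.
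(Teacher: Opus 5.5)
Your proof is correct and is the paper's argument: your $\Psi_B^{-1}$ and $\Psi_C^{-1}$ are exactly the paper's rescalings of $B_F,B_B$ by $f^{-1},b^{-1}$ and of $C_0,C_1$ by $1,b/u$, extended invertibly to ambient complements. Your paragraph on transferring rank and uniqueness through the induced bijection of decompositions spells out the ``in particular'' clause, which the paper leaves implicit.
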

\begin{proof}
Scale \(B_F,B_B\) by \(f^{-1},b^{-1}\), respectively, and \(C_0,C_1\)
by \(1,b/u\).  The coefficients of an \(F\)-, \(B\)-, and \(U\)-term
then become \(ff^{-1}\), \(bb^{-1}\), and \(ub^{-1}(b/u)\), respectively.
Extend the maps invertibly to ambient complements.
\end{proof}

Put
\[
 G_1=\{4,17,34,40\},\quad G_2=\{9,25,28,35\},\quad
 G_3=\{12,19,22,32\},\quad G_4=\{15,38,44,48\}.
\]
The Laurent normal form of Lemma~\ref{lem:coreweight}, with
\(\eta=t/2\), sends a core summand to its \(t=2\) value, with the sole
weight \(\eta\) on \(G_4\).  In the four exceptional twelve-support
charts of Section~\ref{app:rank12}, the verified block data are
\[
\begin{array}{c|c|c|c}
 \text{support}&\text{mode order}&(F,B,U)&(f,b,u)\\\hline
 G_1\cup G_2\cup G_3&(A,B,C)&(G_2,G_3,G_1)&(1,1,1)\\
 G_1\cup G_2\cup G_4&(A,C,B)&(G_2,G_4,G_1)&(1,\eta,1)\\
 G_1\cup G_3\cup G_4&(A,C,B)&(G_3,G_1,G_4)&(1,1,\eta)\\
 G_2\cup G_3\cup G_4&(A,B,C)&(G_3,G_2,G_4)&(1,1,\eta).
\end{array}
\]
The symbolic four-chart check verifies the block memberships, including the
zero arm strips, complementary third-mode blocks, nonzero \(\alpha\), and
the scalar cancellations in all twelve terms of every row.  Thus
Lemma~\ref{lem:blockweights12} transports the pointwise completion bound:
each of these four partial sums has rank \(12\) for every \(t\ne0\).
This is an explicit transport argument, not a semicontinuity claim.

\begin{proposition}[Uniform equal-length upper certificate]\label{prop:sharpcurve}
For every \(t\in\bbC^\times\), the full decomposition \(D(t)\) has a
distinct \(48\)-term decomposition at tensor-summand distance exactly \(12\).
Consequently \(\srho(D(t))\le11\).
\end{proposition}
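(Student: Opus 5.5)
The plan is to transport the exact rational twelve-term replacement of Proposition~\ref{prop:strongsharp} along the curve using the Laurent normal form of Lemma~\ref{lem:coreweight}, in the same way as the proof of Corollary~\ref{cor:uniformcoresharp}. That replacement removes $G_1\cup G_2\cup G_3$, which is disjoint from the weighted tetrad $G_4=Q$. By Lemma~\ref{lem:coreweight}, the ambient invertible maps $(\mathrm{id},\Phi_B,\Phi_C)$ carry each removed summand $d_i(t)$, $i\in G_1\cup G_2\cup G_3$, exactly to $d_i(2)$, since the weight $\eta^{\mathbf 1_Q(i)}$ equals $1$ off $Q$. Let $\Psi_t=\mathrm{id}\otimes\Phi_B^{-1}\otimes\Phi_C^{-1}$. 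Applying $\Psi_t$ to the twelve checked replacement summands at $t=2$ produces twelve rank-one tensors whose sum equals $\sum_{i\in G_1\cup G_2\cup G_3}d_i(t)$, because $\Psi_t$ is linear and inverts the normal form on each removed term. Both maps are Laurent-polynomial with unit determinants $\eta^{\mp4}$, so this is defined for every $t\in\bbC^\times$, with no exceptional parameters.

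Next, retain the $36$ summands of $D(t)$ outside $G_1\cup G_2\cup G_3$ unchanged. Together with the twelve transported terms they give a $48$-term decomposition of $\M{4,4,4}$, since the partial sums agree termwise and $D(t)$ is a decomposition by \eqref{eq:family}. The substance lies in the distance. We have $d(D(t),D'(t))\le12$ by construction, and we must prove it equals $12$, which also shows $D'(t)\ne D(t)$ up to permutation and gauge. I would argue through the common tensor-summand multiset, as in Proposition~\ref{prop:strongsharp}. The $48$ original summands of $D(t)$ are pairwise distinct; indeed pairwise separation (Corollary~\ref{cor:allspec}) makes them pairwise nonproportional. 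So it suffices that no transported new term equals any removed summand $d_i(t)$. Since $\Psi_t$ is injective on $A\otimes B\otimes C$ and maps $d_i(2)$ to $d_i(t)$, an equality at $t$ would pull back to an equality between a new term and $d_i(2)$ at $t=2$, which the checked certificate excludes. A new term coinciding with a retained term only adds a second copy and cannot raise the common multiplicity above $36$. Hence the maximum matching has size exactly $36$, and the distance is $12$ for every $t\ne0$.

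Finally, $\srho(D(t))\le11$ follows directly. The twelve-subset $G_1\cup G_2\cup G_3$ has a second decomposition of length $12$ that differs from the displayed one up to permutation and gauge, since some new term is not proportional to any removed term. So the strong-radius condition fails at $k=12$.

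The main obstacle is the bookkeeping in the matching step: showing that injectivity of $\Psi_t$ handles distinctness \emph{up to gauge}, not only equality of products. The terms are complete tensors, and gauge leaves the tensor unchanged, so comparing products suffices. I would still check that a new term cannot be proportional to a removed one with a nonunit scalar, because proportional but unequal tensors count as different summands. The transport itself adds nothing new once Lemma~\ref{lem:coreweight} is assumed.
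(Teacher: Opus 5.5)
Your proposal is correct and follows the paper's own proof step for step. You transport the $t=2$ replacement of $G_1\cup G_2\cup G_3$ by the inverse Laurent normal form, retain the other $36$ terms, and use invertibility together with pairwise distinctness of the originals to show the common multiset has size exactly $36$. Your closing worry about proportional terms is unnecessary: gauge fixes the product tensor, so the distance count and the failure of strong rigidity at level twelve only need that no transported term equals a removed one, which your injectivity argument already gives.
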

\begin{proof}
The replacement in Proposition~\ref{prop:strongsharp} removes
\(S_0=G_1\cup G_2\cup G_3\), which is disjoint from the weighted tetrad
\(G_4\).  The Laurent normal form therefore carries the removed terms and
the twelve replacement terms termwise to their checked \(t=2\) values.
Apply its inverse to the replacement identity and retain the other \(36\)
displayed terms of \(D(t)\).

The exported certificate checks all \(4096\) tensor coordinates at \(t=2\):
the twelve new terms are nonzero, pairwise distinct, and do not match any
removed term.  Invertibility gives the same nonmatching statement for every
\(t\ne0\).  The \(36\) retained original terms already exhaust the common
submultiset.  If a new term coincided with one of them, it would only add a
second copy and could not increase its common multiplicity; it cannot match
a removed term.  Hence the common tensor-summand multiset has size exactly
\(36\), proving distance \(12\).  This is a statement about complete tensor
summands, not a claim that twelve input-product directions must change.
\end{proof}

\begin{corollary}[Rank twelve and exact strong eleven on the curve]
\label{cor:rank12curve}
There is a nonempty Zariski-open subset \(V\subset\bbC^\times\) containing
\(2\) such that, for every \(t\in V\),
\[
 \rrho(D(t))\ge\KcertRC,\qquad \srho(D(t))=11,\qquad
 \brho(D(t))\ge8.
\]
Consequently every shorter complex decomposition changes at least thirteen
displayed tensor summands of \(D(t)\), and the least distance to a distinct
\(48\)-term decomposition is exactly twelve.
\end{corollary}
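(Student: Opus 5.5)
The plan is to build $V$ as a finite intersection of nonempty Zariski-open subsets of $\bbC^\times$ containing $2$, one for each of the three radius statements. For the border radius, take the cofinite open set of Corollary~\ref{cor:generic}, on which $\brho(D(t))\ge8$. For the strong radius, take the open neighborhood of $2$ from Corollary~\ref{cor:strong11curve}, giving $\srho(D(t))\ge11$. Combine it with Proposition~\ref{prop:sharpcurve}, which gives $\srho(D(t))\le11$ for every $t\ne0$. So $\srho(D(t))=11$ on that set.

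The main work is the rank radius $\rrho(D(t))\ge12$. Start from the open set of Corollary~\ref{cor:rank11generic} pulled back to the curve, which covers all supports of size at most eleven. Then I would follow the classification of twelve-supports in Section~\ref{app:rank12}, case by case, and check that each certificate there spreads along the curve.

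For mixed supports with $d(I)\ge25$ at $t=2$, including the $32$ residual supports of sum $25$, I would retain the finitely many factor minors witnessing $d(I)\ge25$, together with the nonzero paired-product minors. On their common nonvanishing locus, Lemma~\ref{lem:splitting11} gives rank twelve, since every proper subset is already rank-minimal on the eleven-level open set.

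For the $1816$ core supports and the $20$ mixed supports with $d=24$, I would use the fixed coordinate projections and the stored projection to $\bbC^5$. Applied at every $t$, these give an $80\times80$ Koszul matrix with entries in $R$. The recorded $67$-minor is nonzero at $t=2$, so its nonvanishing defines a cofinite open set on which $\brank\ge12$. This is the same argument as in Corollary~\ref{cor:rank10generic}. The fact that the supporting coordinates must be projected by fixed maps is harmless, because border rank cannot increase under linear maps.

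The four exceptional core supports need no openness at all. The transport by Lemma~\ref{lem:blockweights12} together with the Laurent normal form (Lemma~\ref{lem:coreweight}) already gives rank twelve for every $t\ne0$.

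Once all twelve-supports are covered, rank radius at least $12$ on the intersection follows from Lemma~\ref{lem:subadd}. The consequences are then formal:
\begin{itemize}
\item Theorem~\ref{thm:exactdist} gives that every shorter complex decomposition changes at least $\rrho+1\ge13$ displayed summands.
\item Strong eleven together with Proposition~\ref{prop:sharpcurve} gives the exact equal-length distance twelve, after cancelling a maximum matching as in Corollary~\ref{cor:other48}.
\end{itemize}

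The step I expect to be the main obstacle is making sure the mixed-support census really transfers along the curve. The exhaustive traversal at $t=2$ certifies inequalities $d(I)\ge26$ (or $\ge25$) only pointwise. To spread them I must retain, for \emph{every} mixed twelve-support, explicit nonzero factor minors at $t=2$, not just the gate arguments, since those gates used the level-ten and level-eleven bounds at $t=2$. Because there are only finitely many supports, the witnessing minors can be chosen one support at a time from the $t=2$ ranks, and lower semicontinuity of matrix rank makes each condition open. Even so, I would state this explicitly so as not to rely on the pruning structure of the census.
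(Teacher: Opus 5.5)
Your proposal is correct and follows essentially the same route as the paper's proof. The steps match: proper-face minimality through eleven with paired independence, the fixed-projection $67$-minors for the $1816+20$ Koszul supports, factor-span minors and the rank clause of Lemma~\ref{lem:splitting11} for the remaining mixed supports, block-weight transport for the four tetrad unions, then intersection with the border-eight and strong-eleven opens together with Proposition~\ref{prop:sharpcurve}. The one difference is cosmetic: you take rank-minimality through eleven from the open set of Corollary~\ref{cor:rank11generic}, whereas the paper takes it from Corollary~\ref{cor:strong11curve}, which you intersect with anyway.
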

\begin{proof}
Start with the open subset supplied by Corollary~\ref{cor:strong11curve};
there every displayed partial sum of size at most eleven is rank-minimal.
Retain also a paired-product minor nonzero at \(2\), so the displayed terms
remain independent.

For the \(1836\) twelve-supports carrying a projected-Koszul certificate,
fix the linear projections used at \(2\) as maps on the original factor
spaces.  The specified \(67\)-minors are Laurent polynomials in \(t\), and
their simultaneous nonvanishing is an open condition containing \(2\).
Each product contributes rank at most six to the map, so a nonzero minor
gives border rank, hence rank, at least \(\lceil67/6\rceil=12\).

Every remaining mixed twelve-support has certified factor-span sum at least
\(25\) at \(2\).  Retain finite factor minors witnessing these bounds.
Together with the preceding proper-face minimality and paired independence,
the rank clause of Lemma~\ref{lem:splitting11} gives rank twelve on their
common open neighborhood.  The only remaining supports are the four unions
of three tetrads, already handled for every nonzero parameter by the
block-weight table above.  These classes are exhaustive: the minors cover
\(1816\) core and \(20\) mixed supports, the table covers the other four
core supports, and the factor-span argument covers every other mixed
support.  Their finite intersection therefore gives \(\rrho(D(t))\ge12\).

Intersect once more with the pullback of the border-eight open set of
Corollary~\ref{cor:generic}.  All opens used contain \(2\).  The strong
lower bound from Corollary~\ref{cor:strong11curve} and the uniform
equal-length certificate of Proposition~\ref{prop:sharpcurve} give
\(\srho(D(t))=11\).  The exact frontier identity gives the shortening
barrier, while the same lower and upper certificates give exact distance
twelve.  Since \(\bbC^\times\) is an irreducible curve, the final open is
nonempty and has finite complement.
\end{proof}

The \(12\) in Corollary~\ref{cor:rank12curve} is curve-specific.  The
ambient-component conclusions remain rank radius at least \(11\), strong
radius at least \(9\), and border radius at least \(8\); no uniform
full-family rank-twelve or border-twelve assertion is made.

\section{Comparison with Koszul--Young certificates}\label{app:kmwcomparison}

This section separates three sufficient certificate classes for a displayed
nonzero complex decomposition $T=\sum_{i=1}^r a_i\otimes b_i\otimes c_i$ with
$r\ge2$.
Let $\mathcal K$ denote the decompositions admitting all eleven hypotheses of
the deterministic Koszul--Young criterion of Kothari--Moitra--Wein
\cite[Thm.~2.7]{KMW}, after a permitted mode order, quotient, and flag. Let
$\mathcal S$ denote the present saturated-projection certificate together with
the finite inequalities \eqref{eq:pairedsylvester}, for some fixed paired
space and finite projection stack. Finally, let $\mathcal R$ denote the
abstract hypotheses of Lemma~\ref{lem:saturatedprojection} together with a
scheme-theoretically reduced paired Segre section, without requiring the
Sylvester inequalities. These are classes of sufficient hypotheses, not
rank-one algorithms or assertions of priority.

\begin{theorem}\label{thm:kmwcomparison}
Over $\bbC$,
\[
 \mathcal K\subsetneq\mathcal R,\qquad
 \mathcal S\subsetneq\mathcal R,\qquad
 \mathcal K\not\subseteq\mathcal S,\qquad
 \mathcal S\not\subseteq\mathcal K.
\]
The strict separations apply after arbitrary invertible changes of factor
bases and nonzero reweighting of the displayed summands. They do not assert
that the full Kothari--Moitra--Wein toolkit cannot prove uniqueness by another
route, nor that this paper introduces Koszul-based rank-one extraction.
\end{theorem}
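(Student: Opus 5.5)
The plan has four parts: two inclusions into $\mathcal R$, each of which is shown to be strict, and two explicit separating examples for the incomparability claims.

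\textbf{Inclusion $\mathcal S\subseteq\mathcal R$.} This is immediate from Lemma~\ref{lem:pairedreduced}. A certificate in $\mathcal S$ consists of:
\begin{itemize}
\item the saturated ranks $\rank M_s=rk_s$,
\item the common kernel equal to the $r$-dimensional paired span $S$,
\item the inequalities \eqref{eq:pairedsylvester}.
\end{itemize}
These inequalities give independence of the pairs together with the tangent statement $S\cap(a_i\otimes C+A\otimes c_i)=\bbC q_i$. Hence the paired Segre section is reduced, which is exactly $\mathcal R$.

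\textbf{Inclusion $\mathcal K\subseteq\mathcal R$.} Unpack the hypotheses of \cite[Thm.~2.7]{KMW}. Its Koszul--Young map on a quotient or flag is a projection $P:A\to A'$ of the kind used in $Y_s$. Its rank hypothesis on $M$ is the saturation $\rank M=rk$. Its kernel hypothesis on $M',N,N'$ says the pulled-back pair constraints cut out exactly $\spn\{a_i\otimes c_i\}$. Its rank-one extraction step uses a generic pencil or simultaneous-diagonalization hypothesis on $P,P'$ that forces the paired points to be $r$ distinct, transversally cut points. I would verify transversality by showing the KMW extraction hypothesis yields $S\cap T_{q_i}\mathrm{Seg}=\bbC q_i$, which is the same Nakayama argument as in Lemma~\ref{lem:pairedreduced}. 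The permitted mode reorderings, quotients and flags only change which space plays the role of $A\otimes C$, so they are absorbed into the choice of paired space in $\mathcal R$. Both inclusions are stable under invertible base changes and nonzero reweighting; for reweighting this is Lemma~\ref{lem:weightstable}.

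\textbf{Separating examples.} Each needs an explicit tensor; all four claims are finite exact checks.
\begin{itemize}
\item For $\mathcal S\not\subseteq\mathcal K$, take the sixteen $(2,2,2)$ exceptional ten-subsets, e.g.\ the fourth subset displayed in Section~\ref{app:saturatedprojection}. It lies in $\mathcal S$ by the stacked certificate there. To exclude $\mathcal K$, I would show that some KMW rank condition fails for every admissible mode order, quotient and flag. The natural candidate is the requirement that a single Koszul map already has kernel equal to $S$, or the $k$-rank/genericity condition on $P'$. The factor spans are $(7,7,7)$ with the collapsed $X$-structure; this should force every single projected kernel to exceed $r$, as it does here (dimension $12>10$), and the failure should persist under all quotients because the KMW conditions are dimension counts bounded by the factor-span dimensions $7$.
\item For $\mathcal K\not\subseteq\mathcal S$, take a small generic-type tensor in $\mathcal K$ whose paired factors violate \eqref{eq:pairedsylvester} in every pairing. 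One example is three collinear $a_i$ with three collinear $c_i$ on the same index set, giving left side $2+2-3=1<2$. This can still be separated by a third mode with independent $b_i$, which KMW extraction handles through the $B$-slices. Since $\mathcal S$ requires Sylvester in the chosen paired space, I would check all three pairings and choose the example so that each pairing has such a violation.
\item Strictness of both inclusions then follows: the $\mathcal S\setminus\mathcal K$ example lies in $\mathcal R\setminus\mathcal K$, and the $\mathcal K\setminus\mathcal S$ example lies in $\mathcal R\setminus\mathcal S$.
\end{itemize}

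\textbf{Main obstacle.} The hardest step is the negative claim $\mathcal S\not\subseteq\mathcal K$: it requires quantifying over all eleven KMW hypotheses, all mode orders, quotients and flags, and all base changes and reweightings. I would try first to isolate one hypothesis that is invariant under these operations and bounded by factor-span dimensions, so that a single dimension count on the $(7,7,7)$ spans refutes it uniformly.
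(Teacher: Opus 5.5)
Your inclusion $\mathcal S\subseteq\mathcal R$ via Lemma~\ref{lem:pairedreduced}, your choice of the saturated ten-supports as $\mathcal S\setminus\mathcal K$ witnesses, and your strictness bookkeeping all match the paper. However, both incomparability claims remain unproved, and the routes you sketch would not close them.

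For $\mathcal S\not\subseteq\mathcal K$, the twelve-dimensional single kernels come from one chosen pair of projections, so they say nothing about other quotients, flags or exterior degrees. The span counts you have in mind do not suffice either. At $(q,p)=(7,3)$, where $k=\binom{6}{3}=20$, saturation needs only $200\le\binom{7}{3}\cdot7=245$, and each flag capacity in \eqref{eq:kmwcapacity} needs only $190\le31\cdot7=217$. The missing ingredient is that saturation forbids a $C$-factor dependence on at most $p+1$ terms and a $B$-factor dependence on at most $q-p$ terms (wedge a decomposable $(p+1)$-vector through the projected first factors, resp.\ use the exterior-dual map). Every exceptional ten-support has a four-term dependence in every mode, so this leaves only $(q,p)\in\{(3,1),(4,1),(4,2),(5,2)\}$. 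Only then does \eqref{eq:kmwcapacity} give the contradictions $10>7$, $20>14$, $50>49$. These are statements about dependences and intrinsic spans, so they survive base changes and reweighting.

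For $\mathcal K\not\subseteq\mathcal S$ you construct nothing, and your one mechanism is incompatible with $\mathcal K$. Take three $a_i$ and three $c_i$, each family spanning a plane, on a common index set $I$.
If $A$ (or symmetrically $C$) is the distinguished KMW mode, the three independent pairs $a_i\otimes c_i$ span a projective plane in $\mathbb P(\spn\{a_i\}_I\otimes\spn\{c_i\}_I)\cong\mathbb P^3$. That plane meets the Segre quadric in a conic, contradicting the finite reduced paired section that KMW's extraction provides (and that your own $\mathcal K\subseteq\mathcal R$ asserts).
If $B$ is distinguished, the dependence rule forces $(q,p)=(3,1)$, where \eqref{eq:kmwcapacity} makes both matrix-mode families independent.
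Demanding Sylvester failure in every pairing fights KMW's own constraints on its matrix modes. The paper's witness fails $\mathcal S$ differently. Saturation to $r\binom{d-1}{e}$ forces $r<b_0+c_0$ by \eqref{eq:kmwglobalcapacity}, so for the eighteen-term $5\times13\times13$ certificate no saturated map exists when a thirteen-dimensional mode is distinguished. Sylvester is needed only when the five-dimensional mode is distinguished, where $5+13-18=0<2$.

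Your inclusion $\mathcal K\subseteq\mathcal R$ also has a hole. The quotient is not absorbed into the choice of paired space. $\mathcal R$ requires the hypotheses of Lemma~\ref{lem:saturatedprojection}, with common kernel exactly $S$ in the original $A\otimes C$. A KMW quotient $\Pi$ with $\ker\Pi\ne0$, by contrast, has all of $\ker\Pi\otimes C$ inside its pair kernel. The paper proceeds in two steps. First it derives the projected kernel $S_Q=\spn\{\Pi a_i\otimes c_i\}$ from the primed flag conditions by a swapped-column argument. Then it lifts this with the perturbed maps $\Pi+tv\ell_j$, where $v=e_2$ has zero first coordinate so that $S_Q\cap(v\otimes C)=0$, using at most $1+\dim\ker\Pi$ projections. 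Reducedness likewise cannot come ``as in Lemma~\ref{lem:pairedreduced}'': that lemma's tangent computation rests on \eqref{eq:pairedsylvester}, which KMW does not supply (the eighteen-term witness violates it). It must come from $P'$, which is a polarized-minor matrix, not a pencil or diagonalization hypothesis. Full column rank of $P'$ makes the restricted $2\times2$ minors span all cross monomials $z_iz_j$. The section is therefore cut out by the coordinate-axis ideal, reduced with exactly $r$ points, and it stays so after the pullback.
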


For the reweighting assertion, a saturated Koszul map factors as $UV$ with
$U$ injective and $V$ surjective. Inserting the invertible diagonal blocks
of nonzero term weights preserves its rank and image; the $N,N',P,P'$ columns
are only rescaled, while the factor dependences and intrinsic-span
obstructions are unchanged. Thus the positive certificates and the all-choice
obstructions used below persist under these reweightings.

\paragraph{Necessary capacity bounds.}
For the Kothari--Moitra--Wein setup, let $q$ be the quotient dimension,
$p$ the exterior degree, and $k=\binom{q-1}{p}$. The two flag matrices
$N,N'$ have $r(k-1)$ columns. If $J$ is a subcollection of $s$ terms, full
column rank requires
\begin{equation}\label{eq:kmwcapacity}
\begin{aligned}
s(k-1)&\le\left[\binom qp-(p+1)\right]\dim\spn\{b_i:i\in J\},\\
s(k-1)&\le\left[\binom q{p+1}-(q-p)\right]\dim\spn\{c_i:i\in J\}.
\end{aligned}
\end{equation}
Indeed, restrict the columns to $J$. They lie in the indicated exterior
coordinate quotient tensored with the intrinsic factor span, and every
subcollection of columns of a full-column-rank matrix is independent.

There is also a quotient-independent obstruction. If a Koszul map is
saturated to $r\binom{d-1}{e}$ in a $d$-dimensional first-mode target, and
the two other intrinsic spans have dimensions $b_0,c_0$, its row and column
dimensions give
\[
 r\binom{d-1}{e}\le\binom de b_0,\qquad
 r\binom{d-1}{e}\le\binom d{e+1}c_0.
\]
Equivalently $r(d-e)\le db_0$ and $r(e+1)\le dc_0$; adding proves
\begin{equation}\label{eq:kmwglobalcapacity}
 r(d+1)\le d(b_0+c_0),\qquad r<b_0+c_0.
\end{equation}
Both bounds survive coordinate changes, linear projections, and finite
ambient enlargements, because they use intrinsic spans. A further elementary
obstruction is that saturation forbids a $C$-factor dependence on at most
$p+1$ terms and a $B$-factor dependence on at most $q-p$ terms: wedge a
decomposable $(p+1)$-vector through the projected first factors in the first
case and use the exterior-dual map in the second. The resulting nontrivial
relation between summand image spaces contradicts saturation.

\paragraph{The certified manuscript exceptions.}
The exact obstruction records for every one of the $32$ exceptional
ten-supports give a four-term dependence and factor-span dimension at most
seven in every mode. The preceding dependence bound leaves only
$(q,p)=(3,1),(4,1),(4,2),(5,2)$. Formula~\eqref{eq:kmwcapacity} then gives
the following all-choice contradiction:
\[
\begin{array}{c|c|c}
 (q,p)&r(k-1)&\text{upper bound for one flag capacity}\\\hline
 (3,1)&10&7\\
 (4,1),(4,2)&20&14\\
 (5,2)&50&49
\end{array}
\]
Thus none of these supports can satisfy the full Kothari--Moitra--Wein
criterion, in any mode order. Write the core as
\[
 G_1=\{4,17,34,40\},\quad G_2=\{9,25,28,35\},\quad
 G_3=\{12,19,22,32\},\quad G_4=\{15,38,44,48\}.
\]
The same records show that every eleven-subset of this core has an
$A$-dependence of size at most six and four-term dependencies in both $B$ and
$C$. Thus the three mode caps are $(5,3,3)$. Put $s=p+1$ and $t=q-p$.
Across all mode orders, the possible unordered pairs and the weaker global
flag capacity are
\[
\begin{array}{c|rrrrrrr}
 \{s,t\}&\{2,2\}&\{2,3\}&\{2,4\}&\{2,5\}&\{3,3\}&\{3,4\}&\{3,5\}\\\hline
 11(k-1)&11&22&33&44&55&99&154\\
 \text{capacity}&8&16&24&32&56&96&144.
\end{array}
\]
Only $s=t=3$, namely $(q,p)=(5,2)$, survives. In mode $B$ each of
$G_1\cup G_3,G_2\cup G_4$ spans at most four dimensions, and in mode $C$
the corresponding low-span blocks are $G_1\cup G_4,G_2\cup G_3$. Whichever
matrix mode is used, an eleven-subset meets one such eight-block in at least
six terms; \eqref{eq:kmwcapacity} would require $30\le28$.
Hence all $\binom{16}{11}=4368$ core eleven-supports are excluded, including
the $656$ exceptional supports of Section~\ref{app:strong11}. These are
dependence-and-dimension proofs, not failures of a selected numerical flag.

Conversely, the positive certificates in this paper establish the
Sylvester-equipped class for the sixteen saturated ten-cases and the $624$
saturated eleven-cases: their Koszul maps are saturated, their stacked
original paired kernels have the displayed dimension, and
\eqref{eq:pairedsylvester} excludes every other complex rank-one direction.
The completion cases are not reclassified as saturated cases. Thus these
$640$ supports give $\mathcal S\setminus\mathcal K$ witnesses. Notice that
the inequality \eqref{eq:pairedsylvester} is also condition $(H_2)$ of
Domanov--De Lathauwer \cite{DDL}; this identifies a framework overlap, not
a subsumption of the stacked-kernel construction. The full paired quadratic
rank-one extraction test also succeeds on these supports, so the separation
is not a claim that Sylvester exclusion is generally stronger than polarized
minor extraction.

\paragraph{A reverse certificate.}
An exact integer certificate, documented in the mathematical supplement,
defines an eighteen-term decomposition in format $5\times13\times13$.
For $(q,p)=(5,2)$, explicit nonzero minors at the two primes $1000003$ and
$65521$, together with direct nonvanishing and independence checks, certify
all eleven Kothari--Moitra--Wein conditions; the six matrix ranks are
\[
\begin{array}{c|c|c|c}
\text{matrices}&\text{dimensions}&\text{certified rank}&\text{required rank}\\\hline
M,M'&130\mathbin{\times}130&108&18\binom42=108\\
N,N'&91\mathbin{\times}90&90&90\\
P,P'&234\mathbin{\times}153&153&\binom{18}{2}=153.
\end{array}
\]
These positive integer-minor certificates lift to $\bbQ$ and $\bbC$.
If the five-dimensional factor is distinguished, either paired factor-span
sum is at most $5+13$, so the full eighteen-support has Sylvester left side
$5+13-18=0<2$. If either thirteen-dimensional factor is distinguished,
\eqref{eq:kmwglobalcapacity} would require $18<5+13$, which is false.
This exhausts mode orders, projection stacks, and finite enlargements, so the
displayed decomposition lies in $\mathcal K\setminus\mathcal S$.

\paragraph{From KMW to a reduced paired kernel.}
We give the finite lifting argument establishing the two containments in
Theorem~\ref{thm:kmwcomparison}. Let $\Pi:A\to Q$ be a KMW quotient, put
$u_i=\Pi a_i$, $h=q-p$, and let $f_i$ be the first $h$ coordinates of $u_i$.
For the $A\otimes C$ paired kernel, use the primed KMW map:
exterior Hodge duality identifies $M'$ with a signed row/column permutation
of $M^{\mathsf T}$, but its flag section is the one whose paired factors are
the $C$ factors. On this primed flag side, KMW's saturated image and flag
intersection conditions imply that the common projected paired kernel is
\[ S_Q=\spn\{u_i\otimes c_i:1\le i\le r\}. \]
For completeness, the designated primed column indexed by $[h]$ puts the
first $h$ rows of any kernel element in
$S_f=\spn\{f_i\otimes c_i\}$. Subtract that unique displayed combination.
For $j>h$, the swapped column
$U=([h]\setminus\{2\})\cup\{j\}$ and the flag identification give
$e_2\otimes Z_j\in S_f$, where $Z_j$ is the remaining $j$th row. This is a
rank-one point with zero first coordinate. KMW's polarized-minor matrix $P'$ has full column
rank, so its restricted minors span every cross monomial $z_i z_j$ in the
displayed coefficient coordinates. Their ideal is the coordinate-axis ideal,
and its projective scheme is exactly the $r$ reduced displayed points, all
with nonzero first coordinate; hence every $Z_j$ vanishes. Every other
restricted minor vanishes on the coordinate axes, hence has no square term
and lies in the cross-monomial ideal. Thus the whole restricted ideal equals
that ideal, proving equality and reducedness without a point count. This is the linear-section
extraction used in KMW and related to the framework of
Johnston--Lovitz--Vijayaraghavan \cite{JLV}; it is stated here only for the
certificate implication above.

It remains to lift from $Q$ if $\Pi$ has a kernel. Write
$A=A_0\oplus D$, $D=\ker\Pi$, and choose $v=e_2\in Q$, which is nonzero and
has first coordinate zero. The displayed projected factors have nonzero first
coordinate, so $S_Q\cap(v\otimes C)=0$. For a basis $e_j$ of $D$, let
$\ell_j$ be the coordinate functional vanishing on $A_0$, and replace the
quotient by $\Pi_j(t)=\Pi+t v\ell_j$. The finitely many KMW minors remain
nonzero for some nonzero $t$, so each $\Pi_j(t)$ has the preceding projected
kernel description. An element in the stack kernel, after subtracting its
displayed paired part, is $R=\sum_j e_j\otimes w_j$. The $j$th perturbed
kernel identity gives
\[
 tv\otimes w_j=\sum_i\mu_i
   (u_i+t\ell_j(a_i)v)\otimes c_i
\]
for some $\mu_i$. Rearrangement puts
\[
 \sum_i\mu_i u_i\otimes c_i\in S_Q\cap(v\otimes C)=0.
\]
Independence of the displayed pairs gives $\mu_i=0$ for every $i$, and then
$w_j=0$ because $t$ and $v$ are nonzero. The stack kernel is therefore the
original displayed paired span. The prefix minors remain cross-monomial
generators after pullback, so its Segre section is reduced as well. This
uses at most $1+\dim\ker\Pi$ projections. Only after this lifting do we
apply Lemma~\ref{lem:projectionambient} to certificates in larger spaces.
Lemma~\ref{lem:saturatedprojection}
then gives $\mathcal K\subseteq\mathcal R$; the ten-support witnesses make
the inclusion strict. Lemma~\ref{lem:pairedreduced} gives
$\mathcal S\subseteq\mathcal R$, and the eighteen-term witness makes that
inclusion strict.

The comparison credits the existing Koszul--Young, linear-section, and
subset-rank tools. It neither establishes priority for finite lifting,
multi-projection stacking, reduced-incidence transfer, or zero-shear purity,
nor claims an ambient generic strong-eleven theorem or a new lower bound for
matrix multiplication.

\section{Consequences}\label{sec:consequences}

\begin{corollary}\label{cor:mainbarrier}
Every decomposition of $\M{4,4,4}$ of length less than $48$ over $\bbC$ differs from $D(2)$ in at least $\Knext$
% NUMERAL: 35 = 48 - \Knext
summands; a length-$47$ decomposition, if one exists, reuses at most $35$ of $D(2)$'s $48$ summands.
\end{corollary}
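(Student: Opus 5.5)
The plan is a direct reduction to Theorem~\ref{thm:rank12} through the exact distance statement of Theorem~\ref{thm:exactdist}. We are given $\rrho(D(2))\ge\KcertR$ over $\bbC$. Theorem~\ref{thm:exactdist} then gives $d_<(D(2))=\rrho(D(2))+1\ge\Knext$ when $D(2)$ is not rank-optimal, and $d_<(D(2))=\infty$ otherwise. In both cases every decomposition $D'$ with $|D'|<48$ satisfies $d(D(2),D')\ge\Knext$. This is the first claim, since $d$ counts summands of $D(2)$ not reused as complete tensors.

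To keep the argument self-contained, I would also spell out the cancellation directly. Suppose $D'$ has length $|D'|\le47$ and $d(D(2),D')=m\le\KcertR$. Take a maximum matching of common summand multisets and let $S$ be the unmatched $m$-subset of $D(2)$. Cancelling the $48-m$ shared summands from both sides gives
\[
T_S=\sum_{\text{unmatched }j} d'_j,
\]
a decomposition of length $|D'|-(48-m)\le m-1$. Hence $\rank(T_S)\le|S|-1$. This contradicts Theorem~\ref{thm:rank12}, which applies to every subset of at most twelve summands. Equivalently, Theorem~\ref{thm:frontier} with $\Delta_{D(2)}(\KcertR)=0$ gives $|D'|\ge48$ whenever $d\le\KcertR$.

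For the numerical restatement, let $D'$ have length $47$. Then $d(D(2),D')\ge\Knext$ means at most $48-\Knext=35$ summands of $D(2)$ are matched, that is, reused.

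The only point that needs care is convention rather than mathematics. The matching must count multiplicities correctly, and a reused summand must mean an equal complete tensor, which is exactly how $d$ is defined. Gauge rescalings of the factors do not change the tensor, so they do not affect the count. The rank-twelve theorem is over $\bbC$, so the conclusion also holds for complex competitors. No real obstacle remains beyond citing Theorem~\ref{thm:rank12}.
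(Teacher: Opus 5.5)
Your proposal is correct and is essentially the paper's own argument: the paper applies Theorem~\ref{thm:frontier} with $\Delta_{D(2)}(12)=0$, which comes from Theorem~\ref{thm:rank12}. Your explicit cancellation, and your route through Theorem~\ref{thm:exactdist}, are just the converse half of that frontier identity written out, and the count $48-13=35$ for a length-$47$ competitor follows exactly as you state.
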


\begin{proof}
Theorem~\ref{thm:frontier} with the rank radius of Theorem~\ref{thm:rank12}.
\end{proof}

\begin{corollary}\label{cor:other48s}
Any $48$-term decomposition of $\M{4,4,4}$ over $\bbC$ that is not equal to $D(2)$ up to permutation and gauge
differs from it in at least $\KcertSNext$ summands, and this bound is attained
by Proposition~\ref{prop:strongsharp}.
\end{corollary}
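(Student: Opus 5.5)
The plan is to combine the strong-eleven lower bound of Theorem~\ref{thm:strong11} with a cancellation argument, and to take the matching upper bound from the explicit replacement of Proposition~\ref{prop:strongsharp}.

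For the lower bound, let $D'$ be a $48$-term decomposition of $\M{4,4,4}$ over $\bbC$ that is not equal to $D(2)$ up to permutation and gauge. Suppose, for contradiction, that $d(D(2),D')\le11$.

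First fix a maximum matching between the summand multisets of $D(2)$ and $D'$. Let $S$ be the set of unmatched indices of $D(2)$; by assumption $|S|=k\le11$. The matched summands are identical complete tensors. Cancelling them from the identity $\sum D(2)=\M{4,4,4}=\sum D'$ gives
\[
T_S=\sum_{j\in S'}d'_j,
\]
where $S'$ is the set of unmatched summands of $D'$. Since both decompositions have length $48$, we have $|S'|=k$.

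By Theorem~\ref{thm:strong11}, $\srho(D(2))\ge11$. Hence the displayed decomposition of $T_S$ is its only decomposition of length at most $k$, up to permutation and gauge. The unmatched part of $D'$ must therefore coincide termwise with $(d_i)_{i\in S}$ after rescaling factors.

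Gauge rescaling $(a,b,c)\mapsto(\lambda a,\mu b,(\lambda\mu)^{-1}c)$ leaves each rank-one tensor $d_i$ unchanged. So every unmatched summand of $D'$ equals, as a complete tensor, some $d_i$ with $i\in S$. This lets us enlarge the matching, contradicting its maximality, unless $k=0$. If $k=0$, then $D'$ agrees with $D(2)$ up to permutation, which contradicts the choice of $D'$. In either case $d(D(2),D')\ge12=\KcertSNext$.

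One point needs care: uniqueness up to gauge must be read at the level of tensors $d_i$, not factor triples. With that reading, distinctness up to permutation and gauge is exactly $d\ne0$. I will also note that the case $k=0$ cannot be reached with $D'$ distinct.

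For attainment, Proposition~\ref{prop:strongsharp} gives an explicit distinct $48$-term decomposition that reuses exactly $36$ summands, so its distance from $D(2)$ is exactly $12$. This requires no further argument.

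The main obstacle is the bookkeeping in the cancellation step. One has to check that the competitor produced on $S$ has length at most $|S|$, which holds with equality here, so that strong rigidity applies. One also has to check that a maximum matching cannot leave a residual pair of identical tensors, which is the multiplicity argument already used in Proposition~\ref{prop:strongsharp}. Neither step is hard.
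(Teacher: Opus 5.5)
Your proof is correct and follows the paper's route. You cancel a maximum matching of common summands, note that the two residual decompositions of $T_S$ have the same length $k\le11$, apply Theorem~\ref{thm:strong11} to force them to agree up to permutation and gauge (which contradicts maximality of the matching unless $D'$ equals $D(2)$), and cite Proposition~\ref{prop:strongsharp} for attainment. Your reading that gauge fixes each complete tensor, so that ``distinct up to permutation and gauge'' means exactly $d(D(2),D')>0$, is the right one and is implicit in the paper's one-line argument.
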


\begin{proof}
Cancel the common summand multiset. If fewer than twelve summands changed,
the two remaining decompositions would contradict Theorem~\ref{thm:strong11}.
Proposition~\ref{prop:strongsharp} attains the bound.
\end{proof}

\begin{corollary}\label{cor:curve}
Any decomposition of $\M{4,4,4}$ over $\bbC$ obtained from $D(2)$ by changing $k$ of its $48$ complete summands has
% NUMERAL: 72-2k = 48 - 2(k - \KcertR)
length at least $48-2\max(0,k-\KcertR)$, which for $k\ge\KcertR$ reads $72-2k$ and for $k\le\KcertR$ says only
% NUMERAL: 16 and 20 solve 72-2k = 40 and 32
that the length is at least $48$. Reaching length $40$ therefore requires changing at least $16$ summands, and
reaching length $32$ at least $20$.
\end{corollary}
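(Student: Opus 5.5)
The plan is to combine Corollary~\ref{cor:increment} with the pointwise rank radius $\rrho(D(2))\ge\KcertR$ of Theorem~\ref{thm:rank12}. Let $D'$ be a decomposition of $\M{4,4,4}$ obtained from $D(2)$ by changing $k$ complete summands, so $d(D(2),D')\le k$. Theorem~\ref{thm:frontier} gives $|D'|\ge 48-\Delta_{D(2)}(k)$. It then suffices to show $\Delta_{D(2)}(k)\le 2\max(0,k-\KcertR)$.

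For $k\le\KcertR$, I would use Theorem~\ref{thm:rank12} directly: every subset of size at most twelve has rank equal to its cardinality, so $\Delta_{D(2)}(k)=0$. For $k>\KcertR$, I would iterate the increment bound of Theorem~\ref{thm:increment} starting from $\Delta_{D(2)}(12)=0$. This yields $\Delta_{D(2)}(k)\le 2(k-12)$, which is exactly the content of Corollary~\ref{cor:increment} with $\rrho(D(2))\ge12$. One detail needs care: the corollary is phrased with $\rrho$ itself, but only the lower bound is used. Indeed, $\max(0,k-\rrho)\le\max(0,k-12)$ whenever $\rrho\ge12$, so the inequality survives. The length bound is therefore $48-2\max(0,k-12)$, which equals $72-2k$ for $k\ge12$ and $48$ for $k\le12$.

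The numerical consequences are then pure arithmetic. A decomposition of length $40$ requires $72-2k\le 40$, so $k\ge16$. Length $32$ requires $72-2k\le32$, so $k\ge20$. In both cases the regime $k\le12$ is excluded, because there the bound forces length at least $48$.

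I expect no real obstacle, since all the tensor-theoretic content sits in Theorem~\ref{thm:rank12} and the frontier identity. The one point I would state explicitly is that the statement concerns $D(2)$ at the fixed parameter, over $\bbC$, where Theorem~\ref{thm:rank12} applies, and not the generic curve or ambient versions.
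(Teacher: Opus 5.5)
Your proposal is correct and is the paper's own proof: Corollary~\ref{cor:increment} (the frontier identity together with the iterated increment bound) applied with $\rrho(D(2))\ge\KcertR$ from Theorem~\ref{thm:rank12}, followed by the arithmetic for lengths $40$ and $32$. Your remark that only the lower bound on $\rrho$ is needed, because $\max(0,k-\rrho)$ decreases as $\rrho$ grows, makes explicit a step the paper leaves implicit.
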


\begin{proof}
Corollary~\ref{cor:increment} with $\rrho(D(2))\ge\KcertR$ from Theorem~\ref{thm:rank12}.
\end{proof}

If a shorter decomposition exists, Theorem~\ref{thm:exactdist} gives
$d_<(D(2))=\rrho(D(2))+1$; otherwise $d_<(D(2))=\infty$. In either case no single rewrite touching at most
$\KcertR$ summands shortens $D(2)$; and by Theorem~\ref{thm:geodesic} every path in the rewrite graph to a
shorter decomposition has endpoints differing in at least $\Knext$ complete summands,
including paths through longer decompositions, which is
what adaptive flip search \cite{AIH} traverses. By Corollary~\ref{cor:other48s} even moving to a different
scheme of the same length requires changing at least $\KcertSNext$, and at least four at every other
parameter by Corollary~\ref{cor:other48}. The distance $d(D,D')$ between two
fixed displayed decompositions can change under an isotropy applied to only
one argument. By contrast, $d_<(D)$ minimizes over all shorter
competitors and is an isotropy invariant: for an isotropy $g$,
$d_<(gD)=d_<(D)$, since $D'\mapsto gD'$ is a length-preserving
bijection of the competitor set and $d(gD,gD')=d(D,D')$.
Thus the shortening minimum $d_<(D)$, and hence its lower bound, depends only
on the isotropy orbit of the base decomposition. Simultaneously applying an
isotropy to $D(2)$ and to its equal-length competitors likewise transports
Corollary~\ref{cor:other48s} to every representative; this does not assert
invariance of the distance between two fixed representatives when only one is
transformed.

\section{Data, code, and reproducibility}\label{sec:repro-contract}

The accompanying mathematical supplement records the source layout, exact
factor data, certificate coverage, environments, and replay commands.  The
full profile checks the pointwise ten/eleven/twelve chain, the sharp
replacement, uniform core, border-eight certificates, and the KMW comparison.
Each certificate is accepted only with its prerequisite and coverage checks;
unresolved searches are not treated as proofs.  Historical timing logs,
alternate transfers, and application censuses are retained in the supplement
because they are useful audit material but not premises of the main results.

The frozen version-1 evidence bundle is publicly available at
\url{https://doi.org/10.5281/zenodo.22684265}\cite{AgarwalZenodo2026}.
It is the file \path{rigidity-jsc-revised-evidence.zip} with SHA-256
\texttt{e3663c16a2fd2b17bd9df9eab295e63330b75c90ea8fab9bca4584bbac915e88};
it contains the editable sources, both PDFs, exact input data, and a fail-closed
replay script. This article's later citation-only PDF/source revisions are not
claimed to be members of that immutable deposited bundle.
Its README maps the certificate paths in the supplement into the
archived proof dependency. The new four-chart transport and complete-summand
distance checks are under \path{evidence/curve_twelve/} and
\path{evidence/metrics/}, respectively. Each invocation starts a fresh result
directory; the ledger distinguishes freshly replayed certificates from
retained application logs. The archive's checksum manifest identifies the
files used, but is not itself a mathematical certificate.

\section*{Funding and competing interests}
The author is an independent researcher and received no external funding for
this work. The author declares no competing interests.

\section*{Declaration of generative AI and AI-assisted technologies in the manuscript preparation process}
During the preparation of this work, the author used Codex for verification
and tooling. The author reviewed and edited the output as needed and takes
full responsibility for the content of the published article.
For a fuller research-method account, see Appendix~B of the mathematical
supplement.

\end{document}